\newif\ifcamera
\newtheorem{theorem}{Theorem}
\newtheorem{proposition}[theorem]{Proposition}
\newtheorem{lemma}[theorem]{Lemma}
\newcommand{\ld}{\left}
\newcommand{\rd}{\right}
\newcommand{\etal}{\textit{et al.}}
\newcommand{\PropEstimator}{\textsc{PropEstimator}}
\newcommand{\NAPropEstimator}{\textsc{NoAdvicePropEstimator}}
\newcommand{\unisizeestimator}{\textsc{SetSizeEstimator}}
\newcommand{\propbucketsampler}{\textsc{PropBucketSampler}}
\newcommand{\unifbucketsampler}{\textsc{UnifBucketSampler}}
\newcommand{\bernoulliestimation}{\textsc{BernoulliEstimator}}
\newcommand{\harmonicestimator}{\textsc{HarmonicEstimator}}
\newcommand{\HybridEstimator}{\textsc{HybridEstimator}}
\newcommand{\NAHybridEstimator}{\textsc{NoAdviceHybridEstimator}}
\newcommand{\NACouponCollector}{\textsc{NoAdviceCouponCollector}}
\DeclareMathOperator*{\argmax}{arg\,max}
\newcommand{\punif}{P_{unif}}
\newcommand{\pprop}{P_{prop}}
\newcommand{\Bb}{\mathcal{B}}
\newcommand{\Ee}{\mathcal{E}}
\newcommand{\Rr}{\mathcal{R}}
\newcommand{\Ll}{\mathcal{L}}
\newcommand{\Dd}{\mathcal{D}}
\newcommand{\onepmepsilon}{(1\pm\varepsilon)}
\let\oldnl\nl 
\newcommand{\nonl}{\renewcommand{\nl}{\let\nl\oldnl}}
\crefname{@theorem}{theorem}{theorems}
\date{}
\title{Better Sum Estimation via Weighted Sampling}
\author{
	Lorenzo Beretta \thanks{
	\begin{wrapfigure}{l}{0.055\textwidth}
\vspace{-6 mm}
\includegraphics[height=0.055\textwidth, width=0.055\textwidth]{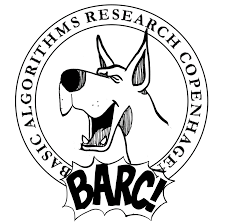}
\includegraphics[height=0.03\textwidth, width=0.055\textwidth]{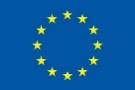}
\end{wrapfigure}
	Lorenzo Beretta and Jakub Tětek belong to Basic Algorithms Research Copenhagen (BARC), University of Copenhagen. BARC is supported by the VILLUM Foundation grant 16582.
	Jakub Tětek received funding from the Bakala Foundation.
Lorenzo Beretta receives funding from the European Union's Horizon 2020 research and innovation program under the Marie Skłodowska-Curie grant agreement No.~801199.}
\\
	\texttt{beretta@di.ku.dk}\\
	BARC, Univ. of Copenhagen
	\and
	Jakub Tětek \footnotemark[1] \\
	\texttt{j.tetek@gmail.com}\\
	BARC, Univ. of Copenhagen
}
\begin{document}

\maketitle
\ifcamera
\fancyfoot[R]{\scriptsize{Copyright \textcopyright\ 2022 by SIAM\\
Unauthorized reproduction of this article is prohibited}}
\fi

\begin{abstract} 
Given a large set $U$ where each item $a\in U$ has weight $w(a)$, we want to estimate the total weight $W=\sum_{a\in U} w(a)$ to within factor of $1\pm\varepsilon$ with some constant probability $>1/2$. Since $n=|U|$ is large, we want to do this without looking at the entire set $U$. 
In the traditional setting in which we are allowed to sample elements from $U$ uniformly, sampling $\Omega(n)$ items is necessary to provide any non-trivial guarantee on the estimate. 
Therefore, we investigate this problem in different settings: in the \emph{proportional} setting we can sample items with probabilities proportional to their weights, and in the \emph{hybrid} setting we can sample both proportionally and uniformly. These settings have applications, for example, in sublinear-time algorithms and distribution testing.

Sum estimation in the proportional and hybrid setting has been considered before by Motwani, Panigrahy, and  Xu [ICALP, 2007]. In their paper, they give both upper and lower bounds in terms of $n$. Their bounds are near-matching in terms of $n$, but not in terms of $\varepsilon$. In this paper, we improve both their upper and lower bounds. 
Our bounds are matching up to constant factors in both settings, in terms of both $n$ and $\varepsilon$. No lower bounds with dependency on $\varepsilon$ were known previously. In the proportional setting, we improve their $\tilde{O}(\sqrt{n}/\varepsilon^{7/2})$ algorithm to $O(\sqrt{n}/\varepsilon)$. In the hybrid setting, we improve $\tilde{O}(\sqrt[3]{n}/ \varepsilon^{9/2})$ to $O(\sqrt[3]{n}/\varepsilon^{4/3})$. Our algorithms are also significantly simpler and do not have large constant factors.

We then investigate the previously unexplored scenario in which $n$ is not known to the algorithm. In this case, we obtain a $O(\sqrt{n}/\varepsilon + \log n / \varepsilon^2)$ algorithm for the proportional setting, and a $O(\sqrt{n}/\varepsilon)$ algorithm for the hybrid setting. This means that in the proportional setting, we may remove the need for advice without greatly increasing the complexity of the problem, while there is a major difference in the hybrid setting. We prove that this difference in the hybrid setting is necessary, by showing a matching lower bound.

Our algorithms have applications in the area of sublinear-time graph algorithms. Consider a large graph $G=(V, E)$ and the task of $\onepmepsilon$-approximating $|E|$.
We consider the (standard) settings where we can sample uniformly from $E$ or from both $E$ and $V$. This relates to sum estimation as follows: we set $U = V$ and the weights to be equal to the degrees. Uniform sampling then corresponds to sampling vertices uniformly. Proportional sampling can be simulated by taking a random edge and picking one of its endpoints at random.
If we can only sample uniformly from $E$, then our results immediately give a $O(\sqrt{|V|} / \varepsilon)$ algorithm. 
When we may sample both from $E$ and $V$, our results imply an algorithm with complexity $O(\sqrt[3]{|V|}/\varepsilon^{4/3})$. 
Surprisingly, one of our subroutines provides an $\onepmepsilon$-approximation of $|E|$ using  $\tilde{O}(d/\varepsilon^2)$ expected samples, where $d$ is the average degree, under the mild assumption that at least a constant fraction of vertices are non-isolated. This subroutine works in the setting where we can sample uniformly from both $V$ and $E$. We find this remarkable since it is $O(1/\varepsilon^2)$ for sparse graphs.
\end{abstract}

\section{Introduction.} \label{sec:intro}
Suppose we have a large set $U$, a weight function $w:U \rightarrow [0,\infty)$ and we want to compute a $\onepmepsilon$-approximation of the sum of all weights $W = \sum_{a\in U} w(a)$. Since $n=|U|$ is very large, we want to estimate $W$ by sampling as few elements as possible. 
In the traditional setting in which we are allowed to sample elements from $U$ uniformly, sampling $o(n)$ items cannot provide any non-trivial guarantee on the approximation as we may miss an element with a very large weight.
This led Motwani, Panigrahy, and  Xu \cite{Motwani2007} to study this problem when we are allowed to sample elements proportionally to their weights (i.e., sample $a$ with probability $w(a)/W$). In particular, they studied two settings: the \emph{proportional} setting, where we can sample items proportionally to their weights, and the \emph{hybrid} setting where both proportional and uniform sampling is possible. In this paper, we revisit these two settings and get both improved lower and upper bounds. We also extend the results to more general settings and show how our techniques imply new results for counting edges in sublinear time.
 
Motwani et al. \cite{Motwani2007} give upper and lower bounds for both proportional and hybrid settings. Their bounds are matching up to polylogarithmic factors in terms of $n$, but not in terms of $\varepsilon$. In this paper, we improve both their upper and lower bounds. 
Our bounds are matching up to \emph{constant} factors in both settings, in terms of both $n$ and $\varepsilon$. No lower bounds with dependency on $\varepsilon$ were known previously.

In the proportional setting, we improve their $\tilde{O}(\sqrt{n}/\varepsilon^{7/2})$ algorithm to $O(\sqrt{n}/\varepsilon)$. In the hybrid setting, we improve $\tilde{O}(\sqrt[3]{n}/ \varepsilon^{9/2})$ to $O(\sqrt[3]{n}/\varepsilon^{4/3})$. Our algorithms are also significantly simpler. In the same paper, Motwani \etal{} \cite{Motwani2007} write: ``to efficiently derive the sum from [proportional] samples does not seem straightforward". We would like to disagree and give a formula that outputs an estimate of $W$ from a proportional sample. This formula is not only optimal in terms of sample complexity but also very simple. Our other algorithms, although not as simple, do not have large hidden constants and we believe they are both practical and less involved than their predecessors.

In their work, Motwani \etal{} \cite{Motwani2007} always assume to know the size of the universe $n = |U|$. We extend these results to the case of unknown $n$.
In this case, we obtain a $O(\sqrt{n}/\varepsilon + \log n/\epsilon^2)$ algorithm for the proportional setting, and a $O(\sqrt{n}/\varepsilon)$ algorithm for the hybrid setting. This means that in the proportional setting we may remove the need for advice without significantly impacting the complexity of the problem, while there is a major difference in the hybrid setting. We prove that this difference in the hybrid setting is necessary, by showing a matching lower bound.

We give lower bounds for all our estimation problems, both for proportional and hybrid settings, and when $n$ is either known or unknown. This is the most technically challenging part of the paper.
We prove lower bounds for $n$ known and unknown as well as proportional and hybrid settings; all our lower and upper bounds are matching up to a constant factor. See \Cref{tab:summary_results} for a summary of our results.

Our algorithms have particularly interesting applications in the area of sublinear-time graph algorithms, which are explained in detail in \Cref{sec:rel-work-and-applications}.

The paper is structured as follows. In what is left of \Cref{sec:intro} we explain applications and related work, give an overview of the techniques employed, and provide the reader with formal definitions of problems and notations.  \Cref{sec:weighted_sampling} contains our algorithms for proportional setting, while \Cref{sec:hybrid_sampling} contains algorithms for hybrid setting. In \Cref{sec:lower_bounds_section} we prove all our lower bounds. In \Cref{sec:approximatelycountingedges} we show how to apply our algorithms to the problem of counting the number of edges in a graph in sublinear time. In \Cref{sec:open_problems} we raise several open problems.

\def\arraystretch{1.9}
\begin{table}[ht]
\resizebox{\textwidth}{!}{%
\begin{tabular}{|l|ll|ll|}
\hline 
\multirow{2}{*}{\makecell[l]{Advice to\\ the algorithm}}& \multicolumn{2}{|c|}{This paper} & \multicolumn{2}{|c|}{Motwani, Panigrahy and  Xu \cite{Motwani2007}}   \\
\cline{2-5}
                                    &  Proportional & Hybrid                                                             & Proportional      & Hybrid \\ \hline \hline
 \makecell[l]{\\$n$ known\\\phantom{ }}                          & 
 $\Theta(\sqrt{n}/\varepsilon)$  &  \makecell[l]{$O(\min(\sqrt[3]{n}/\varepsilon^{4/3},n\log n))$ \\ $\Omega(\min(\sqrt[3]{n} / \varepsilon^{4/3}, n)$}  & \makecell[l]{$\Tilde{O}(\sqrt{n}/\varepsilon^{7/2})$\\ $\Omega(\sqrt{n})$} & \makecell[l]{$\Tilde{O}(\sqrt[3]{n}/\varepsilon^{9/2}),\, O(\sqrt{n}/\varepsilon^2)$ \\  $\Omega(\sqrt[3]{n})$} \\ \hline
 \makecell[l]{Known $\tilde{n} \geq n$}    &  \makecell[l]{$O(\sqrt{\tilde{n}}/\varepsilon)$ \\ $\Omega(\sqrt{n}/\varepsilon)$} & \makecell[l]{ $O(\min(\sqrt{n}/\varepsilon,n\log n))$ \\ $\Omega(\min(\sqrt{n}/\varepsilon, n))$}                                                     &  &  \\ \hline
 \makecell[l]{\\No advice\\\phantom{ }}&  \makecell[l]{$O(\sqrt{n}/\varepsilon + \log n/\epsilon^2)$ \\ $\Omega(\sqrt{n}/\varepsilon)$}                         &  \makecell[l]{ $O(\min(\sqrt{n}/\varepsilon,n\log n))$ \\ $\Omega(\min(\sqrt{n}/\varepsilon, n))$}                                                      &  &  \\ \hline
\end{tabular}
}
\caption{Results of this paper.} 
\label{tab:summary_results}
\end{table}




\subsection{Related Work and Applications.} \label{sec:rel-work-and-applications}
In this section, we show that our algorithms can be applied to get sublinear-time graph algorithms and distribution testing and discuss how this relates to previous work in these areas. We also discuss here how the widely used Metropolis-Hastings algorithm in fact implements the proportional sampling.
We suggest that this could be an application domain worth investigating. 

\subsubsection*{Counting edges in sublinear time.}
When a graph $G=(V, E)$ is very large, we may want to approximately solve certain tasks without looking at the entire $G$, thus having a time complexity that is sublinear in the size of $G$.
In particular, estimating global properties of $G$ such $|V|$ or $|E|$ in this setting is an important problem and has been studied in both theoretical and applied communities \cite{Feige2004,Goldreich2006,Seshadhri2015,Eden2017,Tetek_Thorup,Katzir2011,Dasgupta2014}. 
Since the algorithm does not have the time to pre-process (or even see) the whole graph, it is important to specify how we access $G$. Several models are employed in the literature.
The models differ from each other for the set of queries that the algorithm is allowed to perform. A \emph{random vertex query} returns a random vertex, a \emph{random edge query} returns a random edge, a \emph{pair query} takes two vertices $u, v$ as arguments and returns $(u, v) \in E$, a \emph{neighbourhood query} takes a vertex $v$ and an index $i$ as arguments and returns the $i$-th neighbour of $v$ (or says that $d(v) < i$), a \emph{degree query}, given a vertex $v$, returns its degree $deg(v)$. We parameterize the complexities with an approximation parameter $\varepsilon$, $n=|V|$ and $m=|E|$.

The problem of estimating the number of edges in a graph in sublinear time has been first considered by Feige \cite{Feige2004}. Their algorithm works in the model where only random vertex queries  and degree queries are allowed and achieves a $(2+\varepsilon)$-approximation algorithm using $\tilde{O}(\frac{n}{\varepsilon \sqrt{m}})$ time and queries. Their algorithm does not use neighborhood queries and the authors showed that without neighborhood queries, $2-\varepsilon$ approximation requires a linear number of queries. Goldreich and Ron \cite{Goldreich2006} broke the barrier of factor $2$ by using neighborhood queries. Indeed, they showed a $(1+\varepsilon)$-approximation with time and query complexity of $\tilde{O}(\frac{n}{\varepsilon^{9/2} \sqrt{m}})$. Currently, the best known algorithm is the one by Eden, Ron and Seshadhri \cite{Eden2017} and has complexity $\smash{\tilde{O}}(\frac{n}{\varepsilon^{2} \sqrt{m}})$. If pair queries  are allowed, the algorithm of Tětek and Thorup \cite{Tetek_Thorup} has complexity $\smash{\tilde{O}}(\frac{n}{\varepsilon \sqrt{m}}+ \frac{1}{\varepsilon^4})$; in the same paper the authors showed a lower bound which is near-matching for $\varepsilon \geq m^{1/6}/ n^{1/3}$ as well as an algorithm with complexity $\smash{\tilde{O}}(\frac{n}{\varepsilon \sqrt{m}}+ \frac{1}{\varepsilon^2})$ in what they call the hash-ordered access model. They also show an algorithm in the more standard setting with random vertex and neighborhood queries, that runs in time $\tilde{O}(\sqrt{n}/\varepsilon)$. This is the same complexity (up to a $\log^{O(1)} n$ factor) that we achieve in the setting with random edge queries, as we discuss below. Our techniques are, however, completely different and share no similarity with the techniques used in \cite{Tetek_Thorup}.

Our algorithms can be applied to solve the edge counting problem when either (i) random edge queries only are allowed, or (ii) both random edge and random vertex queries are allowed. While edge counting has not been explicitly considered in these settings before, these settings are established and have been used in several papers \cite{Assadi2018,Aliakbarpour2018,Fichtenberger2020,Biswas2021}. We instantiate our algorithm for this graph problem setting $U = V$ and $w(v) = deg(v)$ for each $v \in V$. Uniform sampling then corresponds to sampling vertices uniformly. Proportional sampling can be simulated by taking a random edge and picking one of its endpoints at random. Degree query allows us to get the weights of sampled vertices. Since these settings have not been explicitly studied before, we compare our results with what follows directly from the known literature.

If we can only sample uniformly from $E$, the algorithm by Motwani et al.\ implies an algorithm for this problem that has complexity $\Tilde{O}(\sqrt{n}/\varepsilon^{7/2})$.
Using an algorithm from \cite{Eden2017} and standard simulation of random vertex queries using random edge queries, one would get time $\tilde{O}(\sqrt{m}/\varepsilon^2 + \frac{m}{\varepsilon \sqrt{n'}})$ for $n'$ being the number of non-isolated\footnote{A vertex is isolated if it has degree zero.} vertices\footnote{One may simulate uniform sampling from the set of non-isolated vertices at multiplicative overhead of $O(m/n')$ by sampling proportionally and using rejection sampling. We may then use set size estimation by bithday paradox in time $O(\sqrt{n'}/\epsilon)$ to learn $n'$ and the algorithm of \cite{Eden2017}.}. Our results immediately give a $O(\sqrt{n} / \varepsilon)$ algorithm, or $\tilde{O}(\sqrt{n} / \varepsilon + 1/\varepsilon^2)$ when $n$ is not known. 

When we may sample both from $E$ and $V$, the algorithms by Motwani et al.\ imply algorithms with sample complexities of $\Tilde{O}(\sqrt[3]{n}/\varepsilon^{9/2})$ and $O(\sqrt{n}/\varepsilon^2)$. We may also use the algorithm of \cite{Eden2017} that relies on random vertex query only. This has complexity from $O(\frac{n}{\varepsilon^2\sqrt{m}})$. Our results imply an algorithm with complexity $O(\sqrt[3]{n}/\varepsilon^{4/3})$. 
Surprisingly, one of our subroutines provides an $\onepmepsilon$-approximation of $|E|$ using  $\tilde{O}(d/\varepsilon^2)$ expected samples, where $d$ is the average degree, under the mild assumption that at least a constant fraction of vertices are non-isolated (in fact, we prove a more complicated complexity which depends on the fraction of vertices that are isolated). This subroutine works in the setting where we can sample uniformly from both $V$ and $E$. We find this remarkable since it is $O(1/\varepsilon^2)$ for sparse graphs.


\subsubsection*{Distribution testing.}
Consider a model in which we are allowed to sample from a distribution $\Dd$ on $U$, and when we do, we receive both $a \in U$ and $P_\Dd(a)$. This model is stronger
than the one we considered throughout the paper. In fact, it is a special case of our model obtained by fixing $W=1$. Similarly, we may consider such stronger variant of the hybrid setting.
This model has received attention both in statistics and in theoretical computer science literature. Most notably, Horvitz and Thompson \cite{horvitz_1952} in 1952 showed\ how to estimate the sum $\sum_{a\in U} v(a)$, for value function $v: U \rightarrow \mathbb{R}$ when having access to a sample $a$ from $\mathcal{D}$ and $P_\mathcal{D}(a)$. 
More recently, many distribution testing problems have been considered in this and similar models.

First, Canonne and Rubinfeld \cite{Canonne2014} considered the model where one can (i) sample $a\in U$ according to $\Dd$; (ii) query an oracle that, given $a \in U$, returns $P_\Dd(a)$. 
This allows us to both get the probability of a sampled item, but also the probability of any other item. For every permutation-invariant problem, any sublinear-time algorithm in this model may be transformed so that it only queries the oracle on previously sampled items or on items chosen uniformly at random \footnote{Since we consider permutation invariant problems, we may randomly permute the elements. Whenever we query an element that has not yet been sampled, we may assume that it is sampled uniformly from the set of not-yet-sampled elements. This can be simulated by sampling elements until getting a not-yet-seen item (sublinearity ensures that the step adds multiplicative $O(1)$ overhead).}. This setting is stronger than our hybrid setting. However, the two models become equivalent if we know $W$.

Second, Onak and Sun \cite{Onak2018} considered exactly the model described above, where one can sample from a distribution $\Dd$, and both $a\in U$ and $P_\Dd(a)$ are returned. This setting is again stronger than our proportional setting, and it becomes equivalent once we know $W$.

In both of these papers, the authors solved several distribution testing problems such as: uniformity testing, identity testing, and distance to a known distribution, in the respective models. Canonne and Rubinfeld \cite{Canonne2014} also considered the problem of providing an additive approximation of entropy.
Both \cite{Canonne2014} and \cite{Onak2018} proved several lower bounds, showing that many of their algorithms are optimal up to constant factors. These lower bounds also imply a separation between the two models for the problems mentioned above.

In both of the papers, the authors make a point that many of their algorithms (all the ones we mentioned) are robust with respect to noise of multiplicative error $(1\pm \varepsilon/2)$ in the answers of the probability oracles.
Our algorithms can then serve as a reduction from the weaker models we consider in this paper to these stronger models where we know the probabilities and not just the weights. The reason is that we can get an approximation of $W$, meaning that we may then approximately implement the above-described models. Since the mentioned algorithms \cite{Canonne2014,Onak2018} are robust,
a $(1\pm\varepsilon/2)$-approximation of $W$ is sufficient to simulate them.

\subsubsection*{Proportional sampling in practice: Metropolis-Hastings.}
Proportional sampling is often implemented in practice using the Metropolis-Hasting algorithm. This algorithm is widely used in statistics and statistical physics, but can also be used to sample combinatorial objects. It can be used to sample from large sets which have complicated structures that make it difficult to use other sampling methods. Just like our algorithms, it is suitable when the set is too large to be stored explicitly, making it impossible to pre-process it for efficient sampling. One of the main appeals of Metropolis-Hasting is that it does not require one to know the exact sampling probabilities, but it is sufficient to know the items' weights like in the proportional setting described in this paper. We believe that our algorithm can find practical applications in combination with Metropolis-Hasting as Metropolis-Hasting performs proportional sampling with weight function that would be usually known in practice.

\subsection{Overview of employed techniques.}
Here we provide a summary of the techniques employed throughout the paper. We denote with $\punif(\cdot)$ and $\pprop(\cdot)$ the probabilities computed according to uniform and proportional sampling, respectively. Although often not specified for brevity, all guarantees on the approximation factors of estimates in this section are meant to hold with probability $2/3$.


\subsubsection{Proportional setting with advice \texorpdfstring{$\tilde{n} \geq n$}{ntgn}.}
Consider sampling two elements $a_1,a_2\in U$ proportionally and define $Y_{12} = 1/w(a_1)$ if $a_1 = a_2$, and $Y_{12}=0$ otherwise. It is easy to show that $Y_{12}$ is an unbiased estimator of $1/W$. We could perform this experiment many times and take the average. This would give a good approximation to $1/W$ and taking the inverse value, we would get a good estimate on $W$. Unfortunately, we would need $\Theta(n/\varepsilon^2)$ repetitions in order to succeed with a constant probability.
We can fix this as follows: we take $m$ samples $a_1, \cdots, a_m$ and consider one estimator $Y_{ij}$ for each pair of samples $a_i, a_j$ for $i \neq j$. This allows us to get $\binom{m}{2}$ estimators from $m$ samples. We show that estimators $Y_{ij}$ are uncorrelated. This reduces the needed number of samples from $O(n/\varepsilon^2)$ to $O(\sqrt{n}/\varepsilon)$.

We now describe the estimator formally. Let $S = \{a_1, \cdots, a_m\}$ be the set of sampled items, and for each $s \in S$ define $c_s$ to be the number of times item $s$ is sampled. Then,
\[
\hat{W} = \binom{m}{2} \cdot \left(\sum_{s \in S} \frac{\binom{c_s}{2}}{w(s)}\right)^{-1}
\]
is a $\onepmepsilon$-approximation of $W$ with probability $2/3$ for $m= \Theta(\sqrt{n}/\varepsilon)$. If, instead of knowing $n$ exactly, we know some $\tilde{n} \geq n$ we can achieve the same guarantees by taking $\Theta(\sqrt{\tilde{n}}/\varepsilon)$ samples.


\subsubsection{Proportional setting, unknown \texorpdfstring{$n$}{n}.} \label{subsetction_prop_unknownn}
We partition $U$ into buckets $B_i = \{a\in U \,|\, w(a) \in [2^i, 2^{i+1})\}$.
We choose some $b\in \mathbb{Z}$ and compute two estimates: $\hat{W}_b$ approximates $W_b = \sum_{a\in B_b} w(a)$, and $\hat{P}_b$ approximates $P_b =\pprop(a \in B_b) = W_b / W$. We then return $\hat{W}_b/ \hat{P}_b$, as an estimate of $W$. If both estimates are accurate, then the returned value is accurate. To choose $b$, we sample $a_1$ and $a_2$ proportionally and define $b$ so that $\max\{a_1, a_2\} \in [2^b, 2^{b+1})$. We prove that, surprisingly\footnote{Notice that, if we just define $b=a_1$, then we have $E[1/P_b] = n$ in the worst case. Therefore, taking the maximum of two samples entails an exponential advantage.}, $E[1/P_b] = O(\log n)$. Computing $\hat{P}_b$ is simply a matter of estimating the fraction of proportional samples falling into $B_b$. This can be done using $O(1/(P_b \varepsilon^2))$ samples, where $\varepsilon$ is the approximation parameter. Therefore the expected complexity of computing $\hat{P}_b$ is $O(\log n/ \varepsilon^2)$. Computing $\hat{W}_b$ is more involved. First, we design two subroutines to sample from $B_b$, one for proportional sampling and one for uniform. These subroutines work by sampling $a\in U$ until we get $a \in B_b$; the subroutine for uniform sampling then uses rejection sampling. These subroutines take, in expectation, $O(1/P_b)$ samples to output one sample from $B_b$. 
Since we can now sample uniformly from $B_b$, we sample items uniformly form $B_b$ until we find the first repeated item and use the stopping time to infer $|B_b|$ up to a constant factor. In expectation, we use $O(\sqrt{|B_b|})$ uniform samples and compute $\tilde{n}_b$, such that $|B_b| \leq \tilde{n}_b \leq O(|B_b|)$.
Since we can also sample proportionally from $B_b$, we may use the algorithm for proportional setting with advice $\tilde{n}_b$ to estimate $W_b$. 
This yields a total sample complexity of $O(\sqrt{n} / \varepsilon + \log n / \varepsilon^2)$.

\subsubsection{Hybrid setting.} We now present the techniques used in the hybrid setting. Before giving a sketch of the main algorithms, we introduce two subroutines.

\paragraph{Coupon-collector-based algorithm.}
In the hybrid setting, we can sample elements uniformly. If we know $n=|U|$ a well-known result under the name of ``coupon collector problem'' shows that we can retrieve with high probability all $n$ elements by performing $\Theta(n \log n)$ uniform samples. We extend this result to the case of unknown $n$. We maintain a set of retrieved elements $S \subseteq U$ and keep on sampling uniformly and adding new elements to $S$ until we perform $\Theta(|S|\log |S|)$ samples in a row without updating $S$. It turns out that this procedure retrieves the whole set $U$ with probability $2/3$ and expected complexity $O(n \log n)$. According to our lower bounds, this algorithm is near-optimal when $\varepsilon = O(1/\sqrt{n})$. Therefore, we can focus on studying hybrid sampling for larger values of $\varepsilon$.

\paragraph{Harmonic mean and estimating \texorpdfstring{$W/n$}{W/n} with advice \texorpdfstring{$\tilde{\theta} \geq W/n$}{theta >= W/n} in hybrid setting.}
If we sample $a \in U$ proportionally, then $1/w(a)$ is an unbiased estimator of $n/W$. Unfortunately, we may have very small values of $w(a)$, which can make the variance of this estimator arbitrarily large. To reduce variance, we can set a threshold $\phi$ and define an estimator $Y_\phi$ as $1/w(a)$ if $w(a) \geq \phi$, and $0$ otherwise. 
Set $p=\punif(w(a) \geq \phi)$, then we have $E[Y_\phi]=p n/ W$ and $Var(Y_\phi) \leq p n / (\phi W)$.
If we define $\Bar{Y}_\phi$ as the average of $\tilde{\theta}/(p\phi \varepsilon^2)$ copies of $Y_\phi$, we have $Var(\Bar{Y}_\phi) \leq (\varepsilon E[Y_\phi])^2$. With such a small variance, $\Bar{Y}_\phi$ is a good estimate of $p n / W$.
Estimating $p$ is simply a matter of estimating the fraction of uniform samples having $w(\bullet) \geq \phi$. This can be done taking $O(1/(p\varepsilon^2))$ uniform samples. Once we have estimates of both $p$ and $p n / W$ we return their ratio as an estimate of $W/n$. We employed in total $O((1 +\frac{\tilde{\theta}}{\phi}) / (p \, \varepsilon^2))$ proportional and uniform samples.

Perhaps surprisingly, this corresponds to taking the harmonic mean of the samples with weight at least $\phi$ and adjusting this estimate for the weight of the items with weight $< \phi$.

\paragraph{Hybrid setting, known $n$.}
We now sketch the algorithm for sum estimation in the hybrid setting with known $n$. We combine our formula to estimate $W$ in the proportional setting and the above algorithm to estimate $W/n$ in hybrid setting to obtain an algorithm that estimates $W$ using $O(\sqrt[3]{n}/\varepsilon^{4/3})$ uniform and proportional samples. Define $U_{\geq \theta}=\{a\in U \,|\, w(a) \geq \theta\}$. We find a $\theta$ such that $\punif(a \in U_{\geq \theta}) \approx n^{-1/3}\varepsilon^{-2/3}$, this can be done taking enough uniform samples and picking the empirical $(1-n^{-1/3}\varepsilon^{-2/3})$-quantile
\footnote{We may assume that $\epsilon \geq 8/\sqrt{n}$ by running the coupon-collector-based algorithm when $\epsilon < 8/\sqrt{n}$. Under this assumption, it holds $n^{-1/3}\varepsilon^{-2/3} \in [0, 1]$ and taking the $(1-n^{-1/3}\varepsilon^{-2/3})$-quantile then is meaningful.}
.
We define $p=\pprop(a \in U_{\geq \theta}) \geq \punif(a \in U_{\geq \theta}) \approx n^{-1/3}\varepsilon^{-2/3}$. We compute an estimate $\hat{p}$ of $p$ counting the fraction of proportional samples falling in $U_{\geq \theta}$.
Now we have two cases. 
If $\hat{p} \geq 1/2$, we can simulate sampling from the proportional distribution on $U_{\geq \theta}$ with $O(1)$ overhead by sampling proportionally until we get an element of $U_{\geq \theta}$. Then, we use the algorithm for sum estimation under proportional sampling restricted to elements in $U_{\geq \theta}$ to estimate $\sum_{a \in U_{\geq \theta}} w(a) = p W$. Dividing by the estimate $\hat{p} \approx p$, we get an estimate for $W$.
Else, $\hat{p} < 1/2$, and thus $p \leq 2/3$ (assuming $\hat{p}$ is a good enough estimate of $p$). We then have $\theta n \geq \sum_{a \not\in U_{\geq \theta}} w(a) = (1-p) W \geq W / 3$. This allows us to use the harmonic-mean-based algorithm to estimate $W/n$ with $\phi = \theta$ and $\tilde{\theta} = 3\theta$.
In both cases we manage to provide an estimate of $W$ using $O(\sqrt[3]{n} / \varepsilon^{4/3})$ samples.

\paragraph{Hybrid setting, unknown $n$.}
We now sketch our technique for sum estimation in the hybrid setting with unknown $n$.
First, we sample items uniformly until we find the first repeated item and use the stopping time to infer the total number of items $n$. In expectation, we use $O(\sqrt{n})$ uniform samples and compute $\tilde{n}$, such that $n \leq \tilde{n}$ and $E[\tilde{n}] = n$. Now, we can use our algorithm for sum estimation in the proportional setting with advice $\tilde{n}$. This uses in expectation $O(\sqrt{\tilde{n}}/\varepsilon) = O(\sqrt{n} / \varepsilon)$ samples. As we prove, this complexity is optimal up to a constant factor for $\epsilon \geq 1/\sqrt{n}$.
Again, if $\varepsilon \leq 1/\sqrt{n}$, then we use our coupon-collector-based algorithm that retrieves every element of $U$ using $O(n \log n)$ samples.

\subsubsection{Lower bounds.}
The most technically challenging part of our paper is \Cref{sec:lower_bounds_section}, where we prove lower bounds for all estimation problems we address in the first part of the paper. 
All our lower bound proofs follow a common thread. We now sketch the main ideas. First, we define two different instances of the estimation problem at hand $(U_1, w_1)$ and $(U_2, w_2)$ such that a $\onepmepsilon$-approximation of $W$ is sufficient to distinguish between them. Then, we define our \emph{hard} instance as a mixture of the two: we take $(U_1, w_1)$ with probability $1/2$ and $(U_2, w_2)$ otherwise. We denote these events by $\Ee_1$ and $\Ee_2$ respectively. 
Second, we show that a Bayes classifier cannot distinguish between the two cases with probability $2/3$ using too few samples; since Bayes classifiers are risk-optimal\footnote{Risk of a classifier refers to the misclassification probability under some fixed distribution. For the Bayes classifier, we implicitly assume that this distribution is the same as the prior.} this implies that no classifier can have misclassification probability less than $1/3$ while using the same number of samples. 
To show that a Bayes classifier has a certain misclassification probability, we study the posterior distribution, conditioned on the samples it has seen. Let the multiset $S$ represent the outcome of the samples. Since the prior is uniform, applying the Bayes theorem gives
\[
\frac{P\ld(\Ee_1 \,|\, S\rd)}{P\ld(\Ee_2 \,|\, S\rd)} = \frac{P\ld( S \,|\,\Ee_1 \rd)}{P\ld(S \,|\,\Ee_2\rd)}.
\]
We denote this likelihood ratio by $\Rr(S)$. We show that whenever $|S|$ is (asymptotically) too small, we have $\Rr(S) \approx 1$ with probability close to $1$. When $\Rr(S) \approx 1$, the posterior distribution is very close to uniform. This entails misclassification probability close to $1/2$.
If $X$ and $Y$ are some random variables sufficient to reconstruct $S$ (that is, there exists an algorithm that, given $(X,Y)$, generates $S'$ with the same distribution as $S$), we can define
\[
\Rr(X) = \frac{P\ld(X\,|\, \Ee_1\rd) }{P\ld(X \,|\, \Ee_2\rd)}\quad \text{ and } \quad \Rr(Y\,|\,X) = \frac{P\ld(Y \,|\, X, \Ee_1\rd)}{P\ld(Y \,|\, X, \Ee_2\rd)}
\]
and we have, thanks to the Bayes theorem, $\Rr(S) = \Rr(X) \cdot \Rr(Y\,|\,X)$. In this way, we can break the problem of proving $\Rr(S) \approx 1$ into proving $\Rr(X) \approx 1$ and $\Rr(Y\,|\,X) \approx 1$. This allows us to reduce all our lower bounds to prove concentration of likelihood ratios for two basic problems: (1) distinguishing between two sets of size $n$ and $(1-\varepsilon)n$ by uniform sampling and (2) distinguishing between two sequences of i.i.d. random variables from $Bern(p)$ and $Bern(p - \varepsilon)$. 

We now sketch how we bound the likelihood ratio $\mathcal{R}(S)$ for problem (1), the same technique applies to problem (2).
In problem (1), we call $\Ee_1$ the event $|U|=n$ and $\Ee_2$ the event $|U|=(1-\varepsilon) n$, and we set $P(\Ee_1)=P(\Ee_2)=1/2$. 
First, we notice that $\Rr(S)$ depends only on the number $\ell(S)$ of distinct elements in $S$. Then, we prove three facts. First, $\ell(S)\,|\,\Ee_i$ is concentrated around $E[\ell(S)\,|\,\Ee_i]$. Second, $E[\ell(S) \,|\, \Ee_1] \approx E[\ell(S) \,|\, \Ee_2]$. Third, for small deviations of $\ell(S)$, we have small deviations of $\Rr(\ell(S))$. These three facts are sufficient to conclude that, with probability close to $1$, $R(\ell(S))$ lies in a very narrow interval; further computations show that $1$ lies in that interval, hence $\Rr(S) \approx 1$ with probability close to $1$.

\subsection{Preliminaries.} \label{sec:preliminaries}

\paragraph{Problem definition.} We now give a formal definition of the two settings that we consider.
Let us have a set $U$ of cardinality $n$ and a weight function $w: U \rightarrow [0, \infty)$. We denote by $W$ the sum $\sum_{a\in U} w(a)$.
The following operations are allowed in the proportional sampling setting: (1) proportionally sample an item, this returns $(a, w(a))$ with probability $w(a)/W$; (2) given two items $a,a'$, check whether $a=a'$. This is the only way we can interact with the items. In the hybrid setting, we may in addition (3) sample an item uniformly (that is, return $(a,w(a))$ for any $a \in U$ with probability $1/n$).
In both settings, we want to compute an estimate $\hat{W}$ of $W$ such that $(1-\varepsilon)W \leq \hat{W} \leq (1+\varepsilon)W$ with probability $2/3$. 

\paragraph{Notation.}
When $(1-\varepsilon)W \leq \hat{W} \leq (1+\varepsilon)W$ holds, we say that such $\hat{W}$ is a $\onepmepsilon$-approximation of $W$.
Some of our subroutines require ``advice" in the form of a constant factor approximation of some value. For sake of consistency, we denote this constant factor approximation of $\bullet$ by $\tilde{\bullet}$. Similarly, if we want to estimate some value $\bullet$, we use $\hat{\bullet}$ to denote the estimate. Let us have some predicate $\phi$ that evaluates true on some subset of $U$ and false on the rest. We denote by $\punif(\phi(a))$ and $\pprop(\phi(a))$ the probability of $\phi$ evaluating to true for $a$ being picked uniformly and proportionally, respectively.

We use $\tilde{O}$ with the slightly non-standard meaning of $f(n) \in \tilde{O}(g(n))$ being equivalent to $f(n) \in g(n) \log^{O(1)} n$, rather than $f(n) \in g(n) \log^{O(1)} g(n)$.
We state all our results (both upper and lower bound) for some constant success probability $> 1/2$. These probabilities can be amplified to any other constants without increasing the asymptotic complexity. In pseudocode, we often say that we execute some algorithm with some failure probability. By this, we mean that one uses probability amplification to achieve that failure probability.

\paragraph{Relative bias estimation of a Bernoulli random variable.}
Let $X_1, X_2, \cdots$ be i.i.d.\ random variables distributed as $Bern(p)$. \cite{Lipton1993} gave a very simple algorithm that returns $\hat{p}$ such that, with probability at least $2/3$, $\hat{p}$ is a $\onepmepsilon$-approximation of $p$\footnote{In that paper, the authors in fact solve a more general problem. For presentation of this special case, see \cite{Watanabe2005}.}. It can be summarized as follows.
\begin{proposition}[follows from \cite{Lipton1993}] \label{claim:bernoulli_estimation}
Let $X_1, X_2, \cdots$ be i.i.d.\ random variables distributed as $Bern(p)$. There exists an algorithm that uses in expectation $O(\frac{1}{\varepsilon^2 p})$ samples and returns $\hat{p}$ such that $E[1/\hat{p}] = 1/p$ and
\[
P\ld(|\hat{p} -p| > \varepsilon p\rd) \leq \frac{1}{3}. 
\]
\end{proposition}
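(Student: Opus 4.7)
The plan is to use the classical \emph{inverse binomial sampling} scheme: fix a parameter $k = \lceil C/\varepsilon^2 \rceil$ for a suitable constant $C$, draw $X_1, X_2, \dots$ one by one, and stop at the first time $N$ such that $X_1 + \cdots + X_N = k$ (i.e.\ after the $k$-th observed success). Return $\hat{p} := k/N$. This is essentially the Lipton--Naughton estimator specialised to Bernoulli trials.

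Next I would verify the expected sample complexity and the unbiasedness of $1/\hat{p}$ simultaneously. The random variable $N$ is a sum of $k$ independent $\mathrm{Geom}(p)$ waiting times, so
\[
E[N] = \frac{k}{p}, \qquad \mathrm{Var}(N) = \frac{k(1-p)}{p^2} \le \frac{k}{p^2}.
\]
From $E[N] = k/p$ I get immediately both $E[1/\hat{p}] = E[N/k] = 1/p$ and an expected number of samples equal to $k/p = O(1/(\varepsilon^2 p))$, as required.

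The concentration is the only genuinely nonroutine point, because $\hat{p} = k/N$ is a nonlinear function of $N$. I would first translate the event $\{|\hat{p}-p| > \varepsilon p\}$ into the event $\{N \notin [k/((1+\varepsilon)p),\, k/((1-\varepsilon)p)]\}$. Using the elementary inequalities $1/(1+\varepsilon) \le 1 - \varepsilon/2$ and $1/(1-\varepsilon) \ge 1 + \varepsilon$ (valid for $\varepsilon \le 1/2$, which we may assume), it suffices to show
\[
P\!\left(\,\Bigl|N - \tfrac{k}{p}\Bigr| > \tfrac{\varepsilon}{2}\cdot \tfrac{k}{p}\right) \le \tfrac{1}{3}.
\]
By Chebyshev's inequality this probability is at most $\mathrm{Var}(N)/(\varepsilon k/(2p))^2 \le 4/(\varepsilon^2 k)$, which is $\le 1/3$ as soon as $C \ge 12$ in our choice of $k$.

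The only obstacle worth flagging is the nonlinearity mentioned above: one must be careful that a relative $(1\pm\varepsilon/2)$ deviation on $N$ really does imply a $(1\pm\varepsilon)$ deviation on $\hat{p}$ in the correct direction. Once this asymmetric translation is done, everything else is a one-line application of Chebyshev and of the formula for the mean of a negative binomial, and both the unbiasedness of $1/\hat{p}$ and the $O(1/(\varepsilon^2 p))$ sample bound drop out of the same identity $E[N]=k/p$.
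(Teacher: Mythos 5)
Your proposal is correct and is exactly the stopping-rule (inverse binomial / negative binomial) scheme of Lipton \emph{et al.}\ that the paper is citing: sample until $k=\Theta(1/\varepsilon^2)$ successes, return $k/N$, and get $E[1/\hat p]=E[N]/k=1/p$ and Chebyshev concentration simultaneously from $E[N]=k/p$ and $\mathrm{Var}(N)\le k/p^2$. The paper itself gives no proof (it just references \cite{Lipton1993} and the exposition in \cite{Watanabe2005}), so your argument fills in precisely the intended reasoning, including the careful asymmetric translation from the event on $\hat p$ to a symmetric deviation event on $N$.
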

We call this algorithm $\bernoulliestimation(\varepsilon)$. We assume that this algorithm has access to the sequence $X_1, X_2, \cdots$; we specify these random variables when invoking the algorithm.
%

\paragraph{Probability amplification and expected values.}
Consider an estimator that gives a guarantee on the estimate $\hat{x}$ that holds with some probability (say, guarantee that a proposition $\phi(\hat{x})$ holds with probability at least $2/3$) and at the same time, we know that $E[\hat{x}] \leq y$ for some value $y$. We sometimes need to amplify the probability of the guarantee (that is, amplify the probability that $\phi(\hat{x})$ holds) but would like to retain a bound $E[\hat{x}] = O(y)$. We now argue that using the standard median trick is sufficient. Namely, we prove that 

\begin{lemma} \label{lemma:prob_amplification_expectations}
Let us have non-negative i.i.d. random variables $X_1 \cdots X_{2t-1}$ for some integer $t$, and let $X = \text{median}(X_1 \cdots X_{2t-1})$. It holds $E[X] \leq 2 E[X_1]$. 
\end{lemma}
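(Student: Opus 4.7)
The plan is to prove this via the tail-integral formula for non-negative random variables. Since $X \geq 0$, I have
\[
E[X] = \int_0^\infty P(X > y)\, dy,
\]
so it suffices to show the pointwise bound $P(X > y) \leq 2\, P(X_1 > y)$ for every $y \geq 0$ and integrate.

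The key step is to translate the event ``$\mathrm{median}(X_1,\dots,X_{2t-1}) > y$'' into a statement about the count $N_y := |\{i : X_i > y\}|$. By definition of the median of $2t-1$ values, we have $X > y$ if and only if $N_y \geq t$. Since the $X_i$ are i.i.d., $N_y \sim \mathrm{Bin}(2t-1, p)$ where $p := P(X_1 > y)$, so $E[N_y] = (2t-1)p$. Applying Markov's inequality to the non-negative integer-valued random variable $N_y$ gives
\[
P(X > y) \;=\; P(N_y \geq t) \;\leq\; \frac{E[N_y]}{t} \;=\; \frac{(2t-1)p}{t} \;\leq\; 2p \;=\; 2\, P(X_1 > y).
\]

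Finally, I would integrate this pointwise bound:
\[
E[X] \;=\; \int_0^\infty P(X > y)\, dy \;\leq\; \int_0^\infty 2\, P(X_1 > y)\, dy \;=\; 2\, E[X_1],
\]
which is exactly the desired inequality. There is no real obstacle here; the only non-obvious ingredient is recognizing that Markov's inequality on the count $N_y$ (rather than a Chernoff-style concentration bound, which would give a much tighter but unnecessary estimate) yields precisely the factor of $2$ we need, and that non-negativity of the $X_i$ is what allows us to use the tail-integral representation.
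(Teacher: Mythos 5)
Your proof is correct, but it takes a genuinely different route from the paper's. The paper argues deterministically on the random variables themselves: writing $X'_1 \le \cdots \le X'_{2t-1}$ for the order statistics, the median $X = X'_t$ is at most the average $\frac{1}{t}\sum_{i=t}^{2t-1} X'_i$ of the top $t$ values, which by non-negativity is at most $\frac{1}{t}\sum_{i=1}^{2t-1} X_i$; taking expectations and using linearity gives $E[X] \le \frac{2t-1}{t}E[X_1] \le 2E[X_1]$ in one line. You instead pass through the tail-integral representation, observe that $\{X > y\} = \{N_y \ge t\}$ where $N_y$ counts exceedances of $y$, and apply Markov's inequality to $N_y$ to get the pointwise tail bound $P(X > y) \le 2P(X_1 > y)$. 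Both routes are sound and both ultimately hinge on the same factor $\frac{2t-1}{t} \le 2$, but the trade-off differs: the paper's argument is shorter and avoids any probabilistic inequality, while yours proves the strictly stronger conclusion of pointwise tail domination, which would also yield $E[f(X)] \le 2E[f(X_1)]$ for any nondecreasing $f$ with $f(0)=0$, not just the mean bound actually needed. Note that your step 2 ($X > y \iff N_y \ge t$) deserves a sentence of justification — at least $t$ exceedances forces the $t$-th smallest above $y$, and conversely the median above $y$ gives $t$ exceedances — but it is indeed an equivalence, not just an implication, so no gap there.
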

\begin{proof}
Let $X_1', \cdots, X_{2t-1}'$ be the random variables $X_1, \cdots, X_{2t-1}$ sorted in increasing order. We then have

\begin{align}
E[X] &\leq E\left[\frac{1}{t} \cdot \sum_{i=t}^{2t-1} X_i'\right] \leq E\left[\frac{1}{t} \cdot \sum_{i=1}^{2t-1} X_i\right] \leq \frac{2t-1}{t} \cdot E[X_1]. 
\end{align}
\end{proof}

\section{Sum Estimation by Proportional Sampling.} \label{sec:weighted_sampling}
In this section, we focus on sum estimation in the proportional setting.
We design algorithms to estimate $W$ and our objective is to minimize the total number of samples taken in the worst case. We present two different algorithms that provide an $\onepmepsilon$-approximation of $W$ with probability $2/3$. The first one, $\PropEstimator$, assumes to have an upper bound on the number of elements $\tilde{n} \geq n$, and achieves sample complexity of $O(\sqrt{\tilde{n}}/\varepsilon)$. The second one, $\NAPropEstimator$, does not assume any knowledge of $n$, and produce an $\varepsilon$-estimate using $O(\sqrt{n}/\varepsilon + \log n/\varepsilon^2)$ samples in expectation.

\subsection{Algorithm with advice \texorpdfstring{$\tilde{n} \geq |U|$}{ntilde-geq-n}.}\label{sec:algo-with-advice}
Let $a_1 \dots a_m$ be $m$ items picked independently at random from $U$ with probabilities proportional to their weights. Let $S$ be the set of sampled items, and for each $s \in S$ define $c_s$ to be the number of times item $s$ is sampled. For each $i, j \in [m]^2$ define $Y_{ij}$ to be $1/w(a_i)$ if $a_i = a_j$ and $0$ otherwise. We now estimate $W$ as follows:

\medskip
\subsubsection*{Algorithm 1: {\it $\PropEstimator(\tilde{n}, \varepsilon)$}:} \addtocounter{algocf}{1}
Given a parameter $0 < \varepsilon < 1$ and advice $\tilde{n}\geq n$, perform $m = \sqrt{24\tilde{n}}/ \varepsilon+1$ samples and return the estimate 
\[
\hat{W} = \binom{m}{2} \cdot \left(\sum_{s \in S} \frac{\binom{c_s}{2}}{w(s)}\right)^{-1}.
\]
In case $c_s = 1$ for all $s \in S$ set $\hat{W}=\infty$.

\medskip \noindent
Before we prove correctness, we need the following lemma.
\begin{lemma} \label{lem:onepair}
Given pairwise distinct $i, j, k \in [m]$, we have
$E[Y_{i,j}] = 1/W$, $Var(Y_{i,j}) \leq n/W^2$, and $Cov[Y_{i,j},Y_{i,k}] = 0$
\end{lemma}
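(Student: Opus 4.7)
The proof is a direct calculation using the definition of $Y_{ij}$ and the independence of the proportional samples. I would organize it in three short parts, one per assertion.

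For the expectation, I would condition on the pair $(a_i, a_j)$: since the samples are independent, $P(a_i = a_j = a) = (w(a)/W)^2$, and on this event $Y_{ij} = 1/w(a)$. Summing over $a \in U$,
\[
E[Y_{ij}] = \sum_{a \in U} \frac{w(a)^2}{W^2} \cdot \frac{1}{w(a)} = \frac{1}{W^2} \sum_{a \in U} w(a) = \frac{1}{W}.
\]
For the variance I would compute the second moment the same way:
\[
E[Y_{ij}^2] = \sum_{a \in U} \frac{w(a)^2}{W^2} \cdot \frac{1}{w(a)^2} = \frac{n}{W^2},
\]
so $\mathrm{Var}(Y_{ij}) = n/W^2 - 1/W^2 \leq n/W^2$.

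For the covariance I would use that $Y_{ij} Y_{ik}$ is nonzero iff $a_i = a_j = a_k$, and then by independence of the three samples,
\[
E[Y_{ij} Y_{ik}] = \sum_{a \in U} \frac{w(a)^3}{W^3} \cdot \frac{1}{w(a)^2} = \frac{1}{W^2} = E[Y_{ij}] \cdot E[Y_{ik}],
\]
so $\mathrm{Cov}(Y_{ij}, Y_{ik}) = 0$. No step is genuinely hard here; the only thing worth flagging is to make sure the vanishing of the covariance really uses the independence of the three samples (not just pairwise), which is given for free since $a_i, a_j, a_k$ are i.i.d. This pairwise-uncorrelatedness across $\binom{m}{2}$ estimators is exactly what the surrounding algorithm analysis will exploit to reduce the sample complexity from $O(n/\varepsilon^2)$ to $O(\sqrt{n}/\varepsilon)$.
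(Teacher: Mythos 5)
Your proof is correct and follows essentially the same route as the paper's: compute $E[Y_{ij}]$ and $E[Y_{ij}^2]$ by summing $1/w(a)$ and $1/w(a)^2$ against $P(a_i=a_j=a)=(w(a)/W)^2$, and show $E[Y_{ij}Y_{ik}]=E[Y_{ij}]E[Y_{ik}]$ via the three-way coincidence probability. The only cosmetic difference is that you subtract $1/W^2$ before bounding the variance, whereas the paper just uses $\mathrm{Var}(Y_{ij})\leq E[Y_{ij}^2]$.
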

\begin{proof}\mbox{}\\*
\begin{align}
& E[Y_{ij}] = \sum_{a \in U} \frac{1}{w(a)} P(x_i = x_j = a) = \sum_{a \in U}  \frac{w(a)}{W^2} = \frac{1}{W}
\\
& Var(Y_{ij}) \leq E[Y_{ij}^2] = \sum_{a \in U} \frac{1}{w(a)^2} P(x_i = x_j = a) = \sum_{a \in U} \frac{1}{W^2} = \frac{n}{W^2} \\
\intertext{As for the covariance, it holds $Cov[Y_{i,j}, Y_{i,k}] = E[Y_{i,j} \cdot Y_{i,k}] - E[Y_{j,k}] \cdot E[Y_{i,k}]$. This is equal to $0$ as}
& E[Y_{i,j} \cdot Y_{i,k}] = \sum_{a \in U} \frac{1}{w(a)^2} P(x_i = x_j = x_k = a) = \sum_{a \in U} \frac{w(a)}{W^3} = \frac{1}{W^2} = E[Y_{j,k}] \cdot E[Y_{i,k}]
\end{align}
\end{proof}
\begin{theorem} \label{thm:propsampler}
Given parameters $\tilde{n}$ and $0 < \varepsilon < 1$, $\PropEstimator(\tilde{n}, \varepsilon)$ has sample complexity $O(\frac{\sqrt{\tilde{n}}}{\varepsilon})$ and returns  an estimate $\hat{W}$ such that $E[1/\hat{W}] = 1/W$. If, moreover, $\tilde{n} \geq n$, then $P(|\hat{W} - W| \leq \varepsilon W) \geq 2/3$.
\end{theorem}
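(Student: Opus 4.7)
The plan is to rewrite $\hat W^{-1}$ as a normalized sum of the pairwise estimators $Y_{ij}$ from \Cref{lem:onepair}, compute its mean and variance, and then apply Chebyshev's inequality.

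First I would set $Z := \sum_{1 \le i < j \le m} Y_{ij}$ and observe that, by the definition of $c_s$, $Z = \sum_{s \in S} \binom{c_s}{2}/w(s)$, so $\hat W = \binom{m}{2}/Z$. By linearity and \Cref{lem:onepair}, $E[Z] = \binom{m}{2}/W$, which immediately yields
\[
E\!\left[\tfrac{1}{\hat W}\right] = \tfrac{E[Z]}{\binom{m}{2}} = \tfrac{1}{W},
\]
(with the convention $1/\hat W = 0$ when $\hat W = \infty$, i.e.\ when $Z = 0$). This gives the first conclusion without any assumption on $\tilde n$.

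Next I would bound $\operatorname{Var}(Z) = \sum_{\{i,j\}\neq\{k,l\}} \operatorname{Cov}(Y_{ij},Y_{kl}) + \sum_{\{i,j\}} \operatorname{Var}(Y_{ij})$. The key point is that all cross covariances vanish: if $\{i,j\}\cap\{k,l\}=\emptyset$, the two estimators depend on disjoint samples and are therefore independent; if they share exactly one index, \Cref{lem:onepair} shows the covariance is $0$. Combined with the variance bound $\operatorname{Var}(Y_{ij}) \le n/W^2$ from the same lemma, this yields
\[
\operatorname{Var}(Z) \;\le\; \binom{m}{2}\cdot \tfrac{n}{W^2} \;=\; \tfrac{n}{W}\cdot E[Z].
\]
Now I would apply Chebyshev's inequality to the random variable $Z/\binom{m}{2}$, which has mean $1/W$ and variance at most $n/(W^2 \binom{m}{2})$. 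Choosing the deviation threshold $\varepsilon/(2W)$ gives
\[
P\!\left(\left|\tfrac{Z}{\binom{m}{2}} - \tfrac{1}{W}\right| > \tfrac{\varepsilon}{2W}\right) \;\le\; \tfrac{4 n}{\varepsilon^2 \binom{m}{2}}.
\]
Plugging in $m = \sqrt{24\tilde n}/\varepsilon + 1$ gives $\binom{m}{2} \ge 12\tilde n/\varepsilon^2$, so when $\tilde n \ge n$ the right-hand side is at most $1/3$.

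Finally, on the complementary event we have $Z/\binom{m}{2} \in [(1-\varepsilon/2)/W,\,(1+\varepsilon/2)/W]$, so $\hat W = \binom{m}{2}/Z \in [W/(1+\varepsilon/2),\,W/(1-\varepsilon/2)]$, and an elementary check for $\varepsilon \le 1$ gives $\hat W \in [(1-\varepsilon)W,(1+\varepsilon)W]$. The sample complexity bound $O(\sqrt{\tilde n}/\varepsilon)$ is immediate from the choice of $m$. The only subtle step is establishing the pairwise uncorrelatedness of the $\binom{m}{2}$ estimators; everything else is routine second-moment calculation.
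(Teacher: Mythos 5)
Your proposal is correct and follows the same route as the paper: rewrite $1/\hat W$ as $\binom{m}{2}^{-1}\sum_{i<j} Y_{ij}$, use \Cref{lem:onepair} for the mean, pairwise uncorrelatedness, and the variance bound $n/W^2$ (plus independence of disjoint pairs), then Chebyshev at threshold $\varepsilon/(2W)$ and invert. The arithmetic (choice of $m$ giving $\binom{m}{2}\ge 12\tilde n/\varepsilon^2$, hence failure probability $\le 1/3$ when $\tilde n\ge n$) matches the paper's.
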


\begin{proof}
The sample complexity is clearly as claimed. We now prove that $1 / \hat{W}$ is an unbiased estimator of $1 / W$:
\begin{align}
& \frac{1}{\hat{W}} = \binom{m}{2}^{-1} \sum_{s \in S} \frac{\binom{c_s}{2}}{w(s)} = \binom{m}{2}^{-1} \sum_{1\leq i<j \leq m} Y_{ij}
\end{align}
and thus
\begin{align}
& E\ld[\frac{1}{\hat{W}}\rd] = \binom{m}{2}^{-1} \sum_{1\leq i<j \leq m} E\ld[Y_{ij}\rd] = \frac{1}{W}.
\end{align}
When $i,j,k,\ell$ are all distinct, $Y_{ij}$ and $Y_{k\ell}$ are independent. Moreover, by \Cref{lem:onepair}, $Y_{ij}$ and $Y_{ik}$ are uncorrelated for $j \neq k$. Using the bound on $Var(Y_{ij})$ from \Cref{lem:onepair}, we then have that
\begin{align}
Var\ld[\frac{1}{\hat{W}}\rd] =& \binom{m}{2}^{-2} \sum_{1\leq i<j \leq m} Var(Y_{ij})  \\
\leq& \binom{m}{2}^{-1} \frac{n}{W^2} \\
\leq& \frac{1}{12} \cdot \ld( \frac{\varepsilon}{W}\rd)^2
\end{align}
By Chebyshev inequality, it holds that
\[
P\ld(\bigg|\frac{1}{\hat{W}} - \frac{1}{W}\bigg| > \frac{\varepsilon}{2W}\rd) \leq \frac{Var\ld[1/\hat{W}\rd]}{\ld(\varepsilon/2W\rd)^2}\leq \frac{1}{3}
\]
Finally, for $\varepsilon \leq 1$ we have 
\[
(1-\varepsilon)W \leq (1+\varepsilon/2)^{-1}W \leq \hat{W} \leq (1-\varepsilon/2)^{-1}W \leq (1+\varepsilon)W
\]
This means that $|1/\hat{W} - 1/W| \leq \varepsilon/(2W)$ implies $|\hat{W} - W| \leq \varepsilon W$. Thus $P\ld(|\hat{W} - W| \leq \varepsilon W\rd) \geq 2/3$.

\end{proof}

\subsection{Algorithms for \texorpdfstring{$|U|$}{u} unknown.}
In this section, we present the algorithm $\NAPropEstimator$, which samples elements from $U$ proportionally to their weights, and computes an $\onepmepsilon$-approximation of $W$ with probability $2/3$, without any knowledge of $n=|U|$. 

$\NAPropEstimator$ takes $O(\sqrt{n}/\varepsilon + \log n/\epsilon^2)$ samples in expectation and works as follows. 
We partition $U$ into buckets such that items in one bucket have roughly the same weight. We pick one bucket such that the items in this bucket are likely to have a sufficiently large total weight. We then estimate the sum restricted to this bucket. If we are able to do that, we can estimate the total weight by looking at what fraction of the proportional samples end up in this bucket. We estimate the sum restricted to the bucket as follows. Since the weights are roughly the same for all items in the bucket, we may use rejection sampling to efficiently simulate uniform samples from the bucket. That allows us to estimate the number of items in it, up to a constant factor. We use the algorithm $\PropEstimator$ with this estimated bucket size as the advice $\tilde{n}$.

\subsubsection*{Estimating $|U|$ through uniform sampling.}
As a preliminary step, we assume that we are able to sample elements from $U$ uniformly, rather than according to their weights. Under this assumption, we introduce the algorithm $\unisizeestimator$ that, using $O(\sqrt{n})$ expected samples, estimates $n=|U|$ up to a constant factor with probability $2/3$.
The intuition behind $\unisizeestimator$ is fairly simple: if we sample uniformly with replacement from a universe of size $n$ and we see the first repetition after $t$ samples, then it is likely that $t \approx \sqrt{n}$. 

\begin{algorithm2e}[ht]
\DontPrintSemicolon
$S_0 \leftarrow \emptyset$ \\
\For{$i \in \mathbb{N}$}{
    Sample $a_i \in U$ uniformly \\
    \If{$a_i \in S_i$}{
        $\hat{s} \leftarrow |S_i|$ \\
        $\hat{N} \leftarrow 4 \hat{s}^2$ \\
        \Return{$\hat{N}$}
    }
    $S_{i+1} \leftarrow S_i \cup \{a_i\}$
    }

\caption{$\unisizeestimator()$} \label{alg:set_size_estimation}
\end{algorithm2e}

\begin{theorem} \label{thm:universesizeestimator_analysis}
$\unisizeestimator$ has expected sample complexity of $O(\sqrt{n})$. It returns an estimate $\hat{N}$ such that $P(n \leq \hat{N}) \geq 2/3$ and $E[\hat{N}] = O(n)$.
\end{theorem}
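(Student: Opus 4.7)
The plan is to analyze the stopping time $T = \min\{i : a_i \in S_i\}$ of the birthday-paradox style sampling process. With this notation the algorithm returns $\hat{N} = 4 T^2$, so the three claims reduce to (i) $E[T] = O(\sqrt n)$, (ii) $P(T \geq \sqrt n / 2) \geq 2/3$, and (iii) $E[T^2] = O(n)$.

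First I would write down the exact tail distribution. Since $a_1, a_2, \ldots$ are i.i.d.\ uniform over $U$, conditioned on no collision in the first $t-1$ steps the $t$-th sample collides with probability $(t-1)/n$, hence
\[
P(T \geq t) \;=\; \prod_{i=0}^{t-1}\Bigl(1 - \frac{i}{n}\Bigr).
\]
Using $1 - x \leq e^{-x}$ this gives the clean upper bound $P(T \geq t) \leq \exp\!\bigl(-t(t-1)/(2n)\bigr)$, and via $\ln(1-x) \geq -x - x^2$ (valid for $x \leq 1/2$) the matching lower bound $P(T \geq t) \geq \exp\!\bigl(-t(t-1)/(2n) - O(t^3/n^2)\bigr)$ for $t \leq \sqrt n / 2$.

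Next, for (ii) I plug $t = \lceil \sqrt n / 2 \rceil$ into the lower bound, giving $P(T \geq \sqrt n / 2) \geq e^{-1/8} - o(1) > 2/3$ (for $n$ larger than some absolute constant; the finitely many small cases can be handled by adjusting the leading constant $4$ in $\hat{N}$, or by noting the bound holds directly from the product formula). Since $n \leq \hat{N}$ is exactly the event $T \geq \sqrt n / 2$, this proves the probability bound. For (i) and (iii) I would use $E[T^k] = \sum_{t \geq 1}\bigl(t^k - (t-1)^k\bigr) P(T \geq t)$ and the Gaussian tail bound above: the sum $\sum_t t^{k-1} e^{-t^2/(2n)}$ is dominated by an integral that evaluates to $O(n^{k/2})$, yielding $E[T] = O(\sqrt n)$ and $E[T^2] = O(n)$. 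The latter gives $E[\hat{N}] = 4 E[T^2] = O(n)$ and the former bounds the expected sample complexity by $E[T+1] = O(\sqrt n)$.

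The only mildly subtle step is (ii), because we need a concrete constant $\geq 2/3$ rather than just an asymptotic bound; I would handle this by choosing the constant inside $\hat N$ (here $4$) large enough that $e^{-1/8}$, or a slightly smaller exponential reflecting the $O(t^3/n^2)$ correction, cleanly exceeds $2/3$. Everything else reduces to standard birthday-paradox calculus with no real obstacle.
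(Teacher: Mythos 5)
Your proof is correct and complete, but it follows a different route than the paper on both substantive points, so a comparison is worthwhile. For the probability bound $P(n \leq \hat{N}) \geq 2/3$, the paper simply applies a union bound over the events $\Ee_i = \{a_i \in S_i\}$, getting $P(T < \sqrt n / 2) \leq \binom{\sqrt n / 2}{2}/n < 1/6$ with no asymptotics at all; you instead write down the exact tail $P(T \geq t) = \prod_{i<t}(1 - i/n)$, expand the logarithm, and get a sharper constant $e^{-1/8} \approx 0.88$, at the cost of an ``$n$ large enough'' caveat that you correctly flag and that the paper's cruder union bound avoids entirely. For the expectation bounds $E[\hat{N}] = O(n)$ and $E[T] = O(\sqrt n)$, the paper uses a stochastic-domination trick --- after $\sqrt n$ non-colliding samples the collision probability is at least $1/\sqrt n$, so $T$ is dominated by $\sqrt n + \mathrm{Geom}(1/\sqrt n)$, and the second moment of a geometric does the rest --- whereas you integrate the Gaussian-type tail $e^{-t^2/(2n)}$ directly via $E[T^k] = \sum_t (t^k - (t-1)^k) P(T \geq t)$. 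Both arguments are standard birthday-paradox calculus; the paper's is perhaps one line shorter and cleaner about constants, your version gives more explicit control of the tail and would generalize more easily if one needed moments other than the first two.
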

\begin{proof}
We prove that when the algorithm aborts, it holds $P(\sqrt{n}/2 \leq \hat{s}) \geq 2/3$. The bound on $P(n \leq \hat{N})$ follows by the definition of $\hat{N}$.
Define the event $\Ee_i = \{a_i \in S_i\}$, where we define that $a_i \not\in S_i$ whenever the algorithm terminates before step $i$. It then holds $P(\Ee_i) \leq P(\Ee_i \,|\, \bigcap_{j < i} \Bar{\Ee}_j) = i/n$. We have
\begin{align}
P\ld(\hat{s} < \frac{\sqrt{n}}{2}\rd) &= P\ld(\bigcup_{i=0}^{\sqrt{n}/2 - 1} \Ee_i\rd) \\
&\leq \sum_{i=0}^{\sqrt{n}/2 - 1} P(\Ee_i) \\
&\leq \sum_{i=0}^{\sqrt{n}/2 - 1} \frac{i}{n} \\
&= \frac{1}{n} \cdot \binom{\sqrt{n}/2}{2} < \frac{1}{6}.
\end{align}
%
After $\sqrt{n}$ samples, each additional sample is a repetition with probability at least $1/\sqrt{n}$. The number of iterations before the algorithm returns is thus stochastically dominated by $\sqrt{n} + Geom(1/\sqrt{n})$. We may thus bound the expectation as
\[
E\ld[\hat{N}\rd] = E\ld[4 \hat{s}^2\rd] \leq O\ld(n + E\ld[Geom(1/\sqrt{n})^2\rd]\rd) \leq O(n)
\]
where the last inequality is a standard result on the second moment of the geometric random variable.
\end{proof}

\subsubsection*{Simulating uniform sampling.}
We define buckets $B_i = \{ a \in U \,|\, w(a) \in [2^i, 2^{i+1})\}$ for each $i \in \mathbb{Z}$, and we show how to sample elements uniformly from $B_i$, while allowed to sample elements proportionally to their weight $w$.
First, we show how to sample elements from a bucket $B_b$ given a $b \in \mathbb{Z}$  proportionally in $\propbucketsampler$. We then use rejection sampling to obtain a uniform sample through $\unifbucketsampler$.

\begin{algorithm2e}[ht]
\DontPrintSemicolon
Sample $(a, w(a))$ proportionally \\
\While{$w(a) \not\in [2^b, 2^{b+1})$}{
    Sample $(a, w(a))$ proportionally \\
}
\Return{$(a, w(a))$}
\caption{$\propbucketsampler(b)$} \label{alg:prop_bucket_sampler}
\end{algorithm2e}

\begin{algorithm2e}[ht]
\DontPrintSemicolon
$(a, w(a)) \gets \propbucketsampler(b)$ \\
\While{$\textsc{Uniform}([0, 1]) >  \frac{2^b}{w(a)}$}{
    $(a, w(a)) \gets \propbucketsampler(b)$ \\
}
\Return{$(a, w(a))$}
\caption{$\unifbucketsampler(b)$} \label{alg:unif_bucket_sampler}
\end{algorithm2e}

\begin{lemma}\label{lem:bucketsampler}
Let $p_b = P(w(a) \in [2^b, 2^{b+1}))$ for $a\in U$ sampled proportionally. Then, the expected sample complexity of both $\propbucketsampler$ and $\unifbucketsampler$ is $O(1/p_b)$.
$\propbucketsampler$ returns an item from the $b$-th bucket with distribution proportional to the weights. $\unifbucketsampler$ returns an item from the $b$-th bucket distributed uniformly.
\end{lemma}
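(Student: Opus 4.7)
The plan is to analyze the two algorithms in turn, handling $\propbucketsampler$ first since $\unifbucketsampler$ uses it as a subroutine.

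For $\propbucketsampler$, I observe that each iteration of the while loop draws one fresh proportional sample and accepts it iff it lies in $B_b$. Since proportional samples are i.i.d., the number of samples drawn is exactly $Geom(p_b)$, yielding expected complexity $1/p_b$. To identify the output distribution, I condition a single proportional draw on the event $\{a \in B_b\}$: the conditional probability of any specific $a^\star \in B_b$ is
\[
\frac{w(a^\star)/W}{p_b} = \frac{w(a^\star)}{\sum_{a \in B_b} w(a)},
\]
so the returned element is proportionally distributed within $B_b$.

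For $\unifbucketsampler$, I view its while loop as rejection sampling applied on top of $\propbucketsampler$. For any element $a \in B_b$ we have $w(a) \in [2^b, 2^{b+1})$, so the acceptance probability $2^b/w(a)$ lies in $(1/2,1]$. Therefore each call to $\propbucketsampler$ is accepted with probability at least $1/2$ (in fact the total acceptance probability is $\sum_{a \in B_b} \tfrac{w(a)}{W_b} \cdot \tfrac{2^b}{w(a)} = \tfrac{2^b |B_b|}{W_b} \geq 1/2$), so the number of $\propbucketsampler$ calls is stochastically dominated by $Geom(1/2)$, i.e.\ $O(1)$ in expectation. By \Cref{lem:bucketsampler}'s statement for $\propbucketsampler$ (already proved), the total expected number of proportional samples is $O(1/p_b)$, as claimed. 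To identify the output distribution, I compute the probability that a given $a^\star \in B_b$ is the first accepted sample:
\[
\Pr[\text{output}=a^\star] \propto \frac{w(a^\star)}{W_b} \cdot \frac{2^b}{w(a^\star)} = \frac{2^b}{W_b},
\]
independent of $a^\star$. Normalizing over $B_b$ gives uniform distribution on $B_b$.

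I do not foresee any major technical obstacle: the two facts needed are the elementary geometric-stopping-time bound (which gives both the expected sample count and formalizes the ``$\propto$'' in the rejection sampling computation via the standard identity $\Pr[\text{first success is at state }a^\star] = \Pr[a^\star]\cdot\Pr[\text{accept}\mid a^\star]/\Pr[\text{accept}]$) and the lower bound $2^b/w(a) \geq 1/2$ for $a \in B_b$. The only subtlety is ensuring the bound on expected calls to $\propbucketsampler$ composes correctly with its own expected cost; this is immediate because the number of calls is independent of the cost of each individual call (both are driven by independent fresh proportional samples), so Wald's identity applies.
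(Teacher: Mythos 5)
Your proof is correct and follows essentially the same approach as the paper's: geometric stopping time with success probability $p_b$ for \propbucketsampler, conditional distribution to identify the proportional law on $B_b$, and rejection sampling with per-item acceptance probability $2^b/w(a)\geq 1/2$ to show both the $O(1)$ overhead and the uniformity of \unifbucketsampler's output. Your explicit invocation of Wald's identity to compose the expected number of \propbucketsampler calls with their per-call cost is a slightly more careful articulation of a step the paper leaves implicit, but the underlying argument is the same.
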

\begin{proof}
$\propbucketsampler$ performs samples until it samples an item $a$ from bucket $B_b$; it returns $a$. This is equivalent to sampling proportionally conditioned on $a \in B_b$. This proves that the output has the claimed distribution. In each step, we finish with probability $p_b$, independent of other steps. The expected number of steps is, therefore, $1/p_b$. This proves the sample complexity.

Similarly, we terminate $\unifbucketsampler$ after sampling $a \in B_b$ and $\textsc{Uniform}([0,1]) \leq \frac{2^b}{w(a)}$. Sampling until $a \in B_b$ is equivalent to sampling item $a$ from $B_b$ with probability $w(a)/A_b$ where $A_b$ is the total weight of items in bucket $B_b$. Therefore, $a$ is sampled in each step with probability
\[
p_b \cdot \frac{w(a)}{A_b} \cdot \frac{2^b}{w(a)} = \frac{p_b 2^b}{A_b}
\]
Since this probability is the same for all items $a \in B_b$, the resulting distribution is uniform. The rejection probability is upper-bounded by $1/2$. Therefore, $\unifbucketsampler$ also  has expected sample complexity of $O(1/p_b)$
\end{proof}

\subsubsection* {Putting it together: Estimating $W$ without advice.}
Finally, we are ready to show the algorithm $\NAPropEstimator$, that estimates $W$ without relying on any advice $\tilde{n} \geq n$.
To analyze it, we first need a lemma.

\begin{algorithm2e}[ht]
\DontPrintSemicolon
Sample $(a_1, w(a_1)), (a_2, w(a_2))$ proportionally \label{lst:line:asampled} \\
$b_i \leftarrow \lfloor \log w(a_i) \rfloor$ for $i = 1,2$ \\
$b \leftarrow \max(b_1, b_2)$ \label{lst:line:bsampled}\\
$\tilde{n}_b \leftarrow \unisizeestimator()$ using $\unifbucketsampler(b)$ as sampling subroutine, with success probability $9/10$ \label{lst:line:ntilde}\\
$\hat{W}_b \leftarrow \PropEstimator(\frac{\varepsilon}{3}, \tilde{n}_b)$ using $\propbucketsampler(b)$ as sampling subroutine, with success probability $9/10$ \label{lst:line:what}\\
$\hat{P}_b \leftarrow \bernoulliestimation\ld(\frac{\varepsilon}{3}\rd)$ to estimate $P(a \in B_b)$ with success probability $9/10$ \label{lst:line:phat} \\
$\hat{W} \leftarrow \hat{W}_b / \hat{P}_b$ \\
\Return{$\hat{W}$}
\caption{$\NAPropEstimator(\varepsilon)$} \label{alg:noadvice_proportional}
\end{algorithm2e}

To analyze $\NAPropEstimator$, we first need a lemma:
\begin{lemma}\label{lem:seriesofprobs}
Consider $b_1, b_2$ and $b$ as defined in $\NAPropEstimator$, and let $a\in U$ be a random element sampled proportionally. Then,
\[
E\ld[\frac{1}{P_{prop}\ld(a \in B_b \,|\, b\rd)}\rd]
= O(\log n).
\]
Moreover, if we define $n_b$ as the number of items in $B_b$ we have
\[
E\ld[\frac{\sqrt{n_b}}{P_{prop}\ld(a \in B_b \,|\, b\rd)}\rd] = O(\sqrt n).
\].
\end{lemma}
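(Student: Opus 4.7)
The plan is to use that $b = \max(b_1,b_2)$ to express both expectations through the CDF of a single proportional sample, and then bound the resulting sums by combining the doubling structure of the buckets with entropy and layered Cauchy--Schwarz applications. Set $p_j = \pprop(a \in B_j)$ and $F_j = \sum_{k \leq j} p_k$. Since $b$ is the larger of two independent proportional samples, $P(b=j) = F_j^2 - F_{j-1}^2 = p_j(F_j + F_{j-1})$, and hence, summing over nonempty buckets,
\[
E[1/p_b] = \sum_j (F_j + F_{j-1}) = 2\sum_j F_j - 1, \qquad E[\sqrt{n_b}/p_b] \leq 2 \sum_j F_j \sqrt{n_j}.
\]
I would enumerate the nonempty buckets as $j_1 < \cdots < j_K$ and write $q_k = p_{j_k}$, $m_k = n_{j_k}$, and $Q_k = \sum_{\ell \leq k} q_\ell$, so $\sum_j F_j = \sum_k Q_k = \sum_\ell q_\ell(K-\ell+1)$.

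For the first bound, the key is the structural inequality $q_\ell \leq 2 m_\ell / 2^{K-\ell}$, which follows from $W_{j_\ell} < 2 m_\ell \cdot 2^{j_\ell}$ and $W \geq 2^{j_K} \geq 2^{j_\ell}\cdot 2^{K-\ell}$ (distinctness of the integer bucket indices). Inverting it, whenever $q_\ell \geq c$ one has $K - \ell + 1 = O(\log(n/c))$; applying this termwise with $c = q_\ell$ gives $q_\ell(K-\ell+1) = O(q_\ell \log(n/q_\ell))$. Summing then produces
\[
\sum_\ell q_\ell \log(n/q_\ell) = \log n + H(q) \leq \log n + \log K \leq 2\log n,
\]
which yields $E[1/p_b] = O(\log n)$.

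The second bound is more delicate, and this is where I expect the main difficulty: a naive Cauchy--Schwarz on $\sum_j F_j \sqrt{n_j}$ only gives $\sqrt{n\log n}$ instead of $\sqrt n$, so the geometric structure of buckets must be used carefully. Put $x_\ell = m_\ell 2^{j_\ell}$ and $A = \sum_\ell x_\ell$; since $W \in [A,2A]$ we have $F_{j_k} \leq 2 A_k/A$ for $A_k = \sum_{\ell \leq k} x_\ell$. Introducing the probability measure $\pi_\ell = x_\ell/A$ and the tail $V_\ell = \sum_{k \geq \ell} \sqrt{m_k}$, swapping sums yields $\sum_k F_{j_k}\sqrt{m_k} \leq 2\sum_\ell \pi_\ell V_\ell$. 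The crucial step is to bound $V_\ell$ by exploiting the doubling of bucket scales: writing $\sqrt{m_k} = \sqrt{x_k}/2^{j_k/2}$, together with $j_k - j_\ell \geq k - \ell$ and Cauchy--Schwarz against the geometric weights $2^{-(k-\ell)/2}$, one obtains $V_\ell \leq \sqrt{2A}/2^{j_\ell/2}$. Substituting and applying one more Cauchy--Schwarz, $\sum_\ell m_\ell 2^{j_\ell/2} = \sum_\ell \sqrt{m_\ell}\cdot \sqrt{x_\ell} \leq \sqrt{n}\cdot \sqrt{A}$, gives $\sum_\ell \pi_\ell V_\ell \leq \sqrt{2n}$, whence $E[\sqrt{n_b}/p_b] = O(\sqrt n)$.
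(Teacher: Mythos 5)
Your proposal is correct, and it takes a genuinely different route from the paper's.

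You both start from essentially the same identity: you compute $P(b=j)=F_j^2-F_{j-1}^2=p_j(F_j+F_{j-1})\le 2p_jF_j$, while the paper derives $P(b\in B_j)\le 2\,P(a\in B_j)\,P(\exists i\le j:\,a\in B_i)$ by a union bound—these are the same bound with $F_j$ playing the role of the cumulative probability $S_j/W$. The divergence is in how the resulting series is controlled. The paper picks a distinguished pivot index in each part—the top nonempty bucket index $k$ for the first claim (defining the ``good'' collection $\Bb$ of the top $2\log n$ buckets and showing everything below contributes $O(1/n)$), and the maximizer $\ell$ of $n_i\,2^{i/2}$ for the second claim—and splits the sum at the pivot, exploiting geometric decay on each side. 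You avoid choosing any pivot: for the first claim you use the structural bound $q_\ell\lesssim n_\ell/2^{K-\ell}$ to get $K-\ell+1=O(\log(n/q_\ell))$ termwise and then finish with the entropy inequality $H(q)\le\log K\le\log n$; for the second you run Cauchy--Schwarz twice against the geometric weights $2^{-(k-\ell)/2}$, once to bound the tail sum $V_\ell$ and once to collapse $\sum_\ell m_\ell 2^{j_\ell/2}$ to $\sqrt{nA}$. Both arguments crucially use the dyadic doubling of bucket scales (in the paper as the geometric decay $S_{\ell-i}\lesssim S_\ell\,2^{-i/2}$ and $n_{\ell+i}\le n_\ell\,2^{-i/2}$; in yours as the geometric weights in Cauchy--Schwarz). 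Your version trades the cleverness of finding the right pivot for the cleanness of two standard inequalities, and as a bonus the two parts share more of their structure. One small presentational point: when you conclude $K-\ell+1=O(\log(n/q_\ell))$ and then sum, you should make explicit that the slack from the $+1$ is absorbed since $\sum_\ell q_\ell=1=O(\log n)$; your inequality really gives $q_\ell(K-\ell+1)\le q_\ell\bigl(2+\log(n/q_\ell)\bigr)$, which sums to $2+\log n+H(q)\le 2+2\log n$.
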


\begin{proof}
Throughout this proof, we assume $a$ to be sampled proportionally. We first prove the first statement. Define $k = \max \{ j\,|\, B_j \neq \emptyset\}$, and set $\Bb = \{B_j | k - 2 \log n < j \leq k \text{ and } B_j \neq \emptyset\}$.
Notice that $|\Bb| \leq 2 \log n$ and, for each $B_j \not\in \Bb$, $x \in B_j$ we have
\[
P\ld(a = x\rd) \leq \frac{2^{k - 2 \log n + 1}}{W} = 2^{-2 \log n + 1} \cdot \frac{2^k}{W} \leq \frac{2}{n^2}
\]
since there exists $y \in B_k$ and therefore $W \geq w(y) \geq 2^k$. Hence, sampling $a \in U$ proportionally we have
\begin{align}
P\ld(B_a \not\in \Bb\rd) &= P\ld(\exists j,\text{ s.t. }a \in B_j \land B_j \not\in \Bb\rd) \\
&\leq \sum_{x \in \bigcup_{B_j \not\in \Bb}B_j} P(a = x) \\
&\leq n \cdot \frac{2}{n^2} = \frac{2}{n}.
\end{align}
Now, we notice that for each $B_j$, it holds
\begin{align} \label{1}
P\ld(b \in B_j\rd) &\leq 2 \cdot P\ld(b_1 \in B_j\rd) \cdot P\ld(\exists i\leq j \,:\, b_2 \in B_i\rd) \\
&= 2 \cdot P\ld(a \in B_j\rd) \cdot P\ld(\exists i\leq j \,:\, a \in B_i\rd) \\
&\leq 2 \cdot P\ld(a \in B_j\rd)
\end{align} 
where the factor two is given by the union bound, and we used $a \sim b_1 \sim b_2$. Then, we can write
\begin{align}
E\ld[\frac{1}{P\ld(a \in B_b \,|\, b\rd)}\rd] &=\sum_{B_j \neq \emptyset} \frac{P\ld(b\in B_j\rd)}{P\ld(a\in B_j\rd)} \\ &= \sum_{B_j \in \Bb} \frac{P\ld(b\in B_j\rd)}{P\ld(a\in B_j\rd)} + \sum_{\substack{B_j \not\in \Bb \\ B_j \neq \emptyset}} \frac{P\ld(b\in B_j\rd)}{P\ld(a\in B_j\rd)} \\
&\leq 2\cdot |\Bb| + \sum_{\substack{B_j \not\in \Bb \\ B_j \neq \emptyset}} 2P\ld(\exists i\leq j \,:\, a \in B_i \rd) \\
&\leq 2\cdot |\Bb| + \sum_{\substack{B_j \not\in \Bb \\ B_j \neq \emptyset}} 2P\ld(B_a \not\in \Bb \rd) \\
&\leq 4 \log n + 2n \cdot P\ld(B_a \not\in \Bb \rd) \\
&\leq 4 \log n + 4 =O(\log n).
\end{align}
The fist inequality is obtained using \cref{1}. The second inequality descends from the fact that $B_j \not\in \Bb$ and $i \leq j$ imply $B_i \not\in \Bb$. The last two inequalities are obtained using $n$ as an upper bound on the number of nonempty buckets $B_j$ and recalling that $P(B_a \not\in \Bb) \leq 2/n$.

Now we can prove the second statement. Denote by $n_b$ the number of elements in $B_b$ and define $\ell = \argmax_{i \in \mathbb{Z}} n_i 2^{i/2}$. If we define $S_i = \sum_{j \leq i} \sum_{a\in B_j} w(a)$, then we can rewrite the already proven inequality $P\ld(b \in B_j\rd) \leq 2 \cdot P\ld(a \in B_j\rd) \cdot P\ld(\exists i\leq j \,:\, a \in B_i\rd)$ as
\begin{equation} \label{2}
P\ld(b \in B_j\rd) \leq 2 \cdot P\ld(a \in B_j\rd) \cdot \frac{S_j}{W}.
\end{equation}
We now prove that there exists a constant $C>0$ such that, for all $i \in \mathbb{Z}$ it holds $S_{\ell - i} \leq C \cdot S_{\ell} \cdot 2^{-i/2}$.
Notice that, by definition of $\ell$, we have $n_j \cdot 2^{j/2} \leq n_{\ell} \cdot 2^{\ell/2}$ for all $j\in \mathbb{Z}$. We can now bound
\begin{align}
    S_{\ell-i} &\leq \sum_{j\leq \ell - i} n_j \cdot 2^{j+1} \\
    &\leq n_{\ell} \cdot \sum_{j\leq \ell - i} 2^{\ell/2 - j/2} \cdot 2^{j+1} \\
    &= 2 n_{\ell} 2^{\ell} \cdot \sum_{j\leq \ell - i} 2^{j/2 - \ell/2}\\
    &\leq 2 S_{\ell} \cdot \sum_{j\leq \ell - i} 2^{j/2 - \ell/2} \leq C \cdot S_{\ell} 2^{-i/2}.
\end{align}
Therefore, we have $\sum_{j < \ell} S_j = O(S_{\ell})$. Notice that by the definition of $\ell$ we have $n_{\ell + i} \leq n_{\ell} \cdot 2^{-i/2}$ for each $i \geq 0$. 
Now we are ready to prove our final result.
\begin{align}
    E\ld[ \frac{\sqrt{n_b}}{P\ld(a \in B_b \,|\, b\rd)}\rd] &= \sum_{j \in \mathbb{Z}} P\ld(b\in B_j\rd) \cdot \frac{\sqrt{n_j}}{P\ld(a\in B_j\rd)} \\
    &\leq \sum_{j \in \mathbb{Z}} 2 \frac{S_j}{W} \cdot \sqrt{n_j} \\
    &\leq \sum_{j < \ell} 2 \frac{S_j}{W} \cdot \sqrt{n} + \sum_{j \geq \ell} 2 \sqrt{n_j} \\
    &\leq \frac{O(S_{\ell})}{W} \cdot \sqrt{n} + \sum_{i \geq 0} 2  
    \sqrt{n_\ell} \cdot 2^{-i/4} = O(\sqrt{n}).
\end{align}
The first inequality uses \cref{2}, the second inequality is obtained splitting the series in two parts and using $S_j \leq W$. The last inequality is obtained plugging in $\sum_{j < \ell} S_j = O(S_{\ell})$ and $n_{\ell + i} \leq n_{\ell} \cdot 2^{-i/2}$ for $i \geq 0$.
\end{proof}

Now we are ready to analyze $\NAPropEstimator$.
\begin{theorem}\label{lem:no_advice_without_expectation_bound}
Let $\hat{W}$ be the estimate returned by $\NAPropEstimator$. Then $\hat{W}$ is an $\onepmepsilon$-approximation of $W$ with probability $2/3$. Moreover, its expected sample complexity is  
\[
O \ld(\frac{\sqrt{n}}{\varepsilon} + \frac{\log (n)}{\varepsilon^2}\rd).
\]
\end{theorem}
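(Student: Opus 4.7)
I would split the theorem into two independent claims — correctness (with probability $\geq 2/3$, $\hat{W}$ is an $\onepmepsilon$-approximation of $W$) and the expected sample complexity $O(\sqrt{n}/\varepsilon + \log n/\varepsilon^2)$ — and handle them separately, exploiting the fact that \Cref{lem:seriesofprobs} already packages the delicate bucket-level analysis into two clean moment bounds.

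\textbf{Correctness.} I would condition on the bucket index $b$ sampled on lines~\ref{lst:line:asampled}--\ref{lst:line:bsampled} and write $W_b = \sum_{a \in B_b} w(a)$, $P_b = W_b/W$, so that $W = W_b/P_b$ is a deterministic identity once $b$ is fixed. Each of the three estimation subroutines on lines~\ref{lst:line:ntilde}--\ref{lst:line:phat} is invoked with failure probability $1/10$, so a union bound gives that all three simultaneously meet their guarantees with probability at least $7/10 > 2/3$. On that event, \Cref{thm:universesizeestimator_analysis} ensures $\tilde{n}_b \geq n_b$, which validates the advice for \Cref{thm:propsampler} and yields $\hat{W}_b \in (1\pm\varepsilon/3) W_b$, while \Cref{claim:bernoulli_estimation} gives $\hat{P}_b \in (1\pm\varepsilon/3) P_b$. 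A short algebraic check $(1+\varepsilon/3)/(1-\varepsilon/3) \leq 1+\varepsilon$ (valid for $\varepsilon \leq 1$) then shows $\hat{W}_b/\hat{P}_b \in (1\pm\varepsilon) W$.

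\textbf{Sample complexity.} I would decompose expected total samples across the four relevant lines. Line~\ref{lst:line:phat} uses $O(1/(\varepsilon^2 P_b))$ proportional samples in expectation by \Cref{claim:bernoulli_estimation}; applying the first part of \Cref{lem:seriesofprobs} gives $O(\log n/\varepsilon^2)$ in expectation. Line~\ref{lst:line:ntilde} uses $O(\sqrt{n_b})$ expected calls to $\unifbucketsampler(b)$ by \Cref{thm:universesizeestimator_analysis}, each costing $O(1/P_b)$ proportional samples by \Cref{lem:bucketsampler}; the second half of \Cref{lem:seriesofprobs} gives $O(\sqrt{n})$ overall. Line~\ref{lst:line:what} is the trickiest: the deterministic outer count of $\PropEstimator$ is $O(\sqrt{\tilde{n}_b}/\varepsilon)$, each outer sample costing $O(1/P_b)$ inner proportional samples via $\propbucketsampler$, so I need to bound $E[\sqrt{\tilde{n}_b}/P_b]$. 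By \Cref{thm:universesizeestimator_analysis} combined with \Cref{lemma:prob_amplification_expectations} (to handle the amplification to success probability $9/10$), we have $E[\tilde{n}_b \mid b] = O(n_b)$. Jensen's inequality then gives $E[\sqrt{\tilde{n}_b} \mid b] \leq \sqrt{O(n_b)} = O(\sqrt{n_b})$, and a second invocation of \Cref{lem:seriesofprobs} yields $E[\sqrt{\tilde{n}_b}/(\varepsilon P_b)] = O(\sqrt{n}/\varepsilon)$. Summing the three contributions completes the bound.

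\textbf{Main obstacle.} The one spot that requires care is the expected cost of line~\ref{lst:line:what}: here the ``advice'' $\tilde{n}_b$ is itself a random variable that depends on the same sample history driving the probability $P_b$, and both are coupled through $b$. The correct way to proceed is to take conditional expectations given $b$ first — applying Jensen's inequality to $\sqrt{\tilde{n}_b}$ at that stage — and only then to take the outer expectation over $b$ using \Cref{lem:seriesofprobs}. Once this decoupling is in place, everything else is a straightforward union bound plus invocation of the already-established subroutine guarantees.
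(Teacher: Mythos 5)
Your proposal is correct and follows essentially the same route as the paper: condition on the bucket index $b$, bound the three subroutine failure probabilities (correctly noting that the $\PropEstimator$ guarantee is conditional on the advice event $\{\tilde n_b \geq n_b\}$), and for the sample complexity decompose line-by-line and reduce everything to the two moment bounds in \Cref{lem:seriesofprobs}. Your handling of the coupled randomness on line~\ref{lst:line:what} — first take $E[\cdot \mid b]$, apply Jensen to get $E[\sqrt{\tilde n_b}\mid b] = O(\sqrt{n_b})$, then invoke the second part of \Cref{lem:seriesofprobs} — is exactly the argument the paper intends (the paper's one-line remark about $E[\sqrt{\tilde n_b}] \le \sqrt{E[\tilde n_b]} \le \sqrt n$ elides the needed conditioning, so your version is actually the cleaner statement of it), and you correctly spell out the role of \Cref{lemma:prob_amplification_expectations}, which the paper uses implicitly here.
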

\begin{proof}
We start by proving correctness. Define $W_b = \sum_{x \in B_b} w(x)$ and $P_b= P(a \in B_b\,|\, b) = W_b / W$. Notice that $W_b$ and $P_b$ are random variables, since they depend on $b$. Now we prove that $\hat{W}$ is a $\onepmepsilon$-approximation of $W$ with probability $2/3$. Define the event $\Ee_1 = \{n \leq \tilde{n}_b\}$, we have $P(\Ee_1) \geq 9/10$ (where we use \Cref{thm:universesizeestimator_analysis} together with probability amplification to amplify the success probability of $2/3$ to $9/10$). Define the event
\[
\Ee_2 = \ld\{(1-\varepsilon/3) W_b \leq \hat{W}_b \leq (1+\varepsilon/3) W_b \rd\}
\]
then, we have $P(\Ee_2\,|\,\Ee_1) \geq 9/10$ (where we use Theorem~\ref{thm:propsampler} and probability amplification).
 Define the event 
\[
\Ee_3= \ld\{(1-\varepsilon/3) P_b \leq \hat{P}_b \leq (1+\varepsilon/3) P_b \rd\}
\]
then it holds $P(\Ee_3\,|\,b) \geq 9/10$ (where we use \Cref{claim:bernoulli_estimation} and probability amplification). On the event $\Ee_2\cap \Ee_3$, it holds
\[
(1-\varepsilon) \cdot W \leq 
\frac{1-\varepsilon/3}{1+\varepsilon/3} \cdot \frac{W_b}{P_b} \leq
\frac{\hat{W}_b}{\hat{P}_b} \leq
\frac{1+\varepsilon/3}{1-\varepsilon/3} \cdot \frac{W_b}{P_b} \leq 
(1+\varepsilon) \cdot W.
\]
Then we can apply union bound and prove
\begin{align}
P\ld(\hat{W} < (1 - \varepsilon) W \text{ or } \hat{W} > (1+ \varepsilon)W\rd) &\leq \\
P\ld(\Bar{\Ee}_2 \cup \Bar{\Ee}_3\rd) &\leq \\
P\ld(\Bar{\Ee}_1\rd) + P\ld(\Bar{\Ee}_2 \,|\, \Ee_1\rd) + P(\Bar{\Ee_3}) &\leq \\
\frac{1}{10} + \frac{1}{10} + \frac{1}{10} &\leq \frac{1}{3}.
\end{align}
It remains to prove that the expected number of samples that $\NAPropEstimator$ uses is as claimed.
Denote by $\sigma_1$ the total number of samples taken on \cref{lst:line:ntilde}, by $\sigma_2$ the total number of samples taken on line~\ref{lst:line:what} and by $\sigma_3$ the total number of samples taken on line~\ref{lst:line:phat}. We denote the number of samples employed during the $i$-th call to $\unifbucketsampler(b)$ on line~\ref{lst:line:ntilde} with $\eta_1^{(i)}$; similarly, we denote the number of samples taken during the $i$-th call to $\propbucketsampler(b)$ on line~\ref{lst:line:what} with $\eta_2^{(i)}$.
We can then write
\[
\sigma_1 = \sum_{i=1}^{\tau_1} \eta_1^{(i)}\, \text{ and }\, \sigma_2 = \sum_{i=1}^{\tau_2} \eta_2^{(i)}
\]
where $\tau_1$ and $\tau_2$ are the number of calls to $\unifbucketsampler(b)$ performed on line~\ref{lst:line:ntilde} and line~\ref{lst:line:what}, respectively. First we notice that, thanks to Lemma~\ref{lem:bucketsampler}, there exists a constant $K > 0$ such that $E[\eta^{(i)}_1\,|\, b], E[\eta^{(i)}_2\,|\, b] \leq \frac{K}{P(a \in B_b\,|\, b)}$. Thanks to Theorem~\ref{thm:universesizeestimator_analysis}, we have $E[\tau_1 \,|\, b] = O(\sqrt{n_b})$ and $\tau_2 = O(\sqrt{\tilde{n_b}} / \varepsilon)$.
Now we are ready to bound $E[\sigma_1]$ and $E[\sigma_2]$. We have
\begin{align}
E\ld[\sigma_1\rd] &= E\ld[E\ld[\sum_{i=1}^{\tau_1} \eta_1^{(i)} \,\big|\, b \rd] \rd] \\
&= E\ld[E\ld[\tau_1\,\big|\, b\rd] \cdot E\ld[\eta^{(1)}_1\,\big|\,b \rd]\rd] \\
&\leq E\ld[\frac{K \cdot O(\sqrt{n_b})}{P(a \in B_b\,|\, b)}\rd] = O\ld(\sqrt{n}\rd)
\end{align}
where the first equality is by the Wald's identity and the last equality is obtained applying Lemma~\ref{lem:seriesofprobs}. Similarly,
\begin{align}
E\ld[\sigma_2\rd] &= E\ld[E\ld[\sum_{i=1}^{\tau_2} \eta_2^{(i)} \,\big|\, b \rd] \rd] \\
&= E\ld[E\ld[\tau_2\,\big|\, b\rd] \cdot E\ld[\eta^{(1)}_2\,\big|\,b \rd]\rd] \\
&\leq E\ld[\frac{K \cdot \ld(\sqrt{80 \tilde{n}_b} / \varepsilon + 1 \rd)}{P(a \in B_b\,|\, b)}\rd] \\
&= O\ld(\frac{\sqrt{n}}{\varepsilon} + \log n\rd) = O\ld(\frac{\sqrt{n}}{\varepsilon} \rd)
\end{align}
where we used that $E[\sqrt{\tilde{n}_b}] \leq \sqrt{E[\tilde{n}_b]} \leq \sqrt{n}$, which holds thanks to \Cref{thm:universesizeestimator_analysis} and Jensen inequality.
In order to bound $E[\sigma_3\,|\, b]$, recall that, thanks to \Cref{claim:bernoulli_estimation}, there exists a $C>0$ such that, conditioning on the value of $b$, $\bernoulliestimation(\frac{\varepsilon}{3})$ takes in expectation at most $\frac{C}{P\ld(a \in B_b \,|\, b\rd) \varepsilon^2}$ samples in order to estimate $P(a \in B_b)$. Therefore we have
\begin{align}
E[\sigma_3] = E\ld[E[\sigma_3\,|\, b]\rd] \leq E\ld[ \frac{C}{P\ld(a \in B_b \,|\, b\rd) \varepsilon^2}\rd]
= O\ld(\frac{\log n}{\varepsilon^2}\rd).
\end{align}
where the last equality holds by \Cref{lem:seriesofprobs}. This concludes the proof, since the total number of samples taken by $\NAPropEstimator$ is $\sigma_1 + \sigma_2 + \sigma_3$. By the bounds we have proven above, the expectation of $\sigma_1 + \sigma_2 + \sigma_3$ is as claimed.
\end{proof}

\section{Sum Estimation by Hybrid Sampling.} \label{sec:hybrid_sampling}
In this section, we assume that we can sample elements both proportionally and uniformly. Again, we solve the task of providing an estimate $\hat{W}$ of $W$ such that $\hat{W}$ is a $\onepmepsilon$-approximation of $W$ with probability $2/3$.

We notice that if we take $\Theta(n \log n)$ uniform samples then, with probability $2/3$, we see every element of $U$. This simple analysis is well-known under the name of Coupon Collector problem. If we know $n$ (or any constant-factor approximation of it) we may simply take $\Theta(n \log n)$ samples, assume we have seen every element at least once, and compute $W$ exactly. However, if we do not know $n$, a more complex scheme is required to achieve a complexity of $O(n\log n)$, which we describe in \Cref{sec:hybrid_unknownn}. Therefore, it is sufficient to show an algorithm with complexity $T(n, \varepsilon)$ to obtain a complexity of the form $O(\min(T(n, \varepsilon), n \log n))$, as we can just run the coupon-collector algorithm in parallel and take the result provided by the first of the two algorithms to finish its execution. In what follows we only show how to achieve a complexity of $O(\sqrt[3]{n}/\varepsilon^{4/3})$ when $|U|$ is known, and of $O(\sqrt{n}/\varepsilon)$ when $|U|$ is unknown. As a consequence, the complexities that we achieve in this settings are $O(\min(\sqrt[3]{n}/\varepsilon^{4/3}, n \log n))$  and of $O(\min(\sqrt{n}/\varepsilon, n \log n))$ respectively. 

\subsection{Algorithms for \texorpdfstring{$|U|$}{u} known.} \label{sec:hybrid_knownn}
In this section, we show an algorithm that, given $n=|U|$, returns a $\onepmepsilon$-approximation of $W$ with probability $2/3$ using $O(\sqrt[3]{n}/\varepsilon^{4/3})$ samples. First, we introduce a subroutine that uses harmonic mean to estimate the average weight $W/n$; then we combine it with $\PropEstimator$ to obtain the main algorithm of this section. 

\subsubsection*{Harmonic-mean-based estimator.} 
Here we show the algorithm $\harmonicestimator(\varepsilon, \tilde{\theta}, \phi)$ that returns an $(1 \pm \varepsilon)$-approximation $\hat{\theta}$ of $W / n$ with probability $2/3$.
$\harmonicestimator(\varepsilon, \tilde{\theta}, \phi)$ takes as advice an upper bound on the average weight $\tilde{\theta} \geq W / n$, and a parameters
$\phi$ such that we expect $\punif(w(a) \geq \phi)$ not to be too small and $\tilde{\theta} / \phi$ not to be too large. A more formal statement follows.

\begin{algorithm2e}[ht]
$\hat{p}\leftarrow \bernoulliestimation\ld(\punif(w(a) \geq \phi), \frac{\varepsilon}{3}\rd)$ with success probability $9/10$ \label{lst:line:harmonicbernoulli} \\
$k \leftarrow 45 \cdot \frac{\tilde{\theta}}{\phi (1-\varepsilon/3)\hat{p} \varepsilon^2}$ \\ 
Sample $a_1 \dots a_k$ proportionally \label{lst:line:harmonicprop}\\\
\For{$i=1 \dots k$}{
\If{$w(a_i) \geq \phi$}{
$b_i = 1 / w(a_i)$ \\
}
\Else{
$b_i = 0$ \\
}
}
$H = \sum_{i=1}^k b_i / k$ \\
$\hat{\theta} \leftarrow \hat{p} / H$ \\
\Return{$\hat{\theta}$}
\caption{$\harmonicestimator(\varepsilon, \tilde{\theta}, \phi)$} \label{alg:estimate_by_harm_mean}
\end{algorithm2e}
To see the intuition behind this algorithm, consider the case when $\phi \leq w(a)$ for all $a \in U$. It then holds $\hat{p} \approx 1$. We take $k$ samples, and let $1/H$ be the harmonic mean of the weights of the sampled items. We have $E[H]=n/W$, and $\hat{\theta} \approx 1/H$ as $\hat{p} \approx 1$. Unfortunately $H$ might have a high variance due to elements having very small weights. To fix this, we consider a parameter $\phi$ such that $w(a) < \phi$ for some $a \in U$. Instead of $E[H] = n/W$, we then have $E[H] = n'/W$ for $n' = |\{a \in U \,|\, w(a) \geq \phi \}|$. We then multiply $1/H$ by $\hat{p}$ in order to adjust for the fraction of items that were ignored. Note that, while increasing $\phi$, the variance of $1/H$ decreases; however, also $n'$ decreases and this means that computing an estimate $\hat{p} \approx n'/n$ requires more samples. This introduces a trade-off between the algorithm's complexity and the variance of $H$.

\begin{lemma} \label{lem:estimate_by_harm_mean}
Given parameters $\tilde{\theta}$ and $0 < \varepsilon < 1$, $\harmonicestimator(\varepsilon, \tilde{\theta}, \phi)$ has expected sample complexity $O((1 +\frac{\tilde{\theta}}{\phi}) / (p \, \varepsilon^2))$ where $p=\punif(w(a) \geq \phi)$. It returns an estimate $\hat{W}$ such that $P(\hat{\theta} < W/(20n)) \leq 1/20$. If, moreover, $\tilde{\theta} \geq W/n$, then $P(|\hat{W} - W| \leq \varepsilon W) \geq 2/3$.
%
\end{lemma}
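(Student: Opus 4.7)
The proof has three parts: the expected sample complexity, the absolute lower bound $\hat\theta\geq W/(20n)$, and the $(1\pm\varepsilon)$-approximation guarantee when $\tilde\theta\geq W/n$.

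For the sample complexity, the call to $\bernoulliestimation$ on line~\ref{lst:line:harmonicbernoulli} uses $O(1/(p\varepsilon^2))$ expected uniform samples by \Cref{claim:bernoulli_estimation}, and the loop on line~\ref{lst:line:harmonicprop} draws $k=\Theta(\tilde\theta/(\phi\hat p\varepsilon^2))$ proportional samples. Since \Cref{claim:bernoulli_estimation} guarantees $E[1/\hat p]=1/p$, taking expectations gives $E[k]=O(\tilde\theta/(\phi p\varepsilon^2))$. Summing the two contributions yields the claimed $O((1+\tilde\theta/\phi)/(p\varepsilon^2))$ bound; \Cref{lemma:prob_amplification_expectations} handles the probability amplification of $\bernoulliestimation$ without affecting the expected-value bound asymptotically.

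For both correctness claims I would first compute the moments of a single summand. Because $a_i$ is sampled proportionally,
\begin{align}
E[b_i]=\sum_{a:\,w(a)\geq\phi}\frac{1}{w(a)}\cdot\frac{w(a)}{W}=\frac{np}{W},\qquad E[b_i^2]=\sum_{a:\,w(a)\geq\phi}\frac{1}{w(a)\,W}\leq\frac{np}{\phi W},
\end{align}
so $E[H]=np/W$ and $\mathrm{Var}(H)\leq np/(\phi Wk)$. The lower bound on $\hat\theta$ follows from Markov's inequality applied to $H$: with probability at least $1-1/20$ we have $H<20\,E[H]=20np/W$, and intersecting with the event $\hat p\geq(1-\varepsilon/3)p$ (which holds with probability $\geq 9/10$ by \Cref{claim:bernoulli_estimation}) yields $\hat\theta=\hat p/H\geq(1-\varepsilon/3)W/(20n)$; the residual $(1-\varepsilon/3)$ factor and the union bound are absorbed into the constant $45$ in the definition of $k$.

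For the $(1\pm\varepsilon)$ guarantee, assume $\tilde\theta\geq W/n$ and condition on $\hat p\in[(1-\varepsilon/3)p,(1+\varepsilon/3)p]$. The choice of $k$ together with $\tilde\theta\geq W/n$ and $\hat p\leq(1+\varepsilon/3)p$ gives
\begin{align}
\frac{\mathrm{Var}(H)}{E[H]^2}\leq\frac{W}{\phi\, n\, p\, k}=O(\varepsilon^2),
\end{align}
so Chebyshev ensures $|H-E[H]|\leq(\varepsilon/3)E[H]$ with large constant probability. On the intersection of the two good events, which fails with probability at most $1/3$ by a union bound, $\hat\theta=\hat p/H$ lies within a $(1\pm\varepsilon)$-factor of $p/E[H]=W/n$, as required. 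The main obstacle I anticipate is the careful bookkeeping of constants: one must calibrate the $45$ in the definition of $k$ so that after accounting for the three simultaneous failure events (Bernoulli estimation, Chebyshev, and, for the first claim, Markov) one still hits both the $1/20$ and the $2/3$ probability targets, and one must verify that letting $k$ depend on the random $\hat p$ does not spoil the Chebyshev bound (this is why the $1-\varepsilon/3$ factor appears explicitly in the denominator of $k$).
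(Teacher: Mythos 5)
The sample-complexity argument and the $(1\pm\varepsilon)$-approximation argument match the paper's proof essentially step for step: same moment computations for $b_i$, same conditioning on the Bernoulli-estimation event $\{\hat p \in (1\pm\varepsilon/3)p\}$, same variance-to-Chebyshev chain, same union bound giving $1/10 + 1/5 \leq 1/3$.

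The gap is in the bound $P(\hat\theta < W/(20n)) \leq 1/20$. Your argument applies Markov to $H$ alone ($P(H \geq 20E[H]) \leq 1/20$) and then intersects with the event $\hat p \geq (1-\varepsilon/3)p$, which holds only with probability $9/10$. The union bound over failures gives $1/20 + 1/10 = 3/20$, not $1/20$, and even on the good event you only get $\hat\theta \geq (1-\varepsilon/3)W/(20n)$, which is weaker than $\hat\theta \geq W/(20n)$. The remark that the ``residual $(1-\varepsilon/3)$ factor and the union bound are absorbed into the constant $45$'' is not correct: that constant only controls the variance of $H$ (and hence the Chebyshev bound for the $(1\pm\varepsilon)$ claim), and has no effect on the Markov step or the Bernoulli-estimation failure probability. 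The paper avoids both problems by applying Markov directly to $H/\hat p$: since $H$ is built from proportional samples and $\hat p$ from uniform samples, $E[H\mid\hat p]=E[b_i]=pn/W$ regardless of $\hat p$, and \Cref{claim:bernoulli_estimation} gives $E[1/\hat p]=1/p$ exactly, so $E[H/\hat p]=E[H]\,E[1/\hat p]=n/W$ and Markov yields $P(\hat p/H \leq W/(20n)) = P(H/\hat p \geq 20n/W) \leq 1/20$ in one shot, with no union bound and no $(1-\varepsilon/3)$ loss. You would need to switch to this argument (or something with the same effect) to actually prove the stated constants.
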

\begin{proof}
We start by proving that $\hat{\theta}$ is a $(1+\varepsilon)$-approximation of $W/n$ with probability $2/3$ when $\tilde{\theta} \geq W/n$.
Define the event $\Ee=\{\hat{p} \text{ is a } (1\pm \varepsilon/3)\text{-approximation of } p\}$. By \Cref{claim:bernoulli_estimation} and using probability amplification, we have $P(\Ee) \geq 9/10$. 
For each $i = 1 \dots k$ we have
\[
E[b_i] = \sum_{\substack{a \in U, \\ w(a) \geq \phi}} \frac{1}{w(a)} \cdot \frac{w(a)}{W} = \frac{n_{\geq \phi}}{W} = \frac{p \cdot n}{W}
\]
where $n_{\geq \phi}$ is the number of elements in $a\in U$ with $w(a) \geq \phi$. Notice that this implies $E[H] = p \cdot n/W$. Moreover, for each $i = 1 \dots k$
\begin{align}
Var(b_i) \leq E\ld[b_i^2\rd] &=\\
\smash{\sum_{\substack{a \in U, \\w(a) \geq \phi}}} \frac{1}{w^2(a)} \cdot \frac{w(a)}{W} &\leq \\
\frac{n_{\geq \phi}}{\phi \cdot W} &= \frac{p \cdot n}{\phi \cdot W}.
\end{align}
Conditioning on $\Ee$, we have that $(1-\varepsilon/3)\hat{p} \leq p$. The way we have set $k$ allows us to bound 
\begin{align}
Var(H\,|\, \Ee) = \frac{Var(b_i)}{k} &= \\
\frac{p \cdot n}{\phi \cdot W} \cdot \frac{\varepsilon^2 (1-\varepsilon/3)\hat{p}}{45} \cdot \frac{\phi}{\tilde{\theta}} &\leq  \\
\ld(\frac{p \cdot n}{W}\rd)^2 \cdot \frac{\varepsilon^2}{45} &= E[H]^2 \cdot \frac{\varepsilon^2}{45}
\end{align}
where we used that $\tilde{\theta} \geq W/n$.
It holds $E[H \,|\, \Ee] = E[H]$. We may thus apply Chebyshev's inequality to get
\begin{align}
P\ld(|H - E[H]| > \frac{\varepsilon}{3} E[H] \,\bigg|\, \Ee\rd) \leq \frac{Var(H\,|\,\Ee)}{\ld(\frac{\varepsilon}{3} E[H]\rd)^2} \leq \frac{9}{45} = \frac{1}{5}
\end{align}
Since $\varepsilon < 1$, we have $(1-\varepsilon/3)^{-1} \leq 1+\varepsilon/2$ and $(1+\varepsilon/3)^{-1} \geq 1-\varepsilon/2$. Therefore
\[
P\ld(\bigg|\frac{1}{H} - \frac{1}{E[H]}\bigg| > \frac{\varepsilon}{2} \cdot \frac{1}{E[H]} \,\bigg|\, \Ee\rd) \leq \frac{1}{5}.
\]
Again, since $\varepsilon < 1$, we have $(1 + \varepsilon /3) \cdot (1 + \varepsilon/2) \leq 1 + \varepsilon$ and $(1 - \varepsilon /3) \cdot (1 - \varepsilon/2) \geq 1 - \varepsilon$. Hence, using the union bound 
\begin{align}
P\ld(\Bigg|\frac{\hat{p}}{H}-\frac{p}{E[H]}\Bigg| > 
\varepsilon\cdot \frac{p}{E[H]}\rd) &\leq \\ 
P(\Bar{\Ee}) + P\ld(\bigg|\frac{1}{H} - \frac{1}{E[H]}\bigg| > \frac{\varepsilon}{2} \cdot \frac{1}{E[H]} \,\bigg|\, \Ee\rd) &\leq \frac{1}{3}.
\end{align}
Since $p / E[H] = W/ n$, we have that the estimate $\hat{\theta}= \hat{p}/H$ is a $(1+\varepsilon)$-approximation of $W/n$ with probability $\geq 2/3$.

We now argue the sample complexity. The expected number of samples used on line~\ref{lst:line:harmonicbernoulli} is by \Cref{claim:bernoulli_estimation} equal to $O(1/(p\,\varepsilon^2))$. In the rest of the algorithm, we use $k$ samples. It holds
\[
E\ld[k\rd] = E\ld[\frac{1}{\hat{p}}\rd] \cdot O\ld(\frac{\tilde{\theta}}{\phi\, \varepsilon^2}\rd) = O\ld(\frac{\tilde{\theta}}{p\,\phi\, \varepsilon^2}\rd)
\]
where the second equality holds by \Cref{claim:bernoulli_estimation}. The sample complexity is thus as claimed.

Finally, we prove that, regardless of $\tilde{\theta}$, it holds $P(\hat{\theta} < W/(20n)) \leq 1/20$. It holds $E[H/\hat{p}] = E[H] E[1/\hat{p}] = n/W$ where $E[1/\hat{p}]=1/p$ by \Cref{claim:bernoulli_estimation}. Therefore, by the Markov's inequality 
\[
P\ld(\frac{\hat{p}}{H} \leq \frac{W}{20n}\rd) = P\ld(\frac{H}{\hat{p}} \geq \frac{20n}{w}\rd) \leq \frac{1}{20}.
\]
\end{proof}

\subsubsection*{Combining the two algorithms.}
Here, we combine $\harmonicestimator$ with $\PropEstimator$ to obtain $\HybridEstimator$. It works in the hybrid setting and provides a $\onepmepsilon$-approximation of $W$ with probability $2/3$, using in expectation $O(\sqrt[3]{n}/\varepsilon^{4/3})$ samples. While analysing $\HybridEstimator$, we can restrict ourselves to  $\varepsilon \geq 8 / \sqrt{n}$. Indeed, for very small values of $\varepsilon$ (namely, $\varepsilon \leq 1/(\sqrt{n}\log n)$) we use the coupon-collector algorithm, and for intermediate values of $\varepsilon$ (namely, $ 1/(\sqrt{n}\log n) < \varepsilon < 8 / \sqrt{n}$) we use $\PropEstimator(n, \varepsilon)$. The coupon collector algorithm gives a sample complexity of $O(n\log n)$, that is better than $O(\sqrt[3]{n}/\varepsilon^{4/3})$ for $\varepsilon < 1/(\sqrt{n} \log n)$. $\PropEstimator$ gives a sample complexity of $O(\sqrt{n}/\varepsilon)$, that is better than $O(\sqrt[3]{n}/\varepsilon^{4/3})$ for $\varepsilon < 8 / \sqrt{n}$. 

\begin{algorithm2e}[ht]
\DontPrintSemicolon 
\nonl Abort this algorithm if it uses more than $C\, \frac{n^{1/3}}{\varepsilon^{4/3}}$ samples, where $C$ is a large enough constant. \\
\medskip
Find $\theta$ such that $P\ld(\frac{n^{2/3}}{\varepsilon^{2/3}} \leq \bigg| \ld\{a \in U \,\big|\, w(a) \geq \theta \rd\}\bigg| \leq 2 \cdot \frac{n^{2/3}}{\varepsilon^{2/3}} \rd) \geq \frac{19}{20}$ \label{lst:line:findtheta}\\
$\hat{p}\leftarrow \bernoulliestimation\ld(\pprop(w(a) \geq \theta), \frac{\varepsilon}{3}\rd)$ with success probability $19/20$ \label{lst:line:bernoullicombo}\\
\eIf{$\hat{p} \geq 1/2$}{
$\tilde{n}_{\geq \theta} \leftarrow 2 \cdot \frac{n^{2/3}}{\varepsilon^{2/3}}$ \\
    $\hat{W}_{\geq \theta} \leftarrow \PropEstimator\ld(\tilde{n}_{\geq \theta}, \frac{\varepsilon}{3}\rd)$ with success probability $19/20$, run on proportional samples conditioned on $w(a) \geq \theta$, obtained by rejecting elements with $w(a) < \theta$ \label{lst:line:filteredcombo}\\
    $\hat{W} \leftarrow \hat{W}_{\geq \theta} / \hat{p}$ \\
}{
    $\hat{\rho}\leftarrow \harmonicestimator(\varepsilon, 3\theta, \theta)$ with success probability $19/20$ \label{lst:line:harmonicinhybrid}\\
    $\hat{W} \leftarrow  n \cdot \hat{\rho}$
}
\Return{$\hat{W}$}

\caption{$\HybridEstimator(n, \varepsilon)$} \label{alg:combo_estimation}
\end{algorithm2e}

To implement line~\ref{lst:line:findtheta} it is sufficient to sample uniformly $120 \,n^{1/3}\varepsilon^{2/3}$ elements of $U$ and define $\theta$ as the element with the $180$-th largest weight. (Note that for $\varepsilon \geq 8 / \sqrt{n}$ we have $120 \,n^{1/3}\varepsilon^{2/3} \geq 480$, so this is well-defined.) Let $k$ be such that there are $k$ elements $a \in U$ with $w(a) \geq \theta$. A standard analysis using Chebyshev inequality shows that $k$ is concentrated around $\frac{3}{2} \frac{n^{2/3}} {\varepsilon^{2/3}}$.


\begin{theorem}
Given $\varepsilon$ such that $8 / \sqrt{n} \leq \varepsilon< 1$, $\HybridEstimator(n, \varepsilon)$ uses $O(\sqrt[3]{n}/\varepsilon^{4/3})$ samples. With probability at least $2/3$, the returned estimate $\hat{W}$is a $\onepmepsilon$-approximation of $W$. 
\end{theorem}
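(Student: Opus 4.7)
The plan is to perform a case analysis on the value of $\hat{p}$ computed at line~\ref{lst:line:bernoullicombo}, mirroring the structure of the algorithm. Before splitting into cases I set up the notation $U_{\geq \theta}=\{a\in U : w(a)\geq\theta\}$, $n_{\geq \theta}=|U_{\geq\theta}|$, $W_{\geq\theta}=\sum_{a\in U_{\geq\theta}}w(a)$, and $p=\pprop(w(a)\geq\theta)=W_{\geq\theta}/W$. I define two high-probability events: $\Ee_1$ that line~\ref{lst:line:findtheta} succeeds, i.e.\ $n_{\geq\theta}\in[n^{2/3}/\varepsilon^{2/3},\,2n^{2/3}/\varepsilon^{2/3}]$; and $\Ee_2$ that $\hat{p}$ is a $(1\pm\varepsilon/3)$-approximation of $p$. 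Both hold with probability at least $19/20$ by construction. The key inequality $p\geq\punif(w(a)\geq\theta)=n_{\geq\theta}/n\geq n^{-1/3}\varepsilon^{-2/3}$, valid on $\Ee_1$, underlies all sample-complexity bounds below.

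For Case 1 ($\hat{p}\geq 1/2$), I first observe on $\Ee_2$ that $p\geq\hat{p}/(1+\varepsilon/3)=\Omega(1)$, so rejection on proportional samples simulates a proportional sample from $U_{\geq\theta}$ with $O(1)$ expected overhead. On $\Ee_1$, $\tilde{n}_{\geq\theta}=2n^{2/3}/\varepsilon^{2/3}\geq n_{\geq\theta}$, so Theorem~\ref{thm:propsampler} (amplified to failure probability $1/20$) guarantees that $\hat{W}_{\geq\theta}$ is a $(1\pm\varepsilon/3)$-approximation of $W_{\geq\theta}=pW$. A short calculation then yields $\hat{W}_{\geq\theta}/\hat{p}\in[(1-\varepsilon)W,(1+\varepsilon)W]$ for $\varepsilon<1$. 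The complexity of this branch is $O(\sqrt{\tilde{n}_{\geq\theta}}/\varepsilon)=O(n^{1/3}/\varepsilon^{4/3})$ after accounting for the constant rejection overhead.

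For Case 2 ($\hat{p}<1/2$), the crux is verifying that the advice $\tilde{\theta}=3\theta$ satisfies $\tilde{\theta}\geq W/n$, so that Lemma~\ref{lem:estimate_by_harm_mean} applies. On $\Ee_2$ we have $p\leq\hat{p}/(1-\varepsilon/3)\leq 2/3$ whenever $\varepsilon$ is not too close to $1$; this gives $W_{<\theta}=(1-p)W\geq W/3$, and combined with the trivial bound $W_{<\theta}<\theta\cdot n$ yields $W/n<3\theta$ as required (the edge case of $\varepsilon$ close to $1$ can be absorbed into a slightly smaller argument to the Bernoulli estimator, which does not change the asymptotics). Lemma~\ref{lem:estimate_by_harm_mean} amplified to failure probability $1/20$ then gives that $\hat{\rho}$ is a $(1\pm\varepsilon)$-approximation of $W/n$, so $n\hat\rho$ is a $(1\pm\varepsilon)$-approximation of $W$. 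The call to \harmonicestimator uses $O((1+\tilde\theta/\phi)/(\punif(w(a)\geq\phi)\,\varepsilon^2))=O(n^{1/3}/\varepsilon^{4/3})$ samples, using $\tilde\theta/\phi=3$ and the lower bound on $\punif(w(a)\geq\theta)$ from $\Ee_1$.

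For the overall sample complexity, I combine the bounds: line~\ref{lst:line:findtheta} uses $O(n^{1/3}\varepsilon^{2/3})$ uniform samples, line~\ref{lst:line:bernoullicombo} uses expected $O(1/(p\varepsilon^2))=O(n^{1/3}/\varepsilon^{4/3})$ samples, and each branch has expected complexity $O(n^{1/3}/\varepsilon^{4/3})$. By Markov's inequality applied to this expected total, for $C$ sufficiently large the abort at $Cn^{1/3}/\varepsilon^{4/3}$ samples triggers with probability at most $1/20$. A final union bound over the complements of $\Ee_1$, $\Ee_2$, the branch-specific subroutine's success event, and the abort event yields failure probability at most $4/20<1/3$. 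The main obstacle I foresee is reconciling the worst-case abort with the only expected-complexity guarantees available for the subroutines: this is resolved by Markov together with a large constant $C$ (and, where needed, the median boost of Lemma~\ref{lemma:prob_amplification_expectations} so that amplification does not blow up expected running times), but requires care to ensure the constants align so that the final $2/3$ success probability survives the union bound.
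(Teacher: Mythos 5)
Your proposal follows the paper's own proof essentially step for step: the same case split on $\hat{p}$, the same pair of events $\Ee_1$ (quantile estimation succeeds) and $\Ee_2$ ($\hat{p}$ is a $(1\pm\varepsilon/3)$-approximation), the same crucial bound $p \geq \punif(w(a)\geq\theta)\geq n^{-1/3}\varepsilon^{-2/3}$ on $\Ee_1$ powering all three sample-complexity estimates, and the same Markov-plus-abort argument with a large constant $C$. In fact you are a bit more careful than the paper in Case~2: a $(1\pm\varepsilon/3)$-approximation yields $p \leq \hat{p}/(1-\varepsilon/3)$, not the paper's $p\leq(1+\varepsilon/3)\hat{p}$, so the conclusion $p\leq 2/3$ (and hence $3\theta\geq W/n$) only follows for $\varepsilon$ bounded away from $1$; you correctly flag this and note the easy fix of tightening the Bernoulli estimator's parameter (equivalently, one could use a larger multiplier in $\tilde{\theta}$), which changes nothing asymptotically.
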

\begin{proof}
We first analyze a variant of the algorithm that does not abort. (That is, an algorithm with the same pseudocode except for the abortion condition removed.) We now show that this algorithm returns a $(1\pm\varepsilon)$-approximation with probability at least $5/6$.

The scheme summarized above to find $\theta$ at \cref{lst:line:findtheta} succeeds with probability at least $19/20$. We call this event $\mathcal{E}_1$. Define the event $\Ee_2 = \{\hat{p} \text{ is a } (1\pm \varepsilon/3)\text{-approximation  of } p\}$. It holds $P(\Ee_2) \geq 19/20$.

We now consider the case $\hat{p} \geq 1/2$. On line~\ref{lst:line:filteredcombo} we employ the algorithm $\PropEstimator$. Whenever it performs a sample, we simulate a proportional sample from the set $U_{\theta} = \{a \in U \,|\, w(a) \geq \theta\}$ by sampling until we sample item $a$ such that $w(a) \geq \theta$. 
It is easy to see that the distribution obtained with this sampling scheme is exactly the proportional distribution on the set $U_{\theta}$. Conditioning on $\Ee_1$, $\tilde{n}_{\geq \theta}$ is a valid advice and (by \Cref{thm:propsampler}) $\PropEstimator$ returns a $(1\pm\varepsilon/3)$-approximation of $W_{\theta}=\sum_{w(a) \geq \theta} w(a)$ with probability at least $2/3$. We amplify this probability to $19/20$.
Hence, we have
\begin{align}
P\ld(\ld\{\hat{W}_{\theta} \text{ is a } \onepmepsilon\text{-approximation of } W_{\geq \theta}\rd\} \cap \Ee_2 \rd) &\geq  \\
1 - \ld(\frac{1}{20} + P(\Bar{\Ee_1}) + P(\Bar{\Ee_2})\rd) &\geq \frac{5}{6}.
\end{align}
On this event, since $\varepsilon<1$, we have
\[
(1-\varepsilon)W \leq \frac{1-\varepsilon/3}{1+\varepsilon/3} W \leq \hat{W} \leq \frac{1+\varepsilon/3}{1-\varepsilon/3} W \leq (1+\varepsilon)W.
\]
Now we analyse the case $\hat{p} < 1/2$. Whenever $3\theta \geq W/n$, $\harmonicestimator(\varepsilon, 3\theta, \theta)$ returns a $(1\pm \varepsilon)$-approximation of $W/n$ with probability $2/3$ (by \Cref{lem:estimate_by_harm_mean}).  We amplify that probability to $19/20$. 
Now we argue that, conditioning on $\Ee_2$, we have $3\theta \geq W/n$. Define $p = P_{prop}(w(a) \geq \theta)$, then (on $\Ee_2$) we have $p \leq (1+\epsilon/3)\hat{p} \leq (1+1/3)1/2 \leq 2/3$. Hence, $n \theta \geq (1-p) W \geq W/3$ and thus $3\theta \geq W/n$, where the first inequality is obtained using $\sum_{w(a) < \theta} w(a) = (1-p) W$.
Applying union bound gives that $\harmonicestimator(\varepsilon, 3\theta, \theta)$ succeeds with probability at least $1 - (1/20 + P(\Bar{\Ee_2}))) \geq 5/6$.

Therefore, regardless of the value of $\hat{p}$, we have shown that $\hat{W}$ is a  $\onepmepsilon$-approximation of $W$ with probability at least $5/6$. Thus, the modified algorithm without abortion is correct with probability at least $5/6$. We now argue that the probability that \Cref{alg:combo_estimation} is aborted is at most $1/6$ (for $C$ large enough). By the union bound, its success probability is at least $2/3$.

In what follows, we compute how many samples are taken on each line. 
On line~\ref{lst:line:findtheta}, we use only $120n^{1/3}\varepsilon^{2/3}$ samples. 
Thanks to \Cref{claim:bernoulli_estimation}, $\bernoulliestimation$ on line~\ref{lst:line:bernoullicombo} uses $O(1/(p\varepsilon^2))$ samples in expectation. In what follows, we condition on $\Ee_1 \cap \Ee_2$.
It holds $p \geq P_{unif}(w(a) \geq \theta)$ and thus we have $p \geq \frac{1}{\varepsilon^{2/3} n^{1/3}}$. The number of samples used on \cref{lst:line:bernoullicombo} is then in expectation $O(n^{1/3}/\varepsilon^{4/3})$. By the (conditional) Markov's inequality the probability that we use more than $C_1 n^{1/3}/\varepsilon^{4/3}$ is at most $1/30$, for some $C_1$ large enough.
$\PropEstimator$ uses $O(\sqrt{n^{2/3}/\epsilon^{2/3}}/\epsilon) = O(n^{1/3}/\varepsilon^{4/3})$ samples in the worst case. The rejections cause only constant factor expected slowdown. Again, by the (conditional) Markov's inequality, for $C_2$ large enough, we use more than $C_2 n^{1/3}/\varepsilon^{4/3}$ with probability at most $1/30$.
Since we have $p \geq \frac{1}{\varepsilon^{2/3} n^{1/3}}$, on line~\ref{lst:line:harmonicinhybrid} $\harmonicestimator$ takes $O(n^{1/3}/\varepsilon^{4/3})$ samples in expectation (by \Cref{lem:estimate_by_harm_mean}). Thus, there exists a constant $C_3$ such that on line~\ref{lst:line:harmonicinhybrid} we use more than $C_3 n^{1/3}/\varepsilon^{4/3}$ samples with probability at most $1/30$.

Set $C = 120 + C_1 + C_2 + C_3$. It then holds by the union bound that we use more than $C \sqrt[3]{n}/\varepsilon^{4/3}$ samples (and thus abort) with probability at most $P(\Bar{\Ee}_1) + P(\Bar{\Ee}_2) + 1/30 + 1/30 = 1/6$.
\end{proof}

\subsection{Algorithms for \texorpdfstring{$|U|$}{u} unknown.} \label{sec:hybrid_unknownn}
In this section we show an algorithm that, without any knowledge of $n=|U|$, provides a $\onepmepsilon$-approximation of $W$ with probability $2/3$ using $O(\sqrt{n}/\varepsilon)$ samples. This complexity is strictly higher than the one of \Cref{sec:hybrid_knownn} for $\varepsilon = \omega(1/\sqrt{n})$. However, it is near-optimal when we do not know $n$, as we will see in \Cref{sec:lower_bounds_section}.

\begin{algorithm2e}[ht]\label{alg:noadvice_hybrid}
$\tilde{n} \leftarrow \unisizeestimator()$, with success probability $5/6$ (using uniform sampling) \\
$\hat{W} \leftarrow \PropEstimator(\tilde{n}, \varepsilon)$, with success probability $5/6$ (using proportional sampling) \\
\Return{ $\hat{W}$}
\caption{$\NAHybridEstimator(\varepsilon)$}
\end{algorithm2e}

\begin{theorem}
$\NAHybridEstimator(\varepsilon)$ uses in expectation $O(\sqrt{n}/\varepsilon)$ samples and, with probability at least $2/3$, we have $(1-\varepsilon)W \leq \hat{W} \leq (1+\varepsilon)W$.
\end{theorem}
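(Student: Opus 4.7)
The plan is to combine the two building blocks already developed in this paper: $\unisizeestimator$, which produces an upper bound $\tilde{n}$ on $n$ using only uniform samples, and $\PropEstimator(\tilde{n},\varepsilon)$, which produces a $(1\pm\varepsilon)$-approximation of $W$ whenever the advice $\tilde{n}$ is valid. The argument splits into a correctness part and a sample-complexity part.

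For correctness, I would amplify the success probability of each subroutine to $5/6$ (using the standard median trick), then apply a union bound. Let $\mathcal{E}_1 = \{n \leq \tilde{n}\}$ and $\mathcal{E}_2 = \{(1-\varepsilon)W \leq \hat{W} \leq (1+\varepsilon)W\}$. By \Cref{thm:universesizeestimator_analysis}, $P(\mathcal{E}_1) \geq 5/6$. Conditioning on $\mathcal{E}_1$, the advice passed to $\PropEstimator$ is valid, so by \Cref{thm:propsampler}, $P(\mathcal{E}_2 \mid \mathcal{E}_1) \geq 5/6$. Hence $P(\mathcal{E}_2) \geq P(\mathcal{E}_1 \cap \mathcal{E}_2) \geq 1 - 1/6 - 1/6 = 2/3$.

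For the expected sample complexity, I first observe that amplifying $\unisizeestimator$ to success probability $5/6$ uses $O(1)$ independent copies of the base estimator, each of expected cost $O(\sqrt{n})$ by \Cref{thm:universesizeestimator_analysis}, so Step 1 costs $O(\sqrt{n})$ samples in expectation. Moreover, taking the median of those copies as the returned $\tilde{n}$ still satisfies $E[\tilde{n}] = O(n)$ by \Cref{lemma:prob_amplification_expectations}. In Step 2, the sample count of $\PropEstimator(\tilde{n},\varepsilon)$ is, by construction, the deterministic quantity $m = \sqrt{24\tilde{n}}/\varepsilon + 1 = O(\sqrt{\tilde{n}}/\varepsilon)$ conditional on $\tilde{n}$; amplifying its success probability to $5/6$ only multiplies this by a constant. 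Taking expectation over $\tilde{n}$ and using Jensen's inequality together with $E[\tilde{n}] = O(n)$ yields
\[
E\!\left[\frac{\sqrt{\tilde{n}}}{\varepsilon}\right] \leq \frac{\sqrt{E[\tilde{n}]}}{\varepsilon} = O\!\left(\frac{\sqrt{n}}{\varepsilon}\right).
\]
Summing the two steps gives a total expected sample complexity of $O(\sqrt{n}) + O(\sqrt{n}/\varepsilon) = O(\sqrt{n}/\varepsilon)$.

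The proof is essentially a direct composition and no step is really a serious obstacle; the only subtlety is that the cost of Step 2 depends on the random output $\tilde{n}$ of Step 1, so one must combine \Cref{lemma:prob_amplification_expectations} (to preserve $E[\tilde{n}] = O(n)$ under amplification) with Jensen's inequality (to bound $E[\sqrt{\tilde{n}}]$) when taking the outer expectation.
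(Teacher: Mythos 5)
Your proof is correct and follows essentially the same approach as the paper: amplify each subroutine to success probability $5/6$, use a union bound for correctness, and bound the expected cost via $E[\tilde{n}] = O(n)$ (preserved under the median trick by \Cref{lemma:prob_amplification_expectations}) together with Jensen's inequality. You spell out some details the paper leaves implicit (e.g., that the cost of $\PropEstimator$ is deterministic given $\tilde{n}$), but the argument is the same.
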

\begin{proof}
By \Cref{thm:universesizeestimator_analysis}, $\unisizeestimator$ takes $O(\sqrt{n})$ samples and returns $\tilde{n}$ such that $n \leq \tilde{n}$ with probability $2/3$. 
We amplify this probability to $5/6$. Conditioning on $n \leq \tilde{n}$, $\PropEstimator$ returns $(1\pm\varepsilon)$-estimate of $W$ with probability at least $2/3$ by \Cref{thm:propsampler}. 
We amplify this probability to $5/6$. By the union bound, the algorithm returns a $(1\pm \varepsilon)$-estimate with probability at least $2/3$.

Moreover, by \Cref{thm:universesizeestimator_analysis} and \Cref{lemma:prob_amplification_expectations}, $E[\tilde{n}] = O(n)$. By the Jensen inequality, $\PropEstimator$ then uses in expectation $O(\sqrt{n}/\varepsilon)$ samples.
\end{proof}

\subsubsection*{Coupon collector algorithm.}\label{sec:coupon_collector}
In this section, we give an algorithm that returns $W$ \emph{exactly} in time $O(n \log n)$, with probability at least $2/3$. In fact, we show that we may learn the whole set $U$, along with all the weights, with probability $2/3$ in this sample complexity. From this, the sum can be easily computed. We can use this to ensure we never use more than $O(n\log n)$ samples in the hybrid setting. Specifically, we may execute in parallel this algorithm together with one of the above algorithms and abort when one of them returns an estimate.
If one wishes to implement this in practice, it is possible to instead do the following. Given the parameter $\varepsilon$, compute a threshold $n_0$ such that we would like to (if we knew $n$) execute $\NACouponCollector$ if $n \leq n_0$ and $\NAHybridEstimator$ if $n > n_0$. We then run $\NACouponCollector$ and we abort if in case we find $n_0$ distinct elements. If this happens, we then run $\NAHybridEstimator$.

\begin{algorithm2e}[ht]
$S \leftarrow \emptyset$\\
$k = 0$\\
\While{$k < 4|S| \log 3|S|$}{
    Sample $a \in U$ uniformly \\
    \If{$k \not \in S$}{
        $k \leftarrow 0$\\
        $S \leftarrow S \cup \{a\}$
    }
    \Else{
    $k \leftarrow k + 1$ \\
    }
}
\Return{$\sum_{a \in S} w(a)$}
\caption{$\NACouponCollector()$} \label{alg:coupon collector}
\end{algorithm2e}

\begin{theorem}
$\NACouponCollector$ has expected sample complexity $O(n \log n)$ and returns, with probability at least $2/3$ the whole set $U$.
\end{theorem}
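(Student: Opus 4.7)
The plan is to prove the two claims separately: correctness (with probability $\geq 2/3$ the set $S$ at termination equals $U$, which yields $\sum_{a\in S} w(a) = W$) and the bound on the expected number of samples.

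For correctness, I would argue by a union bound over the value $s$ of $|S|$ at the moment of termination. The crucial structural observation is that $|S|$ is monotone non-decreasing, so each value $s\in\{1,\ldots,n-1\}$ is visited at most once during the execution. Conditioned on reaching the state $|S|=s$, every subsequent uniform sample is ``old'' (i.e.\ lies in $S$) with probability $s/n$, independently of the past. Hence the probability that the counter $k$ reaches the threshold $4s\log(3s)$ before a new element appears is at most $(s/n)^{4s\log(3s)}$. Applying a union bound,
\[
P\bigl(S \neq U \text{ at termination}\bigr) \;\leq\; \sum_{s=1}^{n-1} (s/n)^{4s\log(3s)}.
\]
To bound this sum, I write $s = n-c$ for $c\in\{1,\ldots,n-1\}$ and use $\log(n/s) \geq c/n$ to obtain $(s/n)^{4s\log 3s} \leq \exp(-4(n-c)\log(3(n-c))\cdot c/n)$. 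For $c \leq n/2$ this is at most $(3n/2)^{-2c}$, so the tail contributes $O(1/n^2)$. For $s \leq n/2$, one has $(s/n)^{4s\log 3s} \leq 2^{-4s\log 3s}$, whose sum is $o(1)$. Putting the two ranges together, the total is $o(1)$, hence $\leq 1/3$ for $n$ large enough; small $n$ can be absorbed by tuning the constant $4$ in the threshold (or handled by inspection).

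For the expected sample complexity, I partition the samples by the value of $|S|$ at the time they are drawn. Let $T_s$ denote the number of samples taken while $|S|=s$. I claim that $T_s$ is stochastically dominated by a geometric random variable $G_s$ with success probability $(n-s)/n$: the two stopping rules for $T_s$ are (i) a new element is drawn, which is the success event defining $G_s$, and (ii) the counter reaches $4s\log(3s)$, which can only terminate the phase earlier. Therefore
\[
E[T_s] \;\leq\; E[G_s] \;=\; \frac{n}{n-s},
\]
and summing over $s=0,\ldots,n-1$ (with the convention that the $s=0$ threshold is at least $1$, or equivalently we sample once to initialize) gives
\[
E\bigl[\text{total samples}\bigr] \;=\; \sum_{s=0}^{n-1} E[T_s] \;\leq\; n\sum_{k=1}^{n}\frac{1}{k} \;=\; n H_n \;=\; O(n\log n).
\]

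The main obstacle is the union-bound calculation for correctness: one must show that the threshold $4s\log(3s)$ is large enough to push $\sum_{s<n}(s/n)^{4s\log(3s)}$ below $1/3$ for all $n\geq 1$, which is delicate for $s$ close to $n$ because each term only decays polynomially in $n$. All other steps (the domination by a geometric, the harmonic-series bound) are routine.
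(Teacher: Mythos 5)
Your proof follows essentially the same approach as the paper: a union bound over $s=|S|$ at termination for correctness, and a coupon-collector-type argument for the sample complexity. Two issues are worth pointing out.

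For correctness, the paper bounds each term of the union bound uniformly via $\ell/n \leq 1 - \tfrac{1}{2\ell}$ (valid for $\ell\in[1,n-1]$), giving $(\ell/n)^{4\ell\log 3\ell} \leq 1/(9\ell^2)$ and hence $\sum_\ell 1/(9\ell^2) < 1/3$ for every $n\geq 2$, with no case split and no ``for $n$ large enough'' caveat. Your split into $s\leq n/2$ and $s>n/2$ also works, but the claim that the contribution of $s\leq n/2$ is $o(1)$ is not right: $\sum_{s=1}^{n/2} 2^{-4s\log 3s}$ converges as $n\to\infty$ to a fixed positive constant ($\approx 0.05$), not to zero. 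The conclusion ($<1/3$) still holds because that constant is small, so the argument is salvageable, but the asymptotic framing is incorrect and the ``tune the constant for small $n$'' hedge is unnecessary.

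For the sample complexity, there is a genuine (though small and fixable) gap: your decomposition $E[\text{total samples}] = \sum_{s=0}^{n-1} E[T_s]$ omits the terminal phase $T_n$. When the algorithm succeeds, it reaches $|S|=n$ and then necessarily spends $4n\log 3n$ further samples (every draw is old) before the while loop exits; your geometric-domination bound $E[T_s]\leq n/(n-s)$ is vacuous at $s=n$. The paper handles exactly this by adding the explicit ``after $S=U$ we spend at most $4n\log 3n$ additional samples'' term. So you need
\[
E[\text{total samples}] \;=\; \sum_{s=0}^{n-1} E[T_s] + E[T_n] \;\leq\; nH_n + 4n\log 3n \;=\; O(n\log n),
\]
where $T_n \leq 4n\log 3n$ deterministically. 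The geometric-domination argument for $s<n$ is otherwise correct.
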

\begin{proof}
Thanks to the analysis of the standard coupon collector problem, we know that it takes in expectation $O(n \log n)$ samples before $S = U$. After that, we spend no more than $4|S| \log 3 |S| = n \log n$ additional samples. The sample complexity is thus as claimed. It remains to argue correctness.

The returned result is not correct if the algorithm returns $S$ too early. This happens exactly if there exists some $\ell$ such that it takes more than $4\ell \log 3\ell$ samples to get the $(\ell+1)$-th element. At step $\ell$, there are $n - \ell$ elements that are not in $S$, hence the probability that none of the $4\ell \log 3\ell$ elements fall in the set $U \setminus S$ is
\[
(\ell/n)^{4 \ell \log 3\ell} \leq (1 - \frac{1}{2 \ell})^{4\ell \log 3\ell} \leq (1/e)^{2 \log 3\ell} = \frac{1}{9 \ell^2}.
\]
The first inequality holds since for $\ell  \in [1, n-1]$, it holds $\ell / n \leq 1 - 1/2\ell$. Taking the union bound over all $1 \leq \ell \leq n-1$, we have that the failure probability is upper-bounded by $\sum_{\ell=1}^{\infty} \frac{1}{9\ell^2} < 1/3$.
\end{proof}

\section{Lower Bounds.} \label{sec:lower_bounds_section}

In this section, we give lower bounds for the problems we study in this paper.
The proofs of the lower bounds all follow a common thread. 
In what follows, we use the term \emph{risk} to refer to the misclassification probability of a classifier.

The roadmap of all our proofs follows. First, we define two different instances of the problem $(U_1, w_1)$ and $(U_2, w_2)$ such that a $\onepmepsilon$-approximation of $W$ is sufficient to distinguish between them. Then, we define our \emph{hard} instance as an equally likely mixture of the two, namely we take $(U_1, w_1)$ with probability $1/2$ and $(U_2, w_2)$ otherwise. We denote these events by $\Ee_1$ and $\Ee_2$ respectively. 
Second, we show that a Bayes classifier\footnote{Suppose we have a partition of the probability space $\Omega$ into events $\Ee_1, \cdots, \Ee_k$. We want to guess which event happened, based on observation $X$. Bayes classifier outputs as its guess $\Ee_\ell$ that maximizes $P(\Ee_\ell | X)$.} cannot distinguish between the two cases with probability $2/3$ using too few samples; since Bayes classifiers are risk-optimal this implies that no classifier can have risk less than $2/3$ while using the same number of samples. 
To show that a Bayes classifier has a certain risk, we study the posterior distribution. Let $S$ represent the outcome of the samples. Since the prior is uniform, applying Byes theorem gives
\[
\frac{P\ld(\Ee_1 \,|\, S\rd)}{P\ld(\Ee_2 \,|\, S\rd)} = \frac{P\ld( S \,|\,\Ee_1 \rd)}{P\ld(S \,|\,\Ee_2\rd)}.
\]
We call this ratio $\Rr(S)$ and show that $\Rr(S) \approx 1$ with probability close to $1$. When $\Rr(X) \approx 1$, the posterior distribution is very close to uniform, and this entails a risk close to $1/2$. First, we show this formally with some technical lemmas, and then we instantiate our argument for each of the studied problems.

\begin{lemma}\label{lem:bayesclas}
Given two disjoint events $\Ee_1, \Ee_2$ such that $P(\Ee_1)=P(\Ee_2)=1/2$ and a random variable $X$, we define the ratio
\[
\Rr(X) = \frac{P\ld( X \,|\,\Ee_1 \rd)}{P\ld(X \,|\,\Ee_2\rd)}.
\]
Notice that $\Rr(X)$ is a random variable since it depends on the outcome of $X$. If
\[
P\ld(\frac{7}{8} \leq \Rr(X) \leq \frac{8}{7} \rd) \geq \frac{14}{15}
\]
then any classifier taking $X$ and classifying $\Ee_1,\Ee_2$ has risk $\geq 2/5$.
\end{lemma}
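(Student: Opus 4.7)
The plan is to lower-bound the Bayes risk and invoke its optimality. Since any classifier based on $X$ has risk at least that of the Bayes (MAP) classifier, it suffices to show that the MAP classifier itself misclassifies with probability $\geq 2/5$.

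First I would compute the posterior. Because $P(\mathcal{E}_1)=P(\mathcal{E}_2)=1/2$, Bayes' rule gives
\[
P(\mathcal{E}_1\mid X) = \frac{\mathcal{R}(X)}{\mathcal{R}(X)+1}, \qquad P(\mathcal{E}_2\mid X) = \frac{1}{\mathcal{R}(X)+1}.
\]
The MAP classifier picks whichever posterior is larger, so its misclassification probability given $X$ equals
\[
\min\bigl(P(\mathcal{E}_1\mid X),\, P(\mathcal{E}_2\mid X)\bigr) = \frac{\min(\mathcal{R}(X),1)}{\mathcal{R}(X)+1}.
\]
Its total risk is the expectation of this quantity over $X$.

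Next I would bound the conditional risk on the good event $G=\{7/8\le \mathcal{R}(X)\le 8/7\}$. The function $r\mapsto \min(r,1)/(r+1)$ is increasing on $[0,1]$ and decreasing on $[1,\infty)$, so on $G$ it is minimized at the endpoints $r=7/8$ or $r=8/7$, both of which give value $7/15$. Hence on $G$, the conditional misclassification probability is $\geq 7/15$, and dropping the contribution from $G^c$ (which is nonnegative) yields
\[
\text{Bayes risk} \;\ge\; \frac{7}{15}\cdot P(G) \;\ge\; \frac{7}{15}\cdot \frac{14}{15} \;=\; \frac{98}{225} \;\ge\; \frac{2}{5}.
\]

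The main obstacle is really just bookkeeping: making sure one correctly identifies the infimum of $r\mapsto \min(r,1)/(r+1)$ on $[7/8,8/7]$ (it is achieved symmetrically at both endpoints, value $7/15$) and verifying the final numerical inequality $98/225\ge 2/5$, which holds since $98\cdot 5=490\ge 450=2\cdot 225$. The appeal to Bayes-optimality is standard but should be stated explicitly, since the lemma quantifies over all classifiers.
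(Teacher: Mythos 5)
Your proof is correct and follows essentially the same route as the paper: invoke Bayes-optimality, express the posterior odds as $\Rr(X)$, and bound the conditional misclassification probability on the good event $\{7/8\le\Rr(X)\le 8/7\}$ by $7/15$. The only cosmetic difference is that you lower-bound the error directly as $(7/15)\cdot P(G)\ge 98/225$, while the paper upper-bounds the success probability by $P(G^c)+8/15=3/5$; both give risk $\ge 2/5$.
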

\begin{proof}
First, we notice that since Bayes classifiers are risk-optimal, then it is sufficient to prove our statement for a Bayes classifier. Define $p_i = P(\Ee_i \,|\, X)$ for $i=1, 2$, then Bayes classifier simply returns $\argmax_{i=1, 2} p_i$. By Bayes theorem and because $P(\Ee_1)=P(\Ee_2)$ we have $\Rr(X) = p_1 / p_2$. If $7/8 \leq \Rr(X) \leq 8/7$, then
\[
\frac{1}{p_2} = \frac{p_1 + p_2}{p_2} = 1 + \Rr(X) \in \ld[\frac{15}{8}, \frac{15}{7}\rd] 
\]
and the same holds for $p_1$, therefore $7/15 \leq p_1, p_2 \leq 8/15$. Hence, conditioning on $7/8 \leq \Rr(X) \leq 8/7$, the probability of correct classification is at most $8/15$. Finally, we have
\begin{align}
P\ld(\text{Classifier returns correct answer}\rd) &\leq \\
P\ld(\Rr(X) < \frac{7}{8} \lor \Rr(X) > \frac{8}{7} \rd) + P\ld(\text{Classifier returns correct answer} \,\bigg|\, \frac{7}{8} \leq \Rr(X) \leq \frac{8}{7}\rd) &\leq \\
\frac{1}{15} + \frac{8}{15} &= \frac{3}{5}. 
\end{align}
\end{proof}

Before proving the main theorems, we need several lemmas.

\begin{lemma}\label{lem:occupationCheby}
Consider an instance of our problem $(U, w)$ so that $n=|U|$ and $w(a) = 1$ for each $a \in U$. We perform $m$ independent samples and denote with $\ell$ the number of distinct elements obtained. Then, $Var(\ell)\leq m^2/n$. Moreover, for each $\delta > 0$ 
\[
P\ld(\big|\ell - E[\ell]\big| \geq \frac{1}{\sqrt{\delta}} \cdot \frac{m}{\sqrt{n}}\rd) \leq \delta.
\]
\end{lemma}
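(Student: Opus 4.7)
I would prove the variance bound via the Efron--Stein inequality and then derive the tail bound as an immediate application of Chebyshev. The symmetry of $\ell$ in the i.i.d.\ samples makes Efron--Stein the natural tool; the main work is extracting an $O(m/n)$ bound on the one-coordinate squared-difference, which is what elevates the result from the trivial $\mathrm{Var}(\ell) \le m/2$ to the required $m^2/n$.

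\emph{Step 1 (Efron--Stein setup).} View $\ell = \ell(a_1, \ldots, a_m)$ as a symmetric function of the i.i.d.\ uniform samples. For each $i$, let $a_i'$ be an independent fresh uniform copy and let $\ell^{(i)}$ denote the number of distinct elements in the tuple obtained by replacing $a_i$ with $a_i'$. Efron--Stein then gives
\[
\mathrm{Var}(\ell) \,\le\, \tfrac{1}{2}\sum_{i=1}^{m} E\!\left[(\ell - \ell^{(i)})^2\right] \,=\, \tfrac{m}{2}\, E\!\left[(\ell - \ell^{(1)})^2\right],
\]
where the equality uses that the samples are exchangeable.

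\emph{Step 2 (case analysis for one resample).} The difference $\ell - \ell^{(1)}$ lies in $\{-1, 0, +1\}$: it equals $+1$ exactly when $a_1 \notin \{a_2, \ldots, a_m\}$ but $a_1' \in \{a_2, \ldots, a_m\}$, and $-1$ in the symmetric case; otherwise it vanishes. Conditioning on $L = |\{a_2, \ldots, a_m\}|$, the indicators $\mathbbm{1}[a_1 \in \{a_2, \ldots, a_m\}]$ and $\mathbbm{1}[a_1' \in \{a_2, \ldots, a_m\}]$ are conditionally independent $\mathrm{Bern}(L/n)$ variables (since $a_1, a_1'$ are both independent of the other samples and of each other). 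Therefore
\[
P\!\left((\ell - \ell^{(1)})^2 = 1 \,\big|\, L\right) \,=\, 2 \cdot \tfrac{L}{n}\!\left(1 - \tfrac{L}{n}\right) \,\le\, \tfrac{2L}{n},
\]
and taking expectations yields $E[(\ell - \ell^{(1)})^2] \le 2\,E[L]/n \le 2(m-1)/n \le 2m/n$, using $L \le m-1$ deterministically. Substituting into the Efron--Stein bound gives $\mathrm{Var}(\ell) \le m^2/n$.

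\emph{Step 3 (tail bound).} Apply Chebyshev's inequality with threshold $\lambda = m/\sqrt{\delta n}$: $P(|\ell - E[\ell]| \ge \lambda) \le \mathrm{Var}(\ell)/\lambda^2 \le (m^2/n)/(m^2/(\delta n)) = \delta$.

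\emph{Main obstacle.} The naive bounded-differences argument observes $|\ell - \ell^{(i)}| \le 1$ and yields only $\mathrm{Var}(\ell) \le m/2$, which is too weak for $m \ll n$. The crucial refinement is recognizing that a single resample typically changes nothing: the event $\ell \ne \ell^{(1)}$ requires exactly one of $a_1, a_1'$ to collide with $\{a_2, \ldots, a_m\}$, which has probability only $O(L/n) = O(m/n)$. Encoding this via conditioning on $L$ (rather than, say, via a crude union bound) is the key step; everything else is bookkeeping.
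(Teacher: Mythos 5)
Your proof is correct and follows essentially the same route as the paper's: Efron--Stein to bound $\mathrm{Var}(\ell)$ via the one-coordinate resample difference, then Chebyshev. The only cosmetic difference is that you condition on $L=|\{a_2,\dots,a_m\}|$ and use conditional independence to get the factor $2(L/n)(1-L/n)$, whereas the paper bounds the collision probability for $X_i'$ directly by $m/n$; both yield $E[(\ell-\ell^{(1)})^2]\le 2m/n$ and the same conclusion.
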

Note that, since all weights are the same, the proportional sampling is equivalent to sampling uniformly from $U$.
\begin{proof}
We use the Efron-Stein inequality to prove a bound on the variance of $\ell$. 
Let $X_i$ be the $i$-th sample. Now $\ell$ is a function of $X_1, \cdots, X_m$ and we write it as $\ell = f(X_1, \cdots, X_m)$. Let $X_1', \cdots, X_m'$ be an independent copy of $X_1, \cdots, X_m$. Let $\ell_i' = f(X_1, \cdots, X_{i-1}, X_i', X_{i+1}, \cdots, X_m)$. It clearly holds that $|\ell - \ell_i'| \in \{0,1\}$, moreover $\ell - \ell_i' = 1$ if and only if $X_i$ does not collide with with any $X_j$ for $j \neq i$ and $X_i'$ does collide with some $X_k$ for $k \neq i$. It holds that $|\{X_1, \cdots, X_{i-1}, X_{i+1}, \cdots X_m\}| \leq m$. Therefore, the probability that a $X_i'$ lies in this set is $\leq m/n$. Since $\ell$ and $\ell_i'$ are symmetric, we have that also $\ell - \ell_i' = -1$ with probability $\leq m/n$.  Hence, $E[(\ell - \ell_i')^2] = P(|\ell - \ell_i'| = 1) \leq 2m/n$. Applying Efron-Stein we get
\[
Var(\ell) \leq \frac{1}{2}\cdot \sum_{i=1}^m E\ld[\ld(\ell - \ell_i'\rd)^2\rd] \leq \frac{m^2}{n}
\]
Now we just plug this bound on the variance into Chebyshev inequality and we get the desired inequality.
\end{proof}

\paragraph{Fingerprints.} Given a sample $S$, we define its \emph{fingerprint} $F$ as the set of tuples $(c_a, w(a))$ where for each distinct item $a$ in $S$, we add to $F$ such a tuple with $c_a$ being equal to the number of copies of $a$ in $S$.
Having a fingerprint of $S$ is sufficient for any algorithm, oblivious of $(U, w)$, to produce a sample $S'$ that is equal to $S$, up to relabeling of the elements.
Since the only allowed queries are testing equality of two items and the weight query, one may easily prove that the execution of the algorithm on these two samples is the same (indeed, these two samples are indistinguishable by the equality and weight queries).
Therefore, we can safely assume that an algorithm in the proportional setting takes as an input the fingerprint $F$ of $S$, rather than $S$. For algorithms in the hybrid setting, we can assume that it takes as input separately the fingerprint of the proportional and the fingerprint of the uniform samples.

\begin{lemma} \label{lem:set_size_lower_bound}
Let us have parameters $n$ and $\epsilon < 1/3$. Let $N = n$ with probability $1/2$, and $N = (1-\varepsilon)n$ otherwise. Consider the random instance of the sum estimation problem $(U, w)$ with $|U|=N$ and $w(a)=1$ for each $a\in U$. Consider a sample of size $m$ and its fingerprint $F_m = \{(c_i, w(a_i)\}_{i = 1 \dots \ell}$ and define the ratio 
\[
\Rr(F_m) = \frac{P\ld(F_m \,|\, N=n \rd)}{P\ld(F_m \,|\, N = (1-\varepsilon) n \rd)}.
\]
If $m = o(\sqrt{n} /\varepsilon)$ and $m=o(n)$, then 
\[
P\ld(\frac{98}{100} \leq \Rr(F_m) \leq \frac{100}{98}\rd) \geq \frac{99}{100}
\]
for $n$ large enough.
\end{lemma}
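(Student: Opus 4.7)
The plan is to exploit the symmetry of the problem to reduce the lemma to a statement about $\ell$, the number of distinct items in the sample, and then combine Lipschitz continuity of $\log \Rr$ in $\ell$ with the concentration of $\ell$ provided by Lemma~\ref{lem:occupationCheby}. Since all weights equal $1$, a counting argument over the sequences compatible with a given fingerprint $F_m$ yields $P(F_m \mid N) = \frac{N!}{(N-\ell)!\, N^m} \cdot C(F_m)$, where $C(F_m)$ depends only on the multiplicities and therefore cancels in the ratio. Hence $\Rr(F_m)$ is a function of $\ell$ alone, equal to
\[
\Rr(\ell) = (1-\varepsilon)^m \prod_{i=0}^{\ell-1}\frac{n-i}{(1-\varepsilon)n-i},
\]
and it suffices to show that $\Rr(\ell) \in [98/100,\, 100/98]$ with probability at least $99/100$ under either conditional distribution of $\ell$.

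Next I would establish three preparatory estimates. The forward difference $\log\Rr(\ell{+}1) - \log\Rr(\ell) = \log\bigl((n-\ell)/((1-\varepsilon)n-\ell)\bigr)$ equals $O(\varepsilon)$ uniformly for $\ell \leq m$, since $m = o(n)$ keeps $n - \ell = \Theta(n)$; this makes $\log \Rr$ Lipschitz in $\ell$ with constant $O(\varepsilon)$. Lemma~\ref{lem:occupationCheby} then gives $|\ell - E[\ell \mid N]| = O(m/\sqrt{n})$ with large probability under either hypothesis. Finally, a direct computation $E[\ell \mid N] = N(1 - (1-1/N)^m) = m - m^2/(2N) + O(m^3/N^2)$ shows that the two conditional means differ by $\Theta(m^2 \varepsilon / n) = o(m/\sqrt{n})$, where the latter bound uses $m\varepsilon = o(\sqrt{n})$.

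Third, I would evaluate $\log\Rr$ at a convenient reference point, say $\bar{\ell} = E[\ell \mid N = n]$. Taylor-expanding
\[
\log\frac{1-i/n}{1-\varepsilon-i/n} = -\log(1-\varepsilon) + \frac{\varepsilon (i/n)}{1-\varepsilon} + O\bigl(\varepsilon (i/n)^2\bigr),
\]
summing $i = 0, \ldots, \bar{\ell}-1$, and combining with the $m\log(1-\varepsilon)$ prefactor, I expect to show that the leading contributions of orders $\varepsilon m^2/n$ and $\varepsilon m^3/n^2$ cancel exactly, leaving a residual of order $O(\varepsilon^2 m^2/n) = o(1)$. Combining with the Lipschitz bound, for $\ell$ in the $O(m/\sqrt{n})$-neighbourhood of $\bar{\ell}$ one obtains
\[
|\log\Rr(\ell)| \leq |\log\Rr(\bar{\ell})| + O(\varepsilon) \cdot O(m/\sqrt{n}) = o(1),
\]
which for $n$ sufficiently large is dominated by $\log(100/98)$, yielding the required bound on $\Rr(\ell)$ on this high-probability event, after a union bound over the Chebyshev failure.

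The main obstacle is the delicate cancellation above: the terms of order $\varepsilon m^2/n$ (arising from $(m - \bar{\ell})\log(1-\varepsilon)$ and from $\varepsilon \bar{\ell}^2/(2n(1-\varepsilon))$) and the terms of order $\varepsilon m^3/n^2$ (arising from those same two sources together with the next expansion term $\bar{\ell}^3 \varepsilon /(3 n^2)$) must cancel exactly; any slip in the bookkeeping leaves a residual that is not $o(1)$ under the stated hypotheses alone. Propagating the Taylor expansions of $\log(1-\varepsilon)$, $E[\ell \mid N]$, and $\log((1-x)/(1-\varepsilon-x))$ to sufficient order, and verifying that the combined assumptions $m = o(\sqrt{n}/\varepsilon)$ and $m = o(n)$ kill every residual term, is the technical heart of the argument.
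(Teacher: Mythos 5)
Your reduction to $\ell$, the Lipschitz bound $\log\Rr(\ell+1)-\log\Rr(\ell)=O(\varepsilon)$, and the use of Lemma~\ref{lem:occupationCheby} all match the paper's proof. Where you diverge is in the crucial step of pinning $\Rr$ close to $1$ on the high-probability window: you propose to Taylor-expand $\log\Rr(\bar\ell)$ at the reference point $\bar\ell = E[\ell\mid N=n]$ and to verify that all the polynomial terms in $m/n$ cancel at first order in $\varepsilon$, leaving a residual that is $o(1)$. That is a genuine gap, not a mere omission of routine bookkeeping. You correctly flag that the order-$\varepsilon m^2/n$ and order-$\varepsilon m^3/n^2$ terms cancel, but the naive next-order term $\varepsilon m^4/n^3$ is \emph{not} controlled by the hypotheses alone: for example with $\varepsilon = n^{-0.4}$ and $m = n^{0.89}$ both $m=o(n)$ and $m=o(\sqrt n/\varepsilon)$ hold, yet $\varepsilon m^4/n^3 = n^{0.16}\to\infty$. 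So the cancellation must propagate to arbitrarily high order, and your proposal neither exhibits the pattern nor gives a closed-form argument for it. Without that, the claim ``residual $= O(\varepsilon^2 m^2/n) = o(1)$'' is unproved.

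The paper sidesteps this entirely with a probability-conservation (sandwich) argument that never evaluates $\Rr$ anywhere directly. Writing $a = E[\ell\mid N=(1-\varepsilon)n]-10m/\sqrt n$ and $b = E[\ell\mid N=n]+10m/\sqrt n$, it uses the monotonicity of $\ell\mapsto\Rr(\ell)$ together with
\[
\Rr(a)\sum_{k\in[a,b]}P(\ell=k\mid N=(1-\varepsilon)n) \;\le\; \sum_{k\in[a,b]}\Rr(k)\,P(\ell=k\mid N=(1-\varepsilon)n) \;=\; \sum_{k\in[a,b]}P(\ell=k\mid N=n)\le 1,
\]
and the $\ge 99/100$ lower bound on the first sum, to get $\Rr(a)\le 100/99$; a symmetric argument gives $\Rr(b)\ge 99/100$. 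Only then does it invoke the Lipschitz bound over the window of length $|b-a| = o(1/\varepsilon)$ to upgrade these one-sided bounds to two-sided ones. This gets you the same conclusion as your Taylor expansion with zero cancellation bookkeeping. I recommend you replace the Taylor step in your argument with this sandwich; the rest of your proof then goes through as written.
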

\begin{proof}
We can explicitly compute the likelihood of a given fingerprint $F_m = \{(c_i, w(a_i)\}_{i = 1 \dots \ell}$ where $\ell$ is the number of distinct elements as
\begin{align}
P\left(F_m \big| N = r \right) =& \frac{\ell!}{r^m} \binom{r}{\ell} \binom{m}{c_1 \dots c_\ell} \\
=& \frac{1}{r^{m-\ell}} \prod_{i=1}^{\ell-1} \ld(1 - \frac{i}{r}\rd) \binom{m}{c_1 \dots c_\ell}
\end{align}
and therefore
\begin{align}
\Rr(F_m) &= (1-\varepsilon)^{m-\ell} \cdot \prod_{i=1}^{\ell-1} \frac{1 - i/n}{1 - i/(1-\varepsilon)n} \\
&= (1-\varepsilon)^{m-\ell} \cdot \prod_{i=1}^{\ell-1} \ld( 1 + \frac{\varepsilon i}{(1-\varepsilon)n - i}\rd)
\end{align}
Note that $\Rr(F_m)$ depends only on $\ell$. From now on we denote it with $\Rr(\ell)$. 

Now we define an interval $[a, b]$ such that $P(\ell \in [a, b]) \geq 99/100$. To do so, we first compute the expectation of $\ell$ and then use the concentration bound of Lemma~\ref{lem:occupationCheby}. We prove that
\[
E\ld[\ell | N=n\rd] \leq E\ld[\ell | N=(1-\varepsilon)n\rd] \leq E\ld[\ell | N=n\rd] + O\ld(\varepsilon\frac{m^2}{n}\rd).
\]
The expression of $E\ld[\ell | N=n\rd]$ is given by
\[
E\ld[\ell | N=n\rd] = n\cdot \ld(1 - \ld(1-\frac{1}{n}\rd)^m\rd)
\]
since each item is not sampled with probability $(1-\frac{1}{n})^m$. From this expression, it is apparent that $E\ld[\ell | N=n\rd]$ is increasing in $n$. Expanding this formula, we get  
\begin{align}
E\ld[\ell | N=(1-\varepsilon)n\rd] - E\ld[\ell | N=n\rd] = \varepsilon\frac{m^2}{2n} + O\ld(\varepsilon\frac{m^3}{n^2}\rd) = O\ld(\varepsilon\frac{m^2}{n}\rd)
\end{align}
where the last estimate uses the $m = o(n)$ assumption.
Now we define $a = E\ld[\ell | N=(1-\varepsilon)n\rd] - 10 m/\sqrt{n}$ and $b = E\ld[\ell | N=n\rd] + 10 m/\sqrt{n}$. 
Using the last result we proved, we have
\[
|b - a| = 20\frac{m}{\sqrt{n}} + O\ld( \varepsilon\frac{m^2}{n}\rd) = o\ld(\frac{1}{\varepsilon}\rd).
\]
Note that, like all asymptotics in this proof, the $o(1/\epsilon)$ is for the limit $n \rightarrow +\infty$ and makes sense even for $\epsilon$ being a constant. Thanks to Lemma~\ref{lem:occupationCheby} we have $P(\ell \not\in [a, b] \,|\, N=n) \leq 1/100$ and $P(\ell \not\in [a, b] \,|\, N=(1-\varepsilon) n) \leq 1/100$, and therefore $P(\ell \not\in [a, b]) \leq 1/100$.

Now we give bounds on $\Rr(a)$ and $\Rr(b)$. It is apparent from the explicit formula above that $\ell \mapsto \Rr(\ell)$ is an increasing function. We have
\begin{align}
\Rr(a) \cdot \frac{99}{100}  &\leq \Rr(a)  \sum_{k \in [a, b] \cap \mathbb{Z}} P\ld(\ell = k \,|\, N = (1-\varepsilon)n\rd) \\
&\leq \sum_{k \in [a, b] \cap \mathbb{Z}} \Rr(k) P\ld(\ell = k \,|\, N = (1-\varepsilon)n\rd) \\
&\leq \sum_{k \in [a, b] \cap \mathbb{Z}} P\ld(\ell = k \,|\, N = n\rd) \leq 1
\end{align}
and thus, $\Rr(a) \leq 100/99$. Analogously,
\begin{align}
\frac{1}{\Rr(b)} \cdot \frac{99}{100}  &\leq   \frac{1}{\Rr(b)} \sum_{k \in [a, b] \cap \mathbb{Z}} P\ld(\ell = k \,|\, N = n\rd) \\
&\leq \sum_{k \in [a, b] \cap \mathbb{Z}} \frac{1}{\Rr(k)} P\ld(\ell = k \,|\, N = n\rd) \\
&\leq \sum_{k \in [a, b] \cap \mathbb{Z}} P\ld(\ell = k \,|\, N = (1-\varepsilon)n\rd) \leq 1
\end{align}
and thus, $\Rr(b) \geq 99/100$. We now have an upper bound on $\Rr(a)$ and a lower bound on $\Rr(b)$. However, we need a lower bound on $\Rr(a)$ and an upper bound on $\Rr(b)$ (that is, the other way around). For each $k < m$ we have
\[
\frac{\Rr(k+1)}{\Rr(k)} = \frac{1}{1-\varepsilon} \cdot \ld( 1 + \frac{\varepsilon k}{(1-\varepsilon)n - k}\rd) \leq (1+ 2\varepsilon) \cdot \ld( 1 + \frac{2\varepsilon m}{n}\rd) \leq 1 + 3\varepsilon
\]
for $n$ large enough, where we used $k<m=o(n)$ and $\varepsilon \leq 1/3$. Hence,
\begin{align}
\Rr(b) &\leq  \Rr(a) \cdot \ld(1+ 3\varepsilon \rd)^{\lceil |b-a|\rceil}  \\
&\leq \frac{100}{99} \cdot e^{3\varepsilon \lceil |b-a|\rceil}  \\
&\leq \frac{100}{99} \cdot e^{o(1)} \leq \frac{100}{98}
\end{align}
where the last inequality holds for $n$ large enough because $m=o(\sqrt{n}/\varepsilon)$. 
\begin{align}
\Rr(a) &\geq  \Rr(b) \cdot \ld(1+ 3\varepsilon \rd)^{-\lceil |b-a|\rceil}  \\
&\geq \frac{99}{100} \cdot e^{-3\varepsilon \lceil |b-a|\rceil}  \\
&\geq \frac{99}{100} \cdot e^{-o(1)} \geq \frac{98}{100}
\end{align}
Finally, we have for $n$ large enough that
\[
P\ld(\Rr(\ell) \not\in  \ld[\frac{98}{100},\frac{100}{98}\rd]\rd) \leq P\ld(\ell \not\in [a, b]\rd) \leq \frac{1}{100}.
\]
\end{proof}

Using the same approach as Lemma~\ref{lem:set_size_lower_bound}, we prove a similar result for the task of estimating the bias $p$ of a Bernoulli random variable up to an additive $\varepsilon$. In this setting, we provide an asymptotic lower bound on the number of samples, where the asymptotics are meant for the limit $(p,\varepsilon) \rightarrow 0$.  

\begin{lemma} \label{lem:biansed_coin_discrimination}
Let $0< \varepsilon < p$ and set $q=p$ with probability $1/2$, and $q=p-\varepsilon$ otherwise. Let $X_1 \dots X_m$ be a sequence of i.i.d. Bernoulli random variables with bias $q$, and let $\ell = \sum_{i=1}^m X_i \sim Bin(m, q)$.
Define the ratio
\[
\Rr(\ell) = \frac{P\ld(\ell \,|\, q=p\rd)}{P\ld(\ell \,|\, q=p-\varepsilon\rd)}.
\]
If $m=o(p/\varepsilon^2)$ then
\[
P\ld(\frac{98}{100} \leq \Rr(\ell) \leq \frac{100}{98} \rd) \geq \frac{99}{100}.
\]
for $p$ (and thus also $\epsilon)$ small enough.
\end{lemma}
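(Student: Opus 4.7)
The plan is to mirror the strategy of \Cref{lem:set_size_lower_bound}: write $\mathcal{R}(\ell)$ explicitly, observe it is monotone in $\ell$, localize $\ell$ in a short interval $[a,b]$ using Chebyshev, and then control $\mathcal{R}(b)/\mathcal{R}(a)$ by noting that consecutive values of $\mathcal{R}$ differ multiplicatively by only $1 + O(\varepsilon/p)$. Concretely, by the binomial formula
\[
\mathcal{R}(\ell) = \left(\frac{p}{p-\varepsilon}\right)^{\ell}\left(\frac{1-p}{1-p+\varepsilon}\right)^{m-\ell},
\]
which is strictly increasing in $\ell$. Furthermore, for each $\ell$,
\[
\frac{\mathcal{R}(\ell+1)}{\mathcal{R}(\ell)} = \frac{p(1-p+\varepsilon)}{(p-\varepsilon)(1-p)} = 1 + \frac{\varepsilon}{(p-\varepsilon)(1-p)} = 1 + O(\varepsilon/p)
\]
for $p$ (and hence $\varepsilon$) small enough, using $\varepsilon < p$.

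Next I would localize $\ell$. Under either hypothesis $\ell \sim \mathrm{Bin}(m,q)$ with variance at most $mp$, so Chebyshev gives that, with probability at least $99/100$ under both priors, $\ell$ lies in the interval
\[
[a,b] = \bigl[m(p-\varepsilon) - 10\sqrt{mp},\ mp + 10\sqrt{mp}\bigr],
\]
whose length satisfies $|b-a| \leq m\varepsilon + 20\sqrt{mp}$. The hypothesis $m = o(p/\varepsilon^2)$ then yields
\[
|b-a|\cdot\frac{\varepsilon}{p} = O\!\left(\frac{m\varepsilon^2}{p} + \varepsilon\sqrt{\tfrac{m}{p}}\right) = O\!\left(\frac{m\varepsilon^2}{p} + \sqrt{\tfrac{m\varepsilon^2}{p}}\right) = o(1),
\]
so by the multiplicative bound above,
\[
\frac{\mathcal{R}(b)}{\mathcal{R}(a)} \leq \bigl(1 + O(\varepsilon/p)\bigr)^{\lceil |b-a|\rceil} \leq e^{o(1)} = 1 + o(1).
\]

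Finally, I would apply exactly the averaging argument from \Cref{lem:set_size_lower_bound}: since $\mathcal{R}$ is increasing,
\[
\mathcal{R}(a)\cdot\frac{99}{100} \leq \sum_{k\in[a,b]\cap\mathbb{Z}} \mathcal{R}(k)\, P(\ell=k\mid q=p-\varepsilon) = \sum_{k\in[a,b]\cap\mathbb{Z}} P(\ell=k\mid q=p) \leq 1,
\]
so $\mathcal{R}(a) \leq 100/99$, and symmetrically $\mathcal{R}(b) \geq 99/100$. Combining with $\mathcal{R}(b)/\mathcal{R}(a) = 1+o(1)$ pins both $\mathcal{R}(a)$ and $\mathcal{R}(b)$, hence all of $\mathcal{R}(\ell)$ for $\ell\in[a,b]$, into $[98/100,\,100/98]$ for $p$ small enough. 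The bound on $P(\ell\notin[a,b])$ then completes the proof.

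The main obstacle is carefully verifying that the log-ratio bound $\mathcal{R}(\ell+1)/\mathcal{R}(\ell) = 1 + O(\varepsilon/p)$ is clean enough to survive raising to the power $|b-a|$; this relies on the asymptotic regime $(p,\varepsilon)\to 0$ (so that the $(1-p)$ and $(1-p+\varepsilon)$ factors are essentially $1$) and on expanding $|b-a|\cdot \varepsilon/p$ so that \emph{both} summands coming from $m\varepsilon$ and $\sqrt{mp}$ are $o(1)$ under the single assumption $m=o(p/\varepsilon^2)$, as done above.
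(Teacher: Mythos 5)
Your proposal follows the paper's proof essentially verbatim: the same explicit formula for $\Rr(\ell)$, the same Chebyshev interval $[a,b]=[m(p-\varepsilon)-10\sqrt{mp},\,mp+10\sqrt{mp}]$, the same averaging argument to pin $\Rr(a)\le 100/99$ and $\Rr(b)\ge 99/100$, and the same per-step ratio bound $\Rr(\ell+1)/\Rr(\ell)=1+O(\varepsilon/p)$ raised to the power $|b-a|$. The only cosmetic difference is that you spell out the estimate $|b-a|\cdot\varepsilon/p=O(m\varepsilon^2/p+\sqrt{m\varepsilon^2/p})=o(1)$, whereas the paper states the equivalent bound $|b-a|=o(p/\varepsilon)$; both follow directly from $m=o(p/\varepsilon^2)$.
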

\begin{proof}
We follow the same scheme we adopted in the proof of Lemma~\ref{lem:set_size_lower_bound}.
First, we compute $\Rr(\ell)$ explicitly
\begin{align}
\Rr(\ell) = \frac{p^\ell (1-p)^{m-\ell}}{(p-\varepsilon)^\ell (1-p+\varepsilon)^{m-\ell}} 
= \frac{\ld(1 + \frac{\varepsilon}{1-p}\rd)^{\ell -m}}{\ld(1 - \frac{\varepsilon}{p}\rd)^\ell} .
\end{align}
We have $E[\ell\,|\, q=p] = mp$, $E[\ell\,|\,q=p-\varepsilon] = m(p-\varepsilon)$, $Var(\ell\,|\,q=p) \leq mp$, and $Var(\ell\,|\,q=p-\varepsilon)\leq mp$. 
We define $a = m(p-\varepsilon) - 10\sqrt{mp}$ and $b=mp + 10\sqrt{mp}$, and using Chebyshev inequality we have $P(\ell \in [a, b] \,|\, q=p) \geq 99/100$ and $P(\ell \in [a, b] \,|\, q=p-\varepsilon) \geq 99/100$. Hence, $P(\ell \in [a, b]) \geq 99/100$. Notice that $|b-a| = m\varepsilon + 20 \sqrt{mp} = o(p/\varepsilon)$ where the second equality holds by the assumption that $m = o(p/\epsilon^2)$.
Now, we bound $\Rr(a)$ and $\Rr(b)$. Again, we notice that $\ell \mapsto \Rr(\ell)$ is an increasing function.
\begin{align}
\Rr(a) \cdot \frac{99}{100}  &\leq \Rr(a)  \sum_{k \in [a, b] \cap \mathbb{Z}} P\ld(\ell = k \,|\, q=p-\varepsilon\rd) \\
&\leq \sum_{k \in [a, b] \cap \mathbb{Z}} \Rr(k) P\ld(\ell = k \,|\, q=p-\varepsilon\rd) \\
&\leq \sum_{k \in [a, b] \cap \mathbb{Z}} P\ld(\ell = k \,|\, q=p\rd) \leq 1
\end{align}
and thus, $\Rr(a) \leq 100/99$. Analogously, we prove $\Rr(b) \geq 99/100$.
For each $k < m$ we have
\[
\frac{\Rr(k+1)}{\Rr(k)} = \frac{1 + \frac{\varepsilon}{1-p}}{1-\frac{\varepsilon}{p}} \leq 1 + 3\frac{\varepsilon}{p}
\]
for $p$ and $\varepsilon$ sufficiently small. Hence,
\begin{align}
\Rr(b) &\leq  \Rr(a) \cdot \ld(1+ 3\frac{p}{\varepsilon} \rd)^{\lceil |b-a|\rceil}  \\
&\leq \frac{100}{99} \cdot e^{3\frac{p}{\varepsilon} \lceil |b-a|\rceil}  \\
&\leq \frac{100}{99} \cdot e^{o(1)} \leq \frac{100}{98}
\end{align}
where the last inequality holds for $p$ and $\epsilon$ sufficiently small. Analogously, we prove $\Rr(a) \geq 98/100$. Finally, we have
\[
P\ld(\Rr(\ell) \not\in  \ld[\frac{98}{100},\frac{100}{98}\rd]\rd) \leq P\ld(\ell \not\in [a, b]\rd) \leq \frac{1}{100}.
\]
\end{proof}

\subsection{Proportional sampling.}

In this section, we assume, as we did in Section~\ref{sec:weighted_sampling}, that we can sample only proportionally. We prove that $\Omega(\sqrt{n}/\varepsilon)$ samples are necessary to estimate $W$ with probability $2/3$, thus $\PropEstimator$ is optimal up to a constant factor. 

\paragraph{Deciding the number of samples at run-time.}
In all our lower bounds, we show that it is not possible that an algorithm takes $m=o(T(n, \varepsilon))$ samples in the \emph{worst case} and correctly approximates $W$ with probability $2/3$.
All these lower bounds safely extend to lower bounds on the expected number of samples $E[m]$. All our proofs work by showing that the Bayes classifier has risk $1/2-o(1)$. Suppose now that we have an algorithm $A$ that uses in expectation $\mu(n,\epsilon) = o(T(n, \epsilon))$ samples. We now define a classifier as follows. We run $A$ and abort if it uses more than $20\mu(n,\epsilon)$ samples\footnote{Note that, while the algorithm does not know $\mu(n,\epsilon)$, this is not an issue in this argument. The reason is that a classifier is defined as an arbitrary function from the set of possible samples and private randomness to the set of classes. This allows us to ``embed" $\mu$ into the classifier}. We return the answer given by $A$ or an arbitrary value if we have aborted the algorithm. By the Markov's inequality, the probability that we abort is at most $1/20$. Our classifier has risk $1/3 + 1/20 < 2/5$. Since any constant success probability greater than $1/2$ is equivalent up to probability amplification, we also have a classifier with risk $1/3$ that uses $O(\mu(n,\epsilon)) = o(T(n,\epsilon)$ samples. Since a Bayes classifier with such parameters does not exist (as we show) and Bayes classifiers are risk-optimal, this is a contradiction.


\begin{theorem} \label{thm:lb_prop}
In the proportional setting, there does not exist an algorithm that, for every instance $(U, w)$ with $|U|=n$, takes $m$ samples for $m=o(\sqrt{n}/\varepsilon)$ and returns a $\onepmepsilon$-approximation of $W$ with probability $2/3$. This holds also when $n$ is known to the algorithm.
\end{theorem}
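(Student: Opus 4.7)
The plan is to instantiate the Bayes-classifier framework of Lemmas~\ref{lem:bayesclas} and~\ref{lem:set_size_lower_bound} with two instances that share the same universe size, so that the algorithm's knowledge of $n$ does not help. First, I would define $(U_1, w_1)$ with $|U_1|=n$ and $w_1\equiv 1$, giving $W_1=n$; and $(U_2, w_2)$ with $|U_2|=n$ but with exactly $(1-3\varepsilon)n$ items of weight $1$ and $3\varepsilon n$ items of weight $0$, giving $W_2=(1-3\varepsilon)n$. For $\varepsilon$ sufficiently small, the intervals $[(1-\varepsilon)W_1,(1+\varepsilon)W_1]$ and $[(1-\varepsilon)W_2,(1+\varepsilon)W_2]$ are disjoint, since $(1+\varepsilon)(1-3\varepsilon)=1-2\varepsilon-3\varepsilon^2<1-\varepsilon$. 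Hence any $(1\pm\varepsilon)$-approximation of $W$ with success probability $2/3$ yields a classifier between $\Ee_1$ and $\Ee_2$ with risk at most $1/3$.

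Next, I would observe that proportional sampling never returns a zero-weight item, so the distribution of a proportional sample in Instance~$2$ coincides with uniform sampling from a set of size $(1-3\varepsilon)n$ with unit weights, while in Instance~$1$ it coincides with uniform sampling from a set of size $n$ with unit weights. By the ``fingerprint suffices'' reduction stated in the preamble of this section, the algorithm can be assumed to act only on the fingerprint $F_m$ of its $m$ samples. Under the uniform $1/2$--$1/2$ prior on $\Ee_1$ and $\Ee_2$, the conditional laws of $F_m$ are therefore exactly the two hypotheses analyzed in Lemma~\ref{lem:set_size_lower_bound}, with the parameter $\varepsilon$ of that lemma replaced by $3\varepsilon$.

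To finish, I would apply Lemma~\ref{lem:set_size_lower_bound} with parameter $3\varepsilon$: for $m=o(\sqrt{n}/\varepsilon)$ (equivalently $m=o(\sqrt{n}/(3\varepsilon))$) and $m=o(n)$, the likelihood ratio $\Rr(F_m)$ lies in $[98/100,\,100/98]$ with probability at least $99/100$. Lemma~\ref{lem:bayesclas} then forces every classifier on $F_m$ to have risk at least $2/5$, contradicting the risk-$1/3$ classifier produced in the first paragraph. The paragraph ``Deciding the number of samples at run-time'' immediately before the theorem statement then upgrades this worst-case lower bound on $m$ to the corresponding bound on $E[m]$ via a Markov-inequality truncation, giving the full statement.

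The main subtlety is choosing the instances so that simultaneously (i) both satisfy $|U|=n$, which covers the ``$n$ known'' case, and (ii) the induced sample fingerprints match the two hypotheses of Lemma~\ref{lem:set_size_lower_bound}; the zero-weight padding trick achieves both at once, and the factor $3$ in $1-3\varepsilon$ is precisely what makes a $(1\pm\varepsilon)$-approximation separate $W_1$ from $W_2$. The only mild assumption is $\varepsilon<1/9$ so that Lemma~\ref{lem:set_size_lower_bound}'s precondition $3\varepsilon<1/3$ holds, which is harmless for an asymptotic lower bound.
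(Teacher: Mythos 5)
Your construction coincides exactly with the first half of the paper's proof (the regime $\varepsilon \geq 1/\sqrt{n}$): zero-weight padding inside a fixed universe of size $n$, proportional sampling collapsing to uniform sampling from the support, and a direct application of Lemma~\ref{lem:set_size_lower_bound} followed by Lemma~\ref{lem:bayesclas}. The use of $3\varepsilon$ rather than $\varepsilon$ to force the two target intervals apart is a small but legitimate sharpening over the paper's presentation.

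However, there is a genuine gap: you invoke Lemma~\ref{lem:set_size_lower_bound} listing both preconditions $m = o(\sqrt{n}/\varepsilon)$ \emph{and} $m = o(n)$, but the theorem statement only grants the first. The second follows from the first only when $\varepsilon \geq 1/\sqrt{n}$. When $\varepsilon < 1/\sqrt{n}$ we have $\sqrt{n}/\varepsilon > n$, so an algorithm may take, say, $m = n\log n = o(\sqrt{n}/\varepsilon)$ samples; with that many samples it can coupon-collect the entire support of your instances and determine the support size exactly, trivially distinguishing them. So in this regime your construction is genuinely not hard and Lemma~\ref{lem:set_size_lower_bound} does not apply. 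The paper handles $\varepsilon < 1/\sqrt{n}$ with a qualitatively different instance: one additional item $a_0$ with dominating weight $\frac{s\sqrt{n}}{\varepsilon} - n$ plus $k \in \{n - s\sqrt{n},\, n\}$ unit-weight items (and the rest zero-weight), so that proportional samples almost always hit $a_0$ and light items are rare. The analysis then factors the likelihood ratio into a binomial part (how many samples avoid $a_0$, controlled by Lemma~\ref{lem:biansed_coin_discrimination}) and a set-size part (the fingerprint of the light samples, controlled by Lemma~\ref{lem:set_size_lower_bound} at a rescaled accuracy $\varepsilon' = s/\sqrt{n}$). Without something like this second construction, your proof only establishes the theorem for $\varepsilon = \Omega(1/\sqrt{n})$.
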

\begin{proof}
As already proven, we may assume that the algorithm only gets the fingerprint $F_m$ of the sample $S$ of size $m$, instead of $S$ itself. 
In the rest of the proof, we separately consider two cases: $\varepsilon \geq 1/\sqrt{n}$ and $\varepsilon < 1/\sqrt{n}$.

\paragraph{Case $\varepsilon \geq \frac{1}{\sqrt{n}}$:} We first define the hard instance $(U, w)$. Define the random variable $k$ as $k=(1-\varepsilon) n$ with probability $1/2$ and $k=n$ otherwise, then let $U=\{a_1 \dots a_n\}$ and 
\[
  w(a_i) = 
  \begin{cases}
    1 & \text{if } i \leq k \\
    0 & \text{otherwise.}
  \end{cases}
\]
Items with weight zero are never sampled while sampling proportionally while we are sampling uniformly from those with weight 1. Moreover $m=o(\sqrt{n}/\varepsilon)=o(n)$ for $\varepsilon \geq 1/\sqrt{n}$. Hence, this is exactly the settings of Lemma~\ref{lem:set_size_lower_bound}. If we let $F_m$ to be the fingerprint of the samples and define $\Rr(F_m)$ as in \Cref{lem:set_size_lower_bound}, we have 
\[
P\ld(\frac{98}{100} \leq \Rr(F_m) \leq \frac{100}{98}\rd) \geq \frac{99}{100}.
\]
Applying Lemma~\ref{lem:bayesclas} gives us the desired result. 

\paragraph{Case $\varepsilon < \frac{1}{\sqrt{n}}$:}
For convenience, we show an instance of size $n+1$ instead of $n$. First, we define $s^2= \min\big\{\frac{\sqrt{n}}{\varepsilon m}, \frac{n}{4}\big\}$
and notice that $s = \omega(1)$ and $s \leq \sqrt{n} / 2$.
Define the random variable $k$ as  $k = n - s \sqrt{n}$ with probability $1/2$ and $k = n$ otherwise. Define the events $\Ee_1 = \{k = n - s \sqrt{n}\}$ and $\Ee_2=\{k=n\}$. We construct $(U, w)$ so that $U= \{a_0 \dots a_n\}$ and 
\[
  w(a_i) = 
  \begin{cases}
    \frac{s\sqrt{n}}{\varepsilon} - n & \text{if } i = 0 \\
    1 & \text{if } 1 \leq i \leq k \\
    0 & \text{otherwise.}
  \end{cases}
\]
 Notice that our choice of $s$ together with $\varepsilon < 1/\sqrt{n}$ guarantees $n - s\sqrt{n} \geq n/2$ and $s\sqrt{n}/\varepsilon - n = \omega(n)$. We have that $W = \sum_{i=0}^n w(a_i)$ differs by more than a factor of $(1 +\varepsilon)$ between events $\Ee_1$ and $\Ee_2$. 

Consider an element $a\in U$ sampled proportionally and define $p_i = P(a \neq a_0 \,|\, \Ee_i)$ for $i=1, 2$. Then,
\begin{align}
p_2 &= \frac{n}{n-s\sqrt{n} + w(a_0)} = \frac{\varepsilon \sqrt{n}}{s} \\
p_1 &= \frac{n -s\sqrt{n}}{n-s\sqrt{n} + w(a_0)} = \frac{p_2 -\varepsilon}{1-\varepsilon} 
\end{align}
and $|p_1 -p_2| \leq \varepsilon$.
We perform $m$ samples in total and define the random variable $X$ counting the number of times items other than $a_0$ are sampled. We define $F_X$ as the fingerprint of the sampled items different from $a_0$. 
Given $F_X$, we may easily reconstruct the fingerprint of the whole sample by adding the tuple $(m - X, w(a_0))$. It thus holds $P(F_m | \Ee_i) = P(F_X | \Ee_i)$ for $i \in \{1,2\}$. 

Now we define the event $\Ll = \{ X \leq 30 E[X] \}$ and by Markov's inequality $P(\Ll) \geq 29/30$. Moreover, 
\[
E[X] = m (p_1 + p_2)/2 \leq m (p_2 + \varepsilon) \leq m (\varepsilon\sqrt{n}/s + \varepsilon) \leq 2\varepsilon m \sqrt{n}/s = o\ld(\frac{n}{s}\rd).
\]
Define $\varepsilon' = s/\sqrt{n}$. Conditioning on $\Ll$, we have $X \leq 30 E[X] =o(\sqrt{n} / \varepsilon') = o(n)$. If we further condition on $X=x$ for some $x \leq 30E[X]$ and we look at $F_X$, we are exactly in the setting of Lemma~\ref{lem:set_size_lower_bound}. Therefore,
\[
P\ld(\frac{98}{100} \leq \frac{P\ld(F_X\,|\, X=x, \Ee_1\rd)}{P\ld(F_X\,|\, X=x, \Ee_2\rd)} \leq \frac{100}{98}\rd) \geq \frac{99}{100}
\]
and integrating over $\Ll$ we obtain
\[
P\ld(\frac{98}{100} \leq \frac{P\ld(F_X\,|\, X, \Ee_1\rd)}{P\ld(F_X\,|\, X, \Ee_2\rd)} \leq \frac{100}{98} \,\bigg|\, \Ll\rd) \geq \frac{99}{100}.
\]
Now, we will bound the ratio 
\[
\Rr(X) = \frac{P\ld(X\,|\,\Ee_1\rd)}{P\ld(X\,|\,\Ee_2\rd)}.
\]
We have $X = \sum_{i=1}^m X_i$, where $X_i$ is an indicator for the $i$-th sample not being equal to $a_0$. Therefore, $X\,|\,\Ee_j \sim Bin(m, p_j)$ for $j=1,2$.
It holds, $|p_1-p_2| \leq \varepsilon$. Because $m = o(\sqrt{n}/\epsilon)$ and by the definition of $s$, we have $m= o(\sqrt{n}/(s\varepsilon)) = o(p_2/\varepsilon^2)$. We can apply \Cref{lem:biansed_coin_discrimination} and obtain 
\[
P\ld(\frac{98}{100} \leq \Rr(X) \leq \frac{100}{98} \rd) \geq \frac{99}{100}.
\]
Finally, we put the bounds together. We consider the ratio
\[
\Rr(F_m) = \frac{P\ld(F_m \,|\, \Ee_1\rd)}{P\ld(F_m \,|\, \Ee_2\rd)} = \frac{P\ld(F_X \,|\, \Ee_1\rd)}{P\ld(F_X \,|\, \Ee_2\rd)} = 
\frac{P\ld(X\,|\, \Ee_1\rd) \cdot P\ld(F_X \,|\, X, \Ee_1\rd)}{P\ld(X \,|\, \Ee_2\rd) \cdot P\ld(F_X \,|\, X, \Ee_2\rd)} .
\]
By the union bound, along with $7/8 < (98/100)^2$, we get
\begin{align}
P\ld(\Rr(F_m) \not\in \ld[\frac{7}{8},\frac{8}{7}\rd]\rd) &\leq \\
P\ld(\Bar{\Ll}\rd) + P\ld(\frac{P\ld(F_X\,|\, X, \Ee_1\rd)}{P\ld(F_X\,|\, X, \Ee_2\rd)} \not\in \ld[\frac{98}{100},\frac{100}{98}\rd]\,\bigg|\, \Ll\rd) + P\ld(\frac{P\ld(X\,|\, \Ee_1\rd)}{P\ld(X\,|\, \Ee_2\rd)} \not\in \ld[\frac{98}{100},\frac{100}{98}\rd]\rd) &\leq \\
\frac{1}{30} + \frac{1}{100} + \frac{1}{100} &\leq \frac{1}{15}.
\end{align}
We can apply \Cref{lem:bayesclas} and conclude the proof.
\end{proof}

\subsection{Sum estimation in hybrid setting, known \texorpdfstring{$n$}{n}.}
In this section, we assume, as we did in Section~\ref{sec:hybrid_sampling}, that we can sample both proportionally and uniformly. We will prove that $\Omega(\min(\sqrt[3]{n}/\varepsilon^{4/3}, n))$ samples are necessary to estimate $W$ with probability $2/3$. This complexity is the minimum of the sample complexity of the $\HybridEstimator$ and (up to a logarithmic factor) the complexity of the standard coupon collector algorithm.

\begin{theorem} \label{thm:lb_hybrid}
In the hybrid setting, there does not exist an algorithm that, for every instance $(U, w)$ with $|U|=n$, takes $m=o(\min(\sqrt[3]{n}/\varepsilon^{4/3}, n))$ proportional and uniform samples and returns a $\onepmepsilon$-approximation of $W$ with probability $2/3$. This holds also when $n$ is known to the algorithm.
\end{theorem}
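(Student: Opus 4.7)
The proof structure mirrors that of \Cref{thm:lb_prop}, now accounting for uniform samples as well. I split on $\varepsilon$: in the regime $\varepsilon \geq c_0/\sqrt{n}$ (for a small constant $c_0$, say $2^{3/2}$) I use $k = \lceil (n/\varepsilon)^{2/3}\rceil$, which satisfies $k \leq n/2$; in the regime $\varepsilon < c_0/\sqrt{n}$ I use $k = \lfloor n/2\rfloor$. In both regimes the hard instance on $U = \{a_1,\dots,a_n\}$ is a $1/2$--$1/2$ mixture of events $\Ee_2$ (items $a_1,\dots,a_k$ weigh $1$, the rest weigh $0$) and $\Ee_1$ (only $a_1,\dots,a_{\lfloor(1-3\varepsilon)k\rfloor}$ weigh $1$, the rest weigh $0$). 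Since $W_1/W_2 = 1-3\varepsilon$, any $(1\pm\varepsilon)$-approximation distinguishes the two events.

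\textbf{Likelihood decomposition.} Reducing to the fingerprint $F$ of the hybrid sample as in the proportional proof, I decompose $F = (Y, F_0, F_1)$, where $Y$ is the number of weight-$0$ uniform samples, $F_0$ is the fingerprint of those $Y$ samples within the weight-$0$ set (of size $n-k$ or $n-\lfloor(1-3\varepsilon)k\rfloor$), and $F_1$ is the joint fingerprint of the remaining $m_p + (m_u - Y)$ weight-$1$ samples. Proportional sampling never returns a weight-$0$ item, and within the weight-$1$ subset a proportional sample (all weights equal $1$) coincides in distribution with a uniform sample conditioned on weight $1$. Hence the three parts are conditionally independent given $Y$, and the Bayes chain rule yields
\[
\Rr(F) \;=\; \Rr(Y)\cdot \Rr(F_0\mid Y)\cdot \Rr(F_1\mid Y).
\]

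\textbf{Bounding each factor.} $\Rr(Y)$ is a biased-coin discrimination ratio for $m_u$ samples from $\mathrm{Bern}(k/n)$ vs.\ $\mathrm{Bern}((1-3\varepsilon)k/n)$, so \Cref{lem:biansed_coin_discrimination} applies; $\Rr(F_0\mid Y)$ and $\Rr(F_1\mid Y)$ are set-size ratios handled by \Cref{lem:set_size_lower_bound}. In the $\varepsilon \geq c_0/\sqrt{n}$ regime, substituting $k = (n/\varepsilon)^{2/3}$ shows that the thresholds for $\Rr(Y)$ and $\Rr(F_1\mid Y)$ are both $\Theta(n^{1/3}/\varepsilon^{4/3})$, while for $\Rr(F_0\mid Y)$ it is $\Omega(n^{5/6}/\varepsilon^{1/3})$ and dominates the former precisely when $\varepsilon \geq n^{-1/2}$; so any $m = o(n^{1/3}/\varepsilon^{4/3})$ keeps each factor in $[98/100, 100/98]$ with probability $\geq 99/100$. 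In the $\varepsilon < c_0/\sqrt{n}$ regime, with $k = n/2$ the set-size thresholds are $\Omega(\sqrt{n}/\varepsilon) \geq \Omega(n)$ and the biased-coin threshold at $p=1/2$ is $\Omega(1/\varepsilon^2) \gg n$, so $m = o(n)$ suffices. A union bound then gives $\Rr(F) \in [7/8, 8/7]$ with probability $\geq 14/15$, and \Cref{lem:bayesclas} completes the argument. The reduction from worst-case to expected-sample bounds carries over verbatim from the preceding theorem.

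\textbf{Main obstacle.} The delicate step is the conditional-independence decomposition of $F$ into $(Y, F_0, F_1)$, which rests on exchangeability of proportional samples with weight-$1$ uniform samples -- valid here precisely because all weight-$1$ items share the same weight. A secondary concern is choosing $c_0$ so that $p = k/n$ stays bounded away from $1$ in the high-$\varepsilon$ regime (a hypothesis of \Cref{lem:biansed_coin_discrimination}), and verifying that the weight-$0$ set-size threshold never becomes the binding constraint.
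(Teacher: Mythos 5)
Your proposal is correct and follows essentially the same route as the paper: the same hard instance (a weight-$1$ threshold set of size $\Theta\bigl((n/\varepsilon)^{2/3}\bigr)$ with the rest weight $0$), the same three-factor decomposition of the fingerprint ratio into a Bernoulli-count piece handled by \Cref{lem:biansed_coin_discrimination} and two set-size pieces handled by \Cref{lem:set_size_lower_bound}, and the same conclusion via union bound and \Cref{lem:bayesclas}; your $Y$ is just the complement of the paper's $X = |S_u\cap\{1,\dots,k\}|$, and the exchangeability observation you flag is exactly what the paper uses to justify reconstructing $(S_u',S_p')$ from $(F_l,F_h)$. The only organizational difference is that for $\varepsilon < c_0/\sqrt{n}$ the paper reduces to the large-$\varepsilon$ case with a one-line observation ($(1\pm\varepsilon)$-approx implies $(1\pm 8/\sqrt n)$-approx) rather than re-running the argument with $k=\lfloor n/2\rfloor$; both work.
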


\begin{proof}
It is enough to prove that for $\varepsilon \geq 8/\sqrt{n}$, any algorithm returning a $\onepmepsilon$-approximation of $W$ with probability $2/3$ must take $\Omega(\sqrt[3]{n}/\varepsilon^{4/3})$ samples. Indeed, if $\varepsilon < 8/\sqrt{n}$, a $(1\pm \varepsilon)$-approximation is also a $(1\pm 8/\sqrt{n})$-approximation, and then $\Omega(n)$ samples are necessary. In either case we then need $\Omega(\min(\sqrt[3]{n}/\varepsilon^{4/3}, n))$ samples.

Define the random variable $k$ as  $k = (1-\varepsilon) n^{2/3}/\varepsilon^{2/3}$ with probability $1/2$ and $k = n^{2/3}/\varepsilon^{2/3}$ otherwise. Define the events $\Ee_1 = \{k = (1-\varepsilon) n^{2/3}/\varepsilon^{2/3}\}$ and $\Ee_2=\{k=n^{2/3}/\varepsilon^{2/3}\}$. We construct $(U, w)$ so that $U= \{a_1 \dots a_n\}$ and 
\[
  w(a_i) = 
  \begin{cases}
    1 & \text{if } 1 \leq i \leq k \\
    0 & \text{otherwise.}
  \end{cases}
\]
We have that $W = \sum_{i=1}^n w(a_i)$ differs by more than a factor of $(1 +\varepsilon)$ between events $\Ee_1$ and $\Ee_2$. Notice that $k \leq n/4$ as $\varepsilon \geq 8/\sqrt{n}$.
Let $S_p$ be the multiset of proportional samples and $S_u$ the multiset of uniform samples. Let $T_h = (S_p \cup S_u) \cap \{k+1, \cdots, n\}$ and $T_l = (S_p \cup S_u) \cap \{1, \cdots, k\}$. Let $F_l=\{(c_i, w(a_i))\}_{i}$ be the fingerprint of $T_l$ and $F_h=\{(c_i, w(a_i))\}_{i}$ of be the fingerprint of $T_h$. 
We now argue hat $(F_l,F_h)$ is sufficient for any algorithm, oblivious of the choice of $k$, to reconstruct sample multisets $S_u'$ and $S_p'$ distributed as $S_u$ and $S_p$. 
We pick $|F_l| - m$ items at random from $F_l$ and let $S_u'$ be the multiset of these items, together with all items in $F_h$. We let $S_p'$ be the multiset of the items left in $F_l$. It is easy to verify that $(S_u',S_p') \sim (S_u, S_p)$. 
Thus, we can assume that the algorithm is given $(F_l, F_h)$ as input, instead of the sample multisets $S_u$ and $S_p$. 

Consider $a \in U$ sampled uniformly, and define $p_i = P(w(a) = 1  \,|\, \Ee_i)$ for $i=1, 2$. Then,
\begin{align}
p_1 &= \frac{n^{2/3}/\varepsilon^{2/3}}{n} = \frac{1}{n^{1/3}\varepsilon^{2/3}} \\
p_2 &= \frac{(1-\varepsilon)n^{2/3}/\varepsilon^{2/3}}{n} = \frac{1}{n^{1/3} \varepsilon^{2/3}} - \frac{\varepsilon^{1/3}}{n^{1/3}}
\end{align}
and defining $\varepsilon' =\varepsilon^{1/3}/n^{1/3} $ we obtain $p_2 = p_1 - \varepsilon'$. Let $X = |S_u \cap \{1, \cdots, k\}|$. Since $X\,|\,\Ee_j \sim Bin(m, p_j)$ for $j=1, 2$ and $m=o(n^{1/3}/\varepsilon^{4/3}) = o(p_1 / \varepsilon'^2)$, we can apply \Cref{lem:biansed_coin_discrimination} and obtain
\[
P\ld(\frac{98}{100} \leq \frac{P\ld(X\,|\,\Ee_1\rd)}{P\ld(X\,|\,\Ee_2\rd)} \leq \frac{100}{98} \rd) \geq \frac{99}{100}.
\]
Conditioning on $X=x$ for some $x=1 \dots m$, $F_l$ represents a sample of $x + |S_p|$ items drawn uniformly from a set of cardinality $k$, so we are in the setting of \cref{lem:set_size_lower_bound}. Moreover, we have 
\[
|F_l| \leq |S_p| + |S_u| = o\ld(\frac{\sqrt[3]{n}}{\varepsilon^{4/3}}\rd) = o\ld(\frac{\sqrt{n^{2/3}/\varepsilon^{2/3}}}{\varepsilon}\rd).
\]
Hence, \Cref{lem:set_size_lower_bound} holds and we have
\[
P\ld(\frac{98}{100} \leq \frac{P\ld(F_l\,|\, X=x, \Ee_1\rd)}{P\ld(F_l\,|\, X=x, \Ee_2\rd)} \leq \frac{100}{98}\rd) \geq \frac{99}{100}
\]
and integrating over $x = 1 \dots m$ we have
\[
P\ld(\frac{98}{100} \leq \frac{P\ld(F_l\,|\, X, \Ee_1\rd)}{P\ld(F_l\,|\, X, \Ee_2\rd)} \leq \frac{100}{98} \rd) \geq \frac{99}{100}.
\]

Similarly, we have that $|F_h| \leq |S_u| = o(\sqrt[3]{n}/\varepsilon^{4/3}) = o(\sqrt{n}/\varepsilon)$ where the second inequality holds because we are assuming $\varepsilon > 8/\sqrt{n}$. Moreover, conditioning on $X=x$ for some $x=1 \dots m$, $F_h$ represent a sample of $|S_u| - x$ items drawn uniformly from a set of size $n -k$. It holds  $n-k\geq 3n/4$, and $n-k$ thus differs by at most a factor $1-\varepsilon$ between the two events $\Ee_1,\Ee_2$.  Again, we are in the right setting to apply \Cref{lem:set_size_lower_bound}, and integrating over $x=1 \dots m$ gives
\[
P\ld(\frac{98}{100} \leq \frac{P\ld(F_h\,|\, X, \Ee_1\rd)}{P\ld(F_h\,|\, X, \Ee_2\rd)} \leq \frac{100}{98}\rd) \geq \frac{99}{100}.
\]
We are now ready to put everything together. Note that $F_l$ and $F_h$ are independent once conditioned on $(\Ee_1,X)$ or $(\Ee_2,X)$. Define the ratio
\[
\Rr(F_l, F_h) = \frac{P\ld((F_l, F_h) \,|\, \Ee_1\rd)}{P\ld((F_l, F_h) \,|\, \Ee_2\rd)} = 
\frac{P\ld(X\,|\, \Ee_1\rd) \cdot P\ld(F_l \,|\, X, \Ee_1\rd) \cdot P\ld(F_h \,|\, X, \Ee_1\rd)}{P\ld(X\,|\, \Ee_2\rd) \cdot P\ld(F_l \,|\, X, \Ee_2\rd) \cdot P\ld(F_h \,|\, X, \Ee_2\rd)}
\]
Using the union bound, along with $7/8 < (98/100)^3$, we get
\begin{align}
P\ld(\Rr(F_l, F_h) \not\in \ld[\frac{7}{8},\frac{8}{7}\rd]\rd) &\leq \\
 P\ld(\frac{P\ld(X\,|\,\Ee_1\rd)}{P\ld(X\,|\, \Ee_2\rd)}\not\in \ld[\frac{98}{100},\frac{100}{98}\rd]\rd) + P\ld(\frac{P\ld(F_l\,|\, X, \Ee_1\rd)}{P\ld(F_l\,|\, X, \Ee_2\rd)} \not\in \ld[\frac{98}{100},\frac{100}{98}\rd]\rd) +&\\ P\ld(\frac{P\ld(F_h\,|\, X, \Ee_1\rd)}{P\ld(F_h\,|\, X, \Ee_2\rd)} \not\in \ld[\frac{98}{100},\frac{100}{98}\rd]\rd) &\leq \\
\frac{1}{100} + \frac{1}{100} + \frac{1}{100} &< \frac{1}{15}.
\end{align}
We can apply \Cref{lem:bayesclas} and conclude the proof.

\end{proof}

\subsection{Sum estimation in hybrid setting, unknown \texorpdfstring{$n$}{n}.}
We now prove a lower bound for the hybrid setting, in case the algorithm does not know $n$.

\begin{theorem}
In the hybrid setting, there does not exist an algorithm that, for every instance $(U, w)$, takes $m=o(\min(\sqrt{n}/\varepsilon, n))$ samples and returns a $\onepmepsilon$-approximation of $W$ with probability $2/3$. This holds also when the algorithm is provided with an advice $\tilde{n}$ such that  $(1-\varepsilon) n \leq \tilde{n} \leq n$.
\end{theorem}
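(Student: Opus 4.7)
The plan is to mimic the proof of \Cref{thm:lb_hybrid}, except that we now exploit the algorithm's lack of knowledge of $n$ to take the two mixture components to be sets of \emph{different sizes}, all with weight one, rather than padding with zero-weight items. This change removes the need for any Bernoulli-estimation ingredient; a single application of \Cref{lem:set_size_lower_bound} will suffice.

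First I would reduce to the regime $\varepsilon \geq 1/\sqrt{n}$: for smaller $\varepsilon$ a $(1\pm\varepsilon)$-approximation is in particular a $(1\pm 1/\sqrt{n})$-approximation, so the bound at $\varepsilon = 1/\sqrt{n}$ gives $\Omega(n)$ samples, which matches $\min(\sqrt{n}/\varepsilon, n) = n$ in that regime.

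Under the assumption $\varepsilon \geq 1/\sqrt{n}$, I would define the hard instance as the following mixture: with probability $1/2$ let $|U|=n$ (event $\Ee_1$), and otherwise let $|U| = (1-\varepsilon)n$ (event $\Ee_2$); in both cases set $w(a) = 1$ for all $a \in U$. Then $W = |U|$, so any $(1\pm \varepsilon/3)$-approximation of $W$ distinguishes the two events. Since all weights are equal, proportional and uniform sampling have identical distributions, so $m$ hybrid queries are equivalent to $m$ i.i.d.\ uniform draws from $U$, and the fingerprint $F_m$ of these draws is a sufficient statistic for any oblivious algorithm (as argued before \Cref{lem:set_size_lower_bound}). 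Thus we land exactly in the setting of \Cref{lem:set_size_lower_bound}: with $m = o(\sqrt{n}/\varepsilon)$ (which, under $\varepsilon \geq 1/\sqrt{n}$, also gives $m = o(n)$), the lemma yields
\[
P\ld(\tfrac{98}{100} \leq \Rr(F_m) \leq \tfrac{100}{98}\rd) \geq \tfrac{99}{100}.
\]
\Cref{lem:bayesclas}, combined with the standard worst-case-to-expected-samples reduction sketched at the beginning of the section, then contradicts the assumed $2/3$ success probability.

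To accommodate the advice $\tilde{n}$ satisfying $(1-\varepsilon) n_{\text{true}} \leq \tilde{n} \leq n_{\text{true}}$, I would supply the deterministic value $\tilde{n}_0 = (1-\varepsilon)n$ in both branches of the mixture: it lies in $[(1-\varepsilon)n, n]$ when $|U|=n$ and in $[(1-\varepsilon)^2 n, (1-\varepsilon)n]$ when $|U|=(1-\varepsilon)n$, so it is a valid advice in both cases and, being identical across the mixture, conveys no information about which branch occurred. The main conceptual step is noticing that not knowing $n$ permits the direct ``two universe sizes'' construction; after that, the argument is essentially a one-line application of \Cref{lem:set_size_lower_bound} and \Cref{lem:bayesclas}, and the only bookkeeping is verifying that the admissible advice windows for the two instances share a common point.
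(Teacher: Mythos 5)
Your proposal is correct and follows essentially the same route as the paper: reduce to $\varepsilon \geq 1/\sqrt n$, use the two-universe-sizes instance with unit weights, observe that uniform and proportional sampling coincide, and close with \Cref{lem:set_size_lower_bound} and \Cref{lem:bayesclas}. The one thing you do that the paper leaves implicit is to explicitly verify that a deterministic advice $\tilde{n}_0=(1-\varepsilon)n$ sits inside the admissible window for both branches of the mixture and hence carries no information; that is a worthwhile clarification, not a different argument.
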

\begin{proof}
Employing the same argument as in \Cref{thm:lb_hybrid}, it is sufficient to prove that for $\varepsilon \geq 1/\sqrt{n}$ a lower bound of $\Omega(\sqrt{n}/\varepsilon)$ holds.

Consider the instance $(U, w)$ where $w(a)=1$ for each $a\in U$ and we set $|U| = n$ with probability $1/2$ and $|U|=(1-\varepsilon)n$ otherwise. Providing a $\onepmepsilon$-approximation of $W$ is equivalent to distinguishing between the two cases. On this instance, sampling uniformly and proportionally is the same. Therefore, we are in the setting of \Cref{lem:set_size_lower_bound}. Combining \Cref{lem:set_size_lower_bound} and \Cref{lem:bayesclas} like in the proofs above, we get that no classifier can distinguish between $|U|=n$ and $|U|=(1-\varepsilon)n$ with probability $2/3$ using $o(\sqrt{n}/\varepsilon)$ samples.
\end{proof}

\section{Counting Edges in a Graph.} \label{sec:approximatelycountingedges}

In this section, we show an algorithm that estimates the average degree of a graph $G=(V, E)$ in the model in which we are allowed to perform random vertex queries, random edge queries, and degree queries. Recall that a \emph{random vertex query} returns a uniform sample form $V$, a \emph{random edge queries} returns a uniform sample from $E$, and a \emph{degree queries} returns $deg(v)$ when we provide $v \in V$ as argument.
In this section, we denote the number of vertices and edges with $n$ and $m$ respectively.

Here, we show that $\harmonicestimator$ from \Cref{sec:hybrid_sampling}, can be adapted to estimate the average degree
$d$. 
It achieves a complexity of $O(\frac{m \log \log n}{n' \varepsilon^2} + \frac{n}{ n'\varepsilon^2})$ in expectation, where $n'$ is the number of non-isolated\footnote{Recall that a vertex is isolated if it has degree $0$.} vertices.
This is very efficient when there are few isolated vertices and the graph is sparse. 
Moreover, the only way we use sampling of vertices is to estimate the number of non-isolated vertices. Therefore, if we assume that there are no isolated vertices in the graph, it is sufficient to only be able to uniformly sample edges.

Our approach is similar to that of \cite{Katzir2011} but the authors in the paper do not prove bounds on the time complexity. Moreover, their algorithm only works when there are no isolated vertices.

\begin{theorem}
Given a graph $G=(V, E)$, consider a model that allows (1) random vertex queries, (2) random edge queries, and (3) degree queries. 
In this model, there exists an algorithm that, with probability at least $2/3$, returns a $ \onepmepsilon$-approximation $\hat{d}$ of the average degree $d=2m/n$. This algorithm performs $O(\frac{m \log \log d}{n' \varepsilon^2} + \frac{n}{ n'\varepsilon^2})$ queries in expectation, where $n'$ is the number of non-isolated vertices.
\end{theorem}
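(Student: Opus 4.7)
The plan is to reduce to the sum estimation setup of Section~\ref{sec:hybrid_sampling} by setting $U = V$ and $w(v) = \deg(v)$, so that $W = 2m$ and the target $d = 2m/n$ equals $W/n$. Uniform sampling on $U$ is implemented by a random vertex query, proportional sampling by taking a random edge and picking one endpoint uniformly at random, and a degree query supplies the weight, so each abstract sample costs $O(1)$ graph queries. The core subroutine is $\harmonicestimator(\varepsilon, \tilde\theta, 1)$: the threshold $\phi = 1$ ignores exactly the isolated vertices, giving $p = \punif(w(v) \geq 1) = n'/n$. By Lemma~\ref{lem:estimate_by_harm_mean}, whenever $\tilde\theta \geq d$ the returned $\hat\theta$ is a $\onepmepsilon$-approximation of $d$ with probability $\geq 2/3$ at expected cost $O(\tilde\theta/(p\varepsilon^2) + 1/(p\varepsilon^2))$; with $\tilde\theta = \Theta(d)$ this equals $O(m/(n'\varepsilon^2) + n/(n'\varepsilon^2))$.

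The main obstacle is to obtain a constant-factor upper bound $\tilde\theta = \Theta(d)$ without knowing $d$. I would use geometric doubling: for $i = 1, 2, \ldots$, set $\tilde\theta_i = 2^i$, invoke $\harmonicestimator(1/2, \tilde\theta_i, 1)$ amplified (by median of independent runs) to failure probability $1/(100\, i^2)$, and stop the first time the returned $\hat\theta_i$ satisfies $\hat\theta_i \leq \tilde\theta_i/4$. The universal lower bound $P(\hat\theta < d/20) \leq 1/20$ from Lemma~\ref{lem:estimate_by_harm_mean} ensures that a false stop with $\tilde\theta_i < d/5$ happens with probability $\leq 1/20$ per round, while once $\tilde\theta_i \geq 8d$ the estimator is $(1\pm 1/2)$-accurate and $\hat\theta_i \leq 2d \leq \tilde\theta_i/4$, forcing termination by round $\lceil \log_2(8d)\rceil$. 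The accepted $\hat\theta_i$ lies in $[d/20, 2d]$, so the final advice $\tilde\theta_{\mathrm{final}} := 20\hat\theta_i \in [d, 40d]$ is valid, and a single call $\harmonicestimator(\varepsilon, \tilde\theta_{\mathrm{final}}, 1)$ produces the output.

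For the complexity, the doubling rounds cost $\sum_{i=1}^{\lceil \log_2(8d)\rceil} O(2^i \log i /(p\varepsilon^2))$ samples, where the $\log i$ factor is the amplification overhead for failure probability $1/i^2$; this geometric sum is dominated by its last term and equals $O(d \log\log d / (p\varepsilon^2))$. The final refinement call adds another $O(d/(p\varepsilon^2))$, and the per-invocation $O(1/(p\varepsilon^2))$ overhead for estimating $p$ is absorbed by the same geometric accounting. The convergent series $\sum_i 1/(100 i^2) < 1/60$ handles the union bound over rounds, so the overall success probability stays $\geq 2/3$. Substituting $p = n'/n$ and $d = 2m/n$ turns the bound into $O(m \log\log d /(n'\varepsilon^2) + n/(n'\varepsilon^2))$. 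The trickiest piece will be calibrating the stopping constant and the per-round amplification schedule so that the $\log\log d$ factor emerges cleanly (rather than degenerating into $\log d$) while the combination of the doubling test and the final refinement still guarantees $\tilde\theta_{\mathrm{final}} = \Theta(d)$ with constant probability.
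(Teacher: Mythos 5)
Your proof follows essentially the same roadmap as the paper's: reduce to sum estimation with $U=V$, $w(v)=\deg(v)$; implement hybrid sampling via vertex/edge/degree queries; run $\harmonicestimator(\cdot,\tilde\theta,1)$ so that $\phi=1$ filters exactly the isolated vertices and $p=n'/n$; find $\tilde\theta$ by geometric doubling with a stopping test of the form $\hat\theta_i\leq\tilde\theta_i/\text{const}$; invest a $\log\log\tilde\theta_i$ amplification per round so the per-round failure probabilities form a convergent series; and use the universal lower bound $P(\hat\theta<d/20)\leq 1/20$ of Lemma~\ref{lem:estimate_by_harm_mean} to rule out early stops. The one genuine difference is that you run the doubling phase at coarse accuracy $\varepsilon'=1/2$ to locate $\tilde\theta_{\mathrm{final}}=\Theta(d)$ and then make a single $\varepsilon$-accurate call, whereas the paper runs $\harmonicestimator(\varepsilon,\cdot,\cdot)$ at full accuracy in every round and simply outputs the estimate that triggers the stopping rule. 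Your two-phase variant actually yields the slightly sharper bound $O\bigl(\tfrac{m\log\log d}{n'}+\tfrac{m}{n'\varepsilon^2}+\tfrac{n}{n'\varepsilon^2}\bigr)$, which of course implies the stated one. One place to tighten: your claim that ``the geometric sum is dominated by its last term'' only covers the rounds up to the guaranteed termination round $\lceil\log_2(8d)\rceil$; to bound the \emph{expected} complexity you also need to account for rounds beyond that, which run with probability at most $1/(100i^2)$ while costing a factor $2+o(1)$ more each — a quick geometric argument, as the paper does explicitly, closes this.
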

\begin{proof}
We first show an algorithm that is given $\tilde{\theta}$ such that $d \leq \tilde{\theta}$, has time complexity $O(\frac{\tilde{\theta}n}{\varepsilon^2 n'} +\frac{n}{\varepsilon^2 n'})$, and is correct with probability $2/3$. We define a sum estimation problem $(U,w)$. We set the universe to be $U = V$ and for each vertex $v \in U$, we set its weight $w(v) = deg(v)$. Then $W = \sum_{a\in U} w(a)=2m$ and $W/n = d$.
Sampling an edge uniformly at random and picking one of its endpoints at random corresponds to sampling a vertex proportionally to its weight. Moreover, we can sample vertices uniformly. Therefore, we are able to implement both queries of the hybrid setting.
We run $\harmonicestimator(\varepsilon, \tilde{\theta},1)$. By \Cref{lem:estimate_by_harm_mean}, it returns with probability at least $2/3$ a $1\pm\varepsilon$-approximation of $d$, and its sample complexity is $O(\frac{\tilde{\theta}n}{\varepsilon^2 n'} +\frac{n}{\varepsilon^2 n'})$ since what is called $p$ in \Cref{lem:estimate_by_harm_mean} is now $n'/n$.

It remains to get rid of the need for advice $\tilde{\theta}$. We use the standard technique of performing a geometric search. See, for example, \cite{Goldreich2006} for more details. 
We initialize $\tilde{\theta} = 1$ and in each subsequent iteration, we double $\tilde{\theta}$.
Let $K$ be a large enough constant. In each iteration, we run $K \log \log \tilde{\theta}$ independent copies of  $\harmonicestimator(\varepsilon, \tilde{\theta}, 1)$ and denote with $d_1 \dots d_{K\log \log \tilde{\theta}}$ the returned estimates. We define $\hat{d}$ as the median of $d_1 \dots d_{K\log \log \tilde{\theta}}$. We say that the $d_i$ succeeds if both the following hold: (i) $d_i \geq d/20$, (ii) $\tilde{\theta} < d$ or $d_i$ is a $\onepmepsilon$-approximation of $d$. Otherwise we say that $d_i$ fails. We extend this definition to $\hat{d}$.
Thanks to \Cref{lem:estimate_by_harm_mean}, $d_i$ fails with probability $\leq 1/3 + 1/20$.
By a standard argument based on the Chernoff bound, for $K$ large enough, we have that $\hat{d}$ fails with probability at most $2/( \pi \log^2 (2\tilde{\theta}))$. Denote with $\Ee_j$ the event that $\hat{d}$ succeeds at iteration $j$ (i.e., when $\tilde{\theta} = 2^{j-1}$). Define $\Ee=\bigcap_{j\geq 0} \Ee_j$. 
Union bound gives $P(\Ee) \geq 1 - \frac{2}{\pi} \sum_{j > 0} \frac{1}{j^2} =  2/3$.
We stop our algorithm when $\hat{d} \leq \tilde{\theta}/20$ and return $\hat{d}$. Conditioning on $\{\hat{d} \leq \tilde{\theta}/20\} \cap \Ee$, we have $\tilde{\theta}/20 \geq \hat{d} \geq d/20$. This implies that $\tilde{\theta} \geq d$ and hence $\hat{d}$ is a $\onepmepsilon$-approximation of $d$. Since $P(\Ee) \geq 2/3$, we have that with probability $2/3$ the returned value is a $\onepmepsilon$-approximation of $d$. 

One iteration of our algorithm has time complexity $O(\frac{\tilde{\theta} n \log \log \tilde{\theta}}{\varepsilon^2 n'})$.
%
We argue that the expected complexity is dominated by the first iteration in which $\tilde{\theta} \geq 40\, d$. 
The time complexity of each additional iteration (conditioned on being executed) increases by a factor $2 + o(1)$. 
Each additional iteration is executed only if the previous one resulted in an estimate $\hat{d} > \tilde{\theta}/20 \geq 2\,d$. This happens only when $\hat{d}$ is not a $\onepmepsilon$-approximation of $d$, and assuming a correct advice $\tilde{\theta} \geq d$ this happens outside of $\Ee$. Therefore, we execute each additional iteration with probability $\leq 1/3$. Since the time spend in each iteration increases by a factor $2+o(1)$ while the probability of executing the iteration decreases by a factor of $3$, the expected complexity contributed by each additional iteration for $\tilde{\theta} \geq 40\, d$ decreases by a factor of $2/3+o(1)$. Therefore, the expected complexity is dominated (up to a constant factor) by the first execution with $\tilde{\theta} \geq 40\, d$. If $d \geq 1$, then in this iteration, we have $\tilde{\theta} = \Theta(d)$. The time complexity is then $O(\frac{m \log \log d}{\epsilon^2 n'})$. If $d < 1$, then it holds $\theta = O(1)$ in this execution. The complexity is then $O(\frac{n}{n' \epsilon^2})$. This gives total time complexity of $O(\frac{m \log \log d}{n' \epsilon^2} + \frac{n}{n' \epsilon^2})$.
\end{proof}

\section{Open Problems.} \label{sec:open_problems}
We believe there are many interesting open problems related to our work. We now give a non-comprehensive list of questions that we think would give more understanding of weighted sampling and its applications.

\paragraph{More efficient algorithm for spacial classes of inputs.} Are there some large classes of inputs for which it is possible to get a more efficient algorithm? Can the problem be parameterized by some additional parameters apart from $n,\varepsilon$ (e.g. empirical variance) that tend to be small in practice?

\paragraph{Different sampling probabilities.}
Are there settings where one may efficiently sample with probability depending on the value of an item but not exactly proportional? Could this be used to give a general algorithm for estimating the sum $W$? An example of such a result is \cite{Dasgupta2014} where the authors show an efficient algorithm for estimating the average degree of a graph when sampling vertices with probabilities proportional to $\frac{m}{n} + d(v)$.

\paragraph{Get a complete understanding of edge counting.} The complexity of the problem of approximately counting edges in a graph is understood in terms of $n,m$ in the setting where we can only sample vertices uniformly at random. What is the complexity of counting edges when we allow only random edge queries? What if both random edge and random vertex queries are allowed? As we show, it may be useful to parameterize the problem by the fraction of vertices that are not isolated. What is the complexity of the problem under such parameterization?


\section*{Acknoweldgements.}
We are grateful to Rasmus Pagh for his advice. We would like to thank our supervisor, Mikkel Thorup, for helpful discussions and his support. We would like to thank Talya Eden for her advice regarding related work.

\bibliographystyle{siam}
\bibliography{literature}
\end{document}